\newcommand\notdashweak{\makebox[3pt][l]{$\not{\kern-2pt{\vdash}}$}\raisebox{5pt}{$\scriptstyle\weak$}}
\newcommand\spine{\mathsf w}
\newcommand\case[1]{\textbf{#1}}
\newcommand\defn\emph
\newcommand\vc[1]{\vcenter{\hbox{\ensuremath{#1}}}}
\newcommand\msep{\kern1pt{\cdot}\kern1pt}
\newcommand\rp{\mathrel{\smallbin\blacktriangleright}}
\newcommand\ar{\kern1pt{\smallbin\to}\kern1pt}
\newcommand\tim{\kern1pt{\smallbin\times}\kern1pt}
\newcommand\then{\kern1pt{\smallbin\gg}\kern1pt}
\newcommand\bind{\mathbin{%{\smallbin{>>=}}
\begin{tikzpicture}[x=1pt,y=-1pt]
\draw[cap=round,join=round,line width=0.7pt] (0,2)--(2,0)--(0,-2) (2,2)--(4,0)--(2,-2) (4.5,1)--(7,1) (4.5,-1)--(7,-1);
\end{tikzpicture}}}
\title{Quantitative Types for the Functional Machine Calculus (Extended Version)}
\titlerunning{Quantitative Types for the FMC}
\author
	{Willem Heijltjes}
	{Department of Computer Science, University of Bath, UK \and \url{http://willem.heijltj.es}}
	{w.b.heijltjes@bath.ac.uk}
	{https://orcid.org/0009-0001-8941-1150}
	{}
\authorrunning{W.\ Heijltjes}
\keywords{lambda-calculus, computational effects, intersection types}
\begin{document}

\maketitle

\begin{abstract}
The Functional Machine Calculus (FMC, Heijltjes 2022) extends the lambda-calculus with the computational effects of global mutable store, input/output, and probabilistic choice while maintaining confluent reduction and simply-typed strong normalization. Based in a simple call--by--name stack machine in the style of Krivine, the FMC models effects through additional argument stacks, and introduces sequential composition through a continuation stack to encode call--by--value behaviour, where simple types guarantee termination of the machine.

The present paper provides a discipline of quantitative types, also known as non-idempotent intersection types, for the FMC, in two variants. In the weak variant, typeability coincides with termination of the stack machine and with spine normalization, while exactly measuring the transitions in machine evaluation. The strong variant characterizes strong normalization through a notion of perpetual evaluation, while giving an upper bound to the length of reductions. Through the encoding of effects, quantitative typeability coincides with termination for higher-order mutable store, input/output, and probabilistic choice.
\end{abstract}

%------------------------------------------------------------ INTRODUCTION
\section{Introduction}

The Functional Machine Calculus (FMC)~\cite{Barrett-Heijltjes-McCusker-2023,Heijltjes-2022} is a new approach to extending the $\lambda$-calculus with computational effects, preserving confluent reduction and simply-typed strong normalization for the effects of global mutable store, input/output, and probabilistic choice. This paper presents a generalization of the type system to \emph{non-idempotent intersection types}, a notion of \emph{quantitative} types that not only characterizes weak or strong normalisation, but in addition provides quantitative information about the length of reductions.

It is well known that introducing effects into the $\lambda$-calculus renders it non-confluent. The FMC gets around this by starting not from $\beta$-reduction, which indeed is incompatible with global effect operations, but from the (simplified) Krivine machine~\cite{Krivine-2007}. This gives a call--by--name operational semantics to the $\lambda$-calculus in the form of an abstract machine with a single stack, where application pushes its argument and abstraction pops. The FMC then introduces additional stacks to model effects, for example mutable variables as stacks of depth one and a random generator as a probabilistically generated stream (an infinite stack). It further adds sequential composition, naturally interpreted on the stack machine, which allows the encoding of call--by--value evaluation as well as Moggi's computational metalanguage~\cite{Moggi-1991} and Levy's call--by--push--value~\cite{Levy-2003}. Confluent reduction and simple types are then derived from the behaviour of the machine, where types have a natural interpretation as a guarantee of successful termination of the machine~\cite{Heijltjes-2022}.

The FMC continues to be developed, for example with an extension to \emph{exception handling}~\cite{Heijltjes-2025} and as a symmetric variant for relational programming~\cite{Barrett-Castle-Heijltjes-2024}, and there are many remaining questions regarding its type system and possible extensions to it. The simply-typed FMC is constrained in its expressiveness, for instance requiring the consumption of a fixed number of probabilistic values from the stream representing a random generator. Like monadic encodings of effects such as store, semantically it remains a Cartesian closed category~\cite{Barrett-Heijltjes-McCusker-2023}. Richer type systems are required (and expected) to more accurately capture the nature of effects, as demonstrated by a recent type system for probabilistic termination~\cite{Heijltjes-Majury-2025}. The aim of the present paper is to take a next step in this direction, moving from simple types to intersection types.

Intersection type systems~\cite{Coppo-Dezani-1980,Pottinger-1980}, which characterize exactly the normalizing terms of a calculus, have become an important semantic tool (see~\cite{Bono-Dezani-2020} for a recent overview). They are a form of finite polymorphism, where a term may be assigned multiple types. Conversely, a collection of multiple types is inhabited by the intersection of the inhabitants of its constituents---hence the name \emph{intersection types}. The original \emph{idempotent} intersection types, where type collections are sets, were later joined by \emph{non-idempotent} variants~\cite{Bucciarelli-Kesner-Ventura-2017,Kfoury-Wells-1999,Neergaard-Mairson-2004}, where collections are multisets. In addition to characterizing normalization, these resource-sensitive versions provide \emph{quantitative} information about terms such as the maximal length of reductions. They are closely related to a range of other models and techniques, including resource calculi and differential $\lambda$-calculus~\cite{Boudol-1993,Ehrhard-Regnier-2003,Kfoury-2000}, relational models~\cite{Laird-Manzonetto-McCusker-Pagani-2013,Paolini-Piccolo-Ronchi-2017}, and game semantics~\cite{Ong-2017}. The relation between both variants manifests at the semantic level as a connection between Scott models and relational models~\cite{Ehrhard-2012-CSL,Ehrhard-2012-TCS}. Intersection types have been studied from various other angles including category theory~\cite{Mazza-Pellissier-Vial-2018} and logic and proof theory~\cite{Ehrhard-2020,Guerrieri-Heijltjes-Paulus-2021,Pimentel-Ronchi-Roversi-2012}.

Importantly for the present purpose, intersection types have been used to study computational effects~\cite{Blaauwbroek-2017,Davies-Pfenning-2000,Dezani-Giannini-Ronchi-2009}. In particular, intersection type systems exist for probabilistic choice~\cite{Antonelli-DalLago-Pistone-2022,Breuvart-DalLago-2018}, mutable store~\cite{Alves-Kesner-Ramos-2023,deLiguoro-Treglia-2021,deLiguoro-Treglia-2023}, and monadic effects in general~\cite{Gavazzo-Treglia-Vanoni-2024}.

Where these efforts start from untyped effects and introduce tailored intersection types to capture their semantics, the development in this paper is in the opposite direction: it starts from a simply-typed calculus for effects, the FMC, and derives intersection types essentially by the standard generalization, which is to replace the input types of a function with a collection type. The litmus test is whether both approaches arrive at the same solution. Validating both views, this is confirmed in Section~\ref{sec:store}, where it is demonstrated that modulo the distinction between idempotent and non-idempotent types, the type system presented here agrees with that of De'Liguoro and Treglia for higher-order store~\cite{deLiguoro-Treglia-2021}. Appendix~\ref{sec:QGM} extends the comparison to the non-idempotent type system for store of Alves, Kesner, and Ramos~\cite{Alves-Kesner-Ramos-2023}.

The contributions of the paper are as follows. The overall result is to give a system of quantitative types for the FMC that is implicitly reflected on encoded effects and strategies: higher-order store, input/output, and probabilistic choice, the call--by--name and call--by--value $\lambda$-calculus~\cite{Plotkin-1975}, the computational metalanguage~\cite{Moggi-1991}, and call--by--push--value~\cite{Levy-2003} (see Section~\ref{sec:store} for some of their encodings, and~\cite{Heijltjes-2022} for the remainder). The type system has two variants, \emph{weak} and \emph{strong}. The main theorem for the weak system, Theorem~\ref{thm:weak}, is that weak quantitative types characterize both termination of the machine, where types give the exact number of steps in machine evaluation, and spine normalization. That machine termination and spine normalization coincide then holds untyped, which is a new result for the FMC relating the machine to reduction. The main result for strong quantitative types, Theorem~\ref{thm:strong}, is that typing coincides with strong normalization and gives an upper bound to reduction length. The development is via a new notion of \emph{perpetual evaluation} for the FMC, an inductive definition of iterated weak head reduction to normal form analogous to \emph{perpetual strategies} for the $\lambda$-calculus~\cite{Barendregt-1984,Bergstra-Klop-1982,Raamsdonk-Severi-Sorensen-Xi-1999}. 

An interesting observation is that weak typing corresponds to machine termination, an important and natural notion for the FMC, but not to weak normalization. That is because, unlike for the call--by--name $\lambda$-calculus, the machine may terminate with a non-empty stack of return values, about which the type system gives no guarantees (i.e.\ they may be non-terminating).

The paper uses a variant definition of the FMC that differs superficially from the original~\cite{Heijltjes-2022}. The changes and their benefits are explained after the calculus is introduced in the next section. Confluence holds~\cite{Heijltjes-2025}, but since it is unpublished at time of writing, the present exposition is formulated so as not to rely on it.

The development aims to put the emphasis on careful definitions, so that proofs become straightforward case analyses and structural inductions. This manifests particularly in the perpetual evaluation relation used to relate strong types and strong normalization.  In this extended version of the conference paper, proofs that are indeed of this kind have been relegated to the Appendix.
%Proofs that are indeed of this kind are omitted, and relegated to the Appendix to the extended version of this paper~\cite{}.

%------------------------------------------------------------ THE FUNCTIONAL MACHINE CALCULUS
\section{The Functional Machine Calculus}
\label{sec:FMC}

This section will define the FMC: its syntax, the abstract machine, a quantitative big-step semantics, its reduction relations, and the type system. 

The primary intuition for the FMC is as an instruction language for a machine with multiple stacks, indexed by a set of \emph{locations} $\{a,b,c,\dots\}$, and together called a \emph{memory}. The calculus interacts with individual stacks through \emph{push} and \emph{pop} constructs: $\term{[N]a.M}$ pushes the term $\term N$ to the stack $a$ and continues as $\term M$, and $\term{a<x>.M}$ pops from the stack $a$, using the popped term as $\term x$ in $\term M$. This generalizes the evaluation of $\lambda$-terms on the (simplified) Krivine machine~\cite{Krivine-2007}, where application pushes its argument, and abstraction pops. The standard, call--by--name $\lambda$-calculus~\cite{Plotkin-1975} then embeds by choosing a \emph{default} location $\lambda$, by letting $M\,N=\term{[N]l.M}$ and $\lambda x.M=\term{l<x>.M}$. For brevity, the location $\lambda$ may be omitted, writing instead $\term{[N].M}$ respectively $\term{<x>.M}$. The locations of the FMC may encode \emph{reader/writer} effects: higher order store as stacks of depth one, I/O as separate input and output streams, and non-determinism and probabilistic choice as non-deterministically and probabilistically generated input streams (see~\cite{Heijltjes-2022} for details).

The FMC further includes \emph{sequencing} $\term{M;N}$ and its unit \emph{skip} $\term*$. These are composition and identity of FMC terms viewed as \emph{memory transformers}: computations with an input memory and output memory. The big-step semantics, below, uses this interpretation directly, while it is implemented on the machine  via a continuation stack, in standard fashion. 

%--------------- DEF: syntax
\begin{definition}[Syntax]
The \emph{terms} of the Functional Machine Calculus are given by the following grammar.
\[
	\term{M,N,P}~\Coloneqq~
		\term{x}
~\mid~	\term{a<x>.M}
~\mid~	\term{[N]a.M}
~\mid~	\term{*}
~\mid~	\term{M;N}	
\]
From left to right these are: a \emph{variable} $\term x$, an \emph{abstraction} or \emph{pop} $\term{a<x>.M}$ on location $a$ which binds $x$ in $\term M$, an \emph{application} or \emph{push} $\term{[N]a.M}$ of term $\term N$ on location $a$, the \emph{unit} or \emph{skip} $\term*$, and \emph{sequencing} $\term{M;N}$.
\end{definition}

The operational meaning of the five constructs will be expressed through a small-step semantics, in the form of a simple stack machine, and a big-step semantics, in the form of an inductive evaluation relation. The abstract machine will use the following components. An \emph{operand stack} or \emph{argument stack} $S$ is a stack of terms, represented with the head on the right and using $\e$ for the empty stack. A \emph{memory} is a family of operand stacks indexed in a set of locations $A$. A \emph{continuation stack} is a stack of terms with the head to the left.
\[
	S,T~\Coloneqq~\e~\mid~S\,\term M
\qquad
	S_A~\Coloneqq~\{S_a\mid a\in A\}
\qquad
	K,L~\Coloneqq~\term M\,K~\mid~\e
\]
For a memory $S_A$, the stack $a$ being empty, $\e_a$, or outside the family, $a\notin A$, are equivalent. Thus we may fix a set of locations $A$ going forward, noting that any given term uses only a fixed, finite set of locations, determined syntactically. Let $S_A\msep\term M_a$ denote the memory $S_A$ with $\term M$ pushed to the stack indexed by $a$.

%--------------- DEF: machine
\begin{definition}[Abstract machine]
A \emph{machine state} $(S_A,\term M,K)$ is a triple of a memory $S_A$, a term $\term M$, and a continuation stack $K$. The \emph{transitions} or \emph{steps} of the machine given by the following rules.
\[
\begin{array}{cc}
	\step {S_A}{[N]a.M}K  {S_A\msep\term N_a} M K		
&   \step {S_A}{M;N}K  {S_A}M{\term N\,K}
\\ \\
	\step {S_A\msep\term N_a}{a<x>.M}K {S_A}{\{N/x\}M}K	
&	\step {S_A}*{\term M\,K} {S_A}MK
\end{array}
\]
A \emph{final state} is of the form $(S_A,\term *,\e)$ and a \emph{failure state} of either of the forms $(S_A,\term x,K)$ or $(S_A,\term{a<x>.M},K)$ where $S_a=\e$. A \emph{run} is a sequence of steps, written as below, where the \emph{length} of the run $n\geq 1$ is the number of states it traverses (which may be omitted from the notation). A run is \emph{successful} if it terminates in a final state.
\[
\steps 
  {S_A} M K 
  {T_A} N L
\,\raisebox{2pt}{$\scriptstyle n$}
\]
\end{definition}

The following big-step semantics describes the successful runs on the machine, for a given term, as an inductive \emph{evaluation} relation $\evalarrow_n$ between memories. The length of a run will correspond exactly to the number of rule instances in a derivation for $\evalarrow_n$, recorded as the number $n$. This correspondence, and a similar one with the type system defined later in this section, is the reason to measure the length of a run by its \emph{states} and not its \emph{steps}: a unit term $\term{*}$ is introduced by a rule, so correspondingly the unit run will have length one, not zero.

%--------------- DEF: evaluation
\begin{definition}[Quantified big-step semantics]
The quantified \emph{evaluation} relation $\evalarrow_n$ between an input memory, a term, and an output memory is defined inductively as follows.
\[
\evalRule {} {\eval[1]{S_A}*{S_A}}
\qquad
\evalRule {\eval[m]{R_A}M{S_A} \qquad \eval[n]{S_A}N{T_A}} {\eval[m+n+1]{R_A}{M;N}{T_A}}
\]
\[
\evalAlign 
  [n]   {S_A\msep\term N_a}       M  {T_A}
  [n+1] {S_A}               {[N]a.M} {T_A}
\qquad  
\evalAlign 
  [n]   {S_A}             {\{N/x\}M} {T_A}
  [n+1] {S_A\msep\term N_a} {a<x>.M} {T_A}
\]
\end{definition}

Both the machine and the evaluation relation are \emph{deterministic}: there is exactly one rule for each constructor (except none for the variable). Evaluation is thus a partial function, undefined where the machine is non-terminating or ends in a failure state. Note that it is not inductive on terms due to the substitution in the rule for abstraction, which handles the \emph{variable} construct.

%--------------- PROP: small-step big-step agree
\begin{restatable}[Small-step and big-step semantics agree]{proposition}{smallStepBigStep}
\label{prop:semantics agree}
Successful machine runs of length $n$ correspond to quantified evaluation $\evalarrow_n$:
\[
\steps 
  {S_A} M \e 
  {T_A} * \e
\,\raisebox{2pt}{$\scriptstyle n$}
\quad\iff\quad 
	\eval[n]{S_A}M{T_A}
\]
\end{restatable}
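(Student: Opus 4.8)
The plan is to read the biconditional as an exact correspondence between a successful run and a derivation tree for $\evalarrow$, with the length $n$ matching on both sides, and to prove the two directions separately. For the direction from evaluation to the machine I would induct on the derivation of $\eval[n]{S_A}M{T_A}$; for the direction from the machine to evaluation I would induct on the length $n$ of the run, with a case analysis on the leading constructor of $\term M$ (equivalently, on the first transition). The unit, push, and pop cases are immediate in both inductions, since each corresponds to a single transition that prepends one state to a strictly shorter run or subderivation. The real content is the sequencing constructor, where $\step{S_A}{M;N}K{S_A}M{\term N\,K}$ defers $\term N$ onto the continuation, so that a run for $\term{M;N}$ with empty continuation becomes a run for $\term M$ with $\term N$ sitting at the bottom of the continuation.

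To handle this I would first isolate a \emph{frame lemma} for the continuation stack: the transitions never inspect the continuation below the portion they themselves create, so a successful run with empty continuation can be replayed underneath an arbitrary frame $K$ without changing its transitions or its length. Precisely, $(S_A,\term M,\e)$ reaches $(T_A,\term*,\e)$ in $n$ states if and only if $(S_A,\term M,K)$ reaches $(T_A,\term*,K)$ in $n$ states by a run in which the frame $K$ is never touched and $(T_A,\term*,K)$ is the first state of the form $(\cdot,\term*,K)$. I would prove this by induction on $n$ with a case analysis on the four transitions, noting that push and pop on a location leave $K$ untouched, that the sequencing transition grows the continuation strictly above $K$, and that the skip transition can only remove what was pushed above $K$ until the frame itself is exposed.

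With the frame lemma the sequencing case closes in both inductions. For the evaluation-to-machine direction, given premises $\eval[m]{R_A}M{S_A}$ and $\eval[n]{S_A}N{T_A}$ the induction hypothesis yields runs $(R_A,\term M,\e)\to(S_A,\term*,\e)$ of length $m$ and $(S_A,\term N,\e)\to(T_A,\term*,\e)$ of length $n$; I would lift the first to the frame $\term N\,\e$, prepend the sequencing transition, splice in the exposing skip transition $\step{S_A}*{\term N\,\e}{S_A}N\e$, append the second run, and check that the states total $m+n+1$, matching $\eval[m+n+1]{R_A}{M;N}{T_A}$. For the machine-to-evaluation direction the tail after the first transition runs from $(S_A,\term M,\term N\,\e)$ to $(T_A,\term*,\e)$; using the frame lemma with frame $\term N\,\e$ I would split it at the first state of the form $(U_A,\term*,\term N\,\e)$, obtaining an $\term M$-run of some length $a$ (corresponding to $\eval[a]{S_A}M{U_A}$) and, after the exposing skip step, an $\term N$-run of length $b$ (corresponding to $\eval[b]{U_A}N{T_A}$) with $a+b+1=n$, and recombine them with the sequencing rule.

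The main obstacle, and the step deserving the most care, is the frame lemma together with the splitting argument it licenses: one must argue that the bottom frame $\term N\,\e$ genuinely persists untouched until the unique first moment it is exposed, so that the prefix really is a faithful $\term M$-run and the suffix a faithful $\term N$-run, and that replay preserves length exactly under the convention that length counts states rather than steps. Once the bookkeeping is fixed---the $+1$ in the sequencing rule being accounted for by the single sequencing transition, and the exposing skip step being absorbed as the shared boundary state between the two subruns---the remaining cases are routine structural manipulation.
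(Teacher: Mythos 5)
Your proposal is correct and takes essentially the same route as the paper: both directions are proved by the same pair of inductions (on the derivation of $\evalarrow_n$ one way, on the run the other), and the sequencing case is resolved by exactly the decomposition you give---the sequencing transition, an $\term M$-run under the frame $\term N\,\e$, the exposing skip step, then the $\term N$-run, totalling $m+n+1$ states. The only difference is one of rigour: you isolate and prove the frame/replay lemma explicitly, whereas the paper leaves it implicit in the phrase ``a successful run for $\term M$ (ignoring $\term N$ on the continuation stack).''
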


We will see two reduction relations. Regular reduction $\rw$ is a rewrite relation, closed under all contexts, and spine reduction $\rw_\spine$ is restricted to spine contexts, which means everywhere except within the argument $\term{N}$ of an application $\term{[N]a.M}$. 

%--------------- DEF: reduction
\begin{definition}[Reduction]
The \emph{reduction rules} are as follows.
\[
\begin{array}{lrcll}
		\text{Beta:}          & \term{[N]a.a<x>.M} & \rw & \term{\{N/x\}M}
\\[2pt]	\text{Passage:}       & \term{[N]b.a<x>.M} & \rw & \term{a<x>.[N]b.M} & (a\neq b, x\notin\fv N)
\\[2pt] \text{Next:}          & \term{*;M}		   & \rw & \term{M}
\\[2pt] \text{Prefix (pop):}  & \term{a<x>.N ; M}  & \rw & \term{a<x>.(N;M)}  & (x\notin\fv M)
\\[2pt] \text{Prefix (push):} & \term{[P]a.N ; M}  & \rw & \term{[P]a.(N;M)}
\\[2pt] \text{Associate:}     & \term{(P;N) ; M}   & \rw & \term{P ; (N ; M)}
\end{array}
\]
\emph{Reduction contexts} $\term C$ are given by the following grammar.
\[
	\term C~\Coloneqq~
		\term{\{\}}
~\mid~	\term{a<x>.C}
~\mid~	\term{[M]a.C}
~\mid~	\term{[C]a.M}
~\mid~	\term{C;M}
~\mid~	\term{M;C}
\]
\emph{Spine contexts} $\term W$ are given by the following grammar.
\[
	\term{W}
~\coloneqq~ \term{\{\}}
~\mid~      \term{a<x>.W}
~\mid~      \term{[M]a.W}
~\mid~      \term{W;M}
~\mid~      \term{M;W}
\]
The (capturing) substitution of the hole $\term{\{\}}$ in a context $\term C$ by a term $\term M$ is written $\term{C\{M\}}$. The \emph{reduction} relation $\rw$ is given by closing the reduction rules $\term M\rw\term N$ under any context, $\term{C\{M\}}\rw\term{C\{N\}}$. \emph{Spine} reduction $\rw_\spine$ is given by closing under spine contexts, $\term{W\{M\}}\rw_\spine\term{W\{N\}}$. The reflexive-transitive closure of a relation $\rw$ is written $\rws$, and reduction to normal form as $\rwn$.
\end{definition}

Without the beta-rule, reduction is strongly normalizing.

%--------------- PROP: non-beta SN
\begin{proposition}
\label{prop:non-beta SN}
Non-beta reduction is SN.
\end{proposition}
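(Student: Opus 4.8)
The plan is to exhibit a strictly decreasing measure on terms under non-beta reduction, taking values in a well-founded order. Looking at the five non-beta rules, they split into two groups with different effects. The \emph{Passage} rule permutes a push past a pop but keeps the term size constant, so a plain size measure will not strictly decrease on it; likewise the \emph{Prefix} and \emph{Associate} rules rearrange sequencing without changing the underlying term size. The \emph{Next} rule $\term{*;M}\rw\term M$ does strictly decrease size. So the natural approach is a lexicographic combination of measures, where each rule strictly decreases one component while not increasing the components of higher priority.

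First I would look for a measure that handles \emph{Associate} and the two \emph{Prefix} rules. These all push sequencing $\term{;}$ \emph{inward} (toward the leaves) or \emph{rightward}, so a weight that charges each occurrence of $\term{;}$ by the size of the subterm to its left should strictly decrease: reassociating $\term{(P;N);M}$ to $\term{P;(N;M)}$ reduces the left-weight of the outer semicolon from $|P{;}N|$ to $|P|$, and pushing a prefix $\term{[P]a.N;M}\rw\term{[P]a.(N;M)}$ moves the semicolon under the push, reducing its left-burden. For \emph{Passage}, I would use a separate measure counting, for each push construct $\term{[N]b.\,{-}}$, the number of pops that lie syntactically to its right in the spine, since passage moves a push rightward past a pop, decreasing this count. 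The \emph{Next} rule is covered by raw term size.

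Having chosen these component measures, I would assemble them lexicographically with term size last (since only \emph{Next} decreases it, and I must check \emph{Next} does not increase the semicolon-weight or passage-count, which it does not, as it only deletes a $\term*$), and verify that each rule strictly decreases the tuple. Crucially, because reduction is closed under all contexts, I must confirm each measure is \emph{monotone} under context formation, so that a strict decrease at the redex yields a strict decrease of the whole term; this is routine provided the measures are defined compositionally as sums over subterm contributions. I would then conclude strong normalization by well-foundedness of the lexicographic product of $(\mathbb N,{<})$.

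The main obstacle I anticipate is finding component measures that are simultaneously (i) monotone under arbitrary contexts and (ii) correctly ordered so that no rule increases a higher-priority component while decreasing a lower one. The interaction between \emph{Passage} and the \emph{Prefix}/\emph{Associate} rules is the delicate point: \emph{Passage} rearranges push/pop interleavings while \emph{Prefix} rearranges the $\term{;}$ structure, and I must ensure the passage-count measure is genuinely insensitive to the semicolon reshuffling done by the prefix and associate rules (and vice versa), so that the lexicographic ordering is consistent. If a single clean measure proves awkward, the fallback is to prove termination of the permutation-only fragment (\emph{Passage}, \emph{Prefix}, \emph{Associate}) separately by such a weight argument and then layer \emph{Next} on top, since \emph{Next} strictly reduces size and the permutation rules preserve it, giving a clean lexicographic split.
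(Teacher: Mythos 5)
Your proposal is correct and takes essentially the same approach as the paper: a lexicographically decreasing tuple of compositional measures whose first component (charging each semicolon with the size of its left operand) is exactly the paper's first measure, and whose passage component is a minor variant of the paper's second (the paper sums the sizes of push continuations $|\term{M}|$ over subterms $\term{[N]a.M}$, where you count pops to the right of each push). The only cosmetic difference is your third, raw-size component for \emph{Next}, which is redundant since your first measure already strictly decreases on $\term{*;M}\rw\term{M}$ (the deleted semicolon's charge vanishes), so the paper gets by with just the two measures.
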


\begin{proof}
The following pair of measures strictly decreases. Writing $|\term{M}|$ for the size of a term $\term M$, first the sum over $|\term M|$ for every subterm $\term{M;N}$ (equal for \emph{passage}, decreasing for other rules); second the sum over $|\term M|$ for every subterm $\term{[N]a.M}$ (decreasing for \emph{passage}).
\end{proof}

The above definitions differ from the original presentation~\cite{Heijltjes-2022} in the following ways, and for the following reasons. First, instead of introducing sequencing $\term{M;N}$ as a primitive, the original syntax used a sequential variable $\term{x.M}$ and defined sequencing as an operation. The two formulations are equivalent modulo the reductions for sequencing (the Next, Prefix, and Associate reductions), whose normal forms use sequencing exclusively as $\term{x;M}$. The original formulation reduced the number of reduction rules and avoided the need for a continuation stack on the machine, emphasizing the view of terms as sequences of machine instructions. Present thinking however prefers simplicity not through fewer but through more elementary components, viewing the more refined reduction relation and the continuation stack as benefits rather than obstacles. The current syntax emphasizes the clean combination of $\lambda$-calculus and sequencing and more readily accommodates current and future extensions such as exceptions~\cite{Heijltjes-2025} and concurrency.

Second, the original formulation used a single yet complex reduction rule, below, where the ellipsis represents any number of abstractions and applications on locations other than $a$. 
\[
	\term{[N]a\dots a<x>.M}\rw\term{\dots\{N/x\}M}
\]
Again the emphasis has shifted to multiple simpler rules, Beta and Passage. These follow the natural case analysis for an application meeting an abstraction, whether they have the same or different locations, and give more structured normal forms, where all abstractions precede all applications, as we will see in Sections~\ref{sec:spine} and~\ref{sec:perp}.

The main technical results of the paper, in the following sections, will establish the relation between the operational semantics, reduction, and quantitative types. The latter are defined next.

%--------------- DEF: types
\begin{definition}[Quantitative types]
\defn{Quantitative types} for the FMC are built up in four mutually recursive layers : \defn{computation types}, \defn{collection types}, \defn{vector types}, and \emph{memory types}, given by the following grammars.
\[
\begin{array}{lr@{~~}c@{~~}l}
	    \text{Computation types:}	& \type{t}         & \Coloneqq & \type{!!i => !!k}
\\[2pt]	\text{Collection types:}	& \type{i,k,m,n,p} & \Coloneqq & \type{[t_1,\dots,t_n]}
\\[2pt]	\text{Vector types:}		& \type{!i}        & \Coloneqq & \type{i_1\dots i_n}
\\[2pt]	\text{Memory types:}		& \type{!!i}       & \Coloneqq & \{\type{!i}_a\mid a\in A\}	
\end{array}
\]
Collection types are multisets, i.e.\ considered modulo associativity, symmetry, and unitality, where multiset union is denoted as a sum, $\type{i+k}$.
\end{definition}

These four categories reflect the components of the operational semantics. Terms will be typed with computation types (or simply \emph{types}) $\term{M:t}$, as will continuation stacks $K:\type t$. Collection types may be viewed as \emph{value types}, along the computation/value distinction of call--by--push--value~\cite{Levy-2003}, and apply to the argument term in an application and to the terms on an argument stack (note the continuation stack holds terms as \emph{computations}). Argument stacks themselves are typed by vector types, $S:\type{!i}$, and memories by memory types, $S_A:\type{!!i}$.

Note that the base cases for the definition are given by empty collection types $\type{[]}$ and empty vector and memory types, both denoted $\type{\e}$. Composition of type vectors is by juxtaposition, $\type{!i\,!k}$, extended to memories point-wise, $\type{!!i\,!!k}=\{\type{!i_a\,!k_a}\mid a\in A\}$. The singleton family $\type{a(!i)}$ has $\type{!i}$ at location $a$ and $\type\e$ at any other location. The default location $\lambda$ may again be omitted, writing $\type{!i}$ for $\type{\lambda(!i)}$.

Since stacks are first-in-first-out, identity terms are of the form below, with pops and pushes in inverse order to each other. To match this in the type system, type vectors on the left of an implication are implicitly reversed, so that identity types may be written $\type{!!i => !!i}$.
\[
	\term{a_n<x_n>..a_1<x_1>.[x_1]a_1..[x_n]a_n.*~:~a_n(i_n)..a_1(i_1) => a_1(i_1)..a_n(i_n)}
\] 
A \emph{typing context} $\term{G,D,E,F}$ is a function from variables to collection types, written as a sequence $\term{G = x_1:i_1,,x_n:i_n}$ where any other variable $\term y$ is implicitly assigned the empty type, $\trm{G("1y"0)}=\type{[]}$. The \emph{union} or \emph{sum} of two contexts $\term{G+D}$ is given pointwise: $\trm{(G+D)("1x"0)}=\trm{G("1x"0)}+\trm{D("1x"0)}$.

\begin{definition}[The quantitatively typed FMC]
The typing rules in Figure~\ref{fig:types} define the \defn{quantitatively-typed Functional Machine Calculus}, in two variants: \emph{weak} and \emph{strong}, with typing judgements $\term{G |w M:t}$ respectively $\term{G |s M:t}$ that assign the term $\term M$ the type $\type t$ in the context $\Gamma$. The weak type system is given by the rules of Figure~\ref{fig:types} with unlabelled $(\term{|-})$ and weak judgements $(\term{|w})$, and the strong system  by those with unlabelled $(\term{|-})$ and strong judgements $(\term{|s})$. The \emph{weight} $n$ of a derivation is the number of rule instances for abstraction, application, sequencing, and unit, and may be indicated in a judgement as $\term{|_n}$.
\end{definition}

%--------------- FIGURE: types
\begin{figure}
\[
\begin{array}{l@{\qquad}c}
		\text{Variable:}			&\vc{\infer[\rr x]{\term{x:[t] |- x:t}}{}}
\\ \\	\text{Abstraction:} 		&\vc{\infer[\rr l]{\term{G |- a<x>.M : a(i)\,!!k => !!m}}{\term{G, x:i |- M: !!k => !!m}}}
\\ \\	\text{Application (weak):} 	&\vc{\infer[\rr a]{\term{G+D |w [N]a.M : !!i => !!m}}{\term{G |w N : i} && \term{D |w M : a(i)\,!!k => !!m}}}
\\ \\	\text{Unit:}				&\vc{\infer[\rr *]{\term{|- * : !!i => !!i}}{}}
\\ \\	\text{Sequencing:}			&\vc{\infer[\rr ;]{\term{G+D |- N;M : !!i => !!m}}{\term{G |- N : !!i => !!k} && \term{D |- M : !!k => !!m}}}
\\ \\	\text{Collection:}			&\vc{\infer[\rr c]{\term{G_1 ++ G_n |- M : [t_1,\dots,t_n]}}{(\term{G_i |- M : t_i})_{1\leq i\leq n}}}
\\ \\   \hline
\\   	\text{Memory:}              &\vc{\infer[\rr e]{\term{|- ?{\e_A} : \e}}{} \qquad \infer[\rr o]{\term{|- ?{S_A\msep\term M_a} : !!i\,a(k)}}{\term{|- ?{S_A}: !!i} && \term{|- M:k}}}
\\ \\	\text{Continuation stack:}	&\vc{\infer[\rr i]{\term{|- ?{\e} : !!i => !!i}}{} \qquad \infer[\rr k]{\term{|- M\,K : !!i => !!m}}{\term{G |- M : !!i => !!k} && \term{|- K : !!k => !!m}}}
\\ \\	\text{State:}				&\vc{\infer[\rr ?]{\term{|- ??{S_A}MK : \e => !!m}}{\term{|- ?{S_A} : !!i} && \term{|- M : !!i => !!k } && \term{|- K : !!k => !!m}}}
\\ \\	\hline 
\\		\text{Application (strong):}&\vc{\infer[\rr A]{\term{G+D+E |s [N]a.M : !!k => !!m}}{\term{G |s N : i} && \term{D |s M : a(i)\,!!k => !!m} &&\term{E |s N:t }}}
\\ \\	\text{Weakening:}			&\vc{\infer[\rr W]{\term{G+D |s M : t}}{\term{G |s M : t}}}
\end{array}
\]
\caption{The quantitatively typed Functional Machine Calculus}
\label{fig:types}
\end{figure}

The first section of Figure~\ref{fig:types} gives the weak system for \emph{terms}, the next section extends the weak system to memories, continuation stacks, and machine states, and the final section gives the specific rules for the strong type system (which is defined only for terms, not states). When it is clear from the context whether a judgement concerns the weak or the strong type system, the subscripts on $\term{|w}$ and $\term{|s}$ may be dropped.

The key difference between both systems is in the third premise to the strong application rule. In both application rules, the first premise assigns a type to the argument $\term N$ which may be empty in the case where it will not be used by $\term M$. This corresponds to machine evaluation, where $\term N$ may be discarded without evaluation, but not to strong normalization, where $\term N$ must itself be normalizing. The strong application rule therefore requires $\term N$ to be typed separately, a standard approach~\cite{Bono-Dezani-2020,Bucciarelli-Kesner-Ventura-2017}, here via the third premise.

Typing judgements for machine states are of the form $\term{|- ??{S_A}MK : \e=>!!i}$, with empty context and input type. This prevents failure states from being typed: the empty context means there are no computation types for free variables, and an empty input type means there must always be sufficient items in memory for any abstraction to pop. \emph{Progress}, the idea that a machine state is either final or has a transition, is thus immediate from types, once it is proved that machine transitions preserve typing in Lemma~\ref{lem:steps} below.

Non-idempotent intersection types for the $\lambda$-calculus~\cite{Bucciarelli-Kesner-Ventura-2017} embed into quantitative FMC types in the manner of simple types~\cite{Heijltjes-2022}, by the following interpretation, using the default location $\lambda$ (omitted from the notation). The weak type system then embeds the system $\mathcal W$ for $\lambda$-terms of~\cite{Bucciarelli-Kesner-Ventura-2017}, while the strong system differs from system $\mathcal S$ in two minor ways: in the presence of the weakening rule, and for the strong application rule, in requiring a witness also when the argument has a non-empty collection type.
\[
\begin{array}{lr@{}l}
	\text{Base type:}      &                                  o &{}= \type{\e=>\e}
\\	\text{Arrow types:}    & [\sigma_1,\dots,\sigma_n] \to \tau &{}= \type{[\sigma_1,\dots,\sigma_n]\,!k => !m} \quad\text{where}~\tau=\type{!k => !m}
\end{array}
\]

%------------------------------------------------------------ MACHINE TERMINATION
\section{Machine Termination}
\label{sec:machine termination}

This section will show that the successful evaluation and weak typeability coincide, where the weights agree. First, in the weak type system, the following \emph{substitution} typing rule is admissible: it can be introduced without altering the expressiveness of the type system.
\[
	\infer-[\rr s]{\term{G+D |- \{N/x\}M : t}}{\term{G |- N:i} && \term{D, x:i |- M:t}}
\]

%--------------- LEMMA: weak substitution
\begin{restatable}[Substitution, weak]{lemma}{substitutionWeak}
\label{lem:substitutionWeak}
There are derivations $\term{G |w_n N:n}$ and $\term{D , x:n |w_m M:t}$ if and only if there is a derivation $\term{G+D |w_k \{N/x\}M :t}$ where $k=n+m$.
\end{restatable}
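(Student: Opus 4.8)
The plan is to prove both implications by induction on the structure of the term $\term{M}$, viewing the admissible rule $(\rr s)$ as the operation that grafts, at each leaf of a derivation of $\term{M}$ where the variable $\term{x}$ is used, a copy of a derivation of $\term{N}$. Because the argument of an application and the premises of the collection rule $(\rr c)$ carry \emph{collection} types, I would first generalize the statement so that the conclusion ranges over both a computation type $\type{t}$ and a collection type, and prove the two forms simultaneously by mutual induction; the collection form then reduces to the computation form by summing over the premises of $(\rr c)$, with the multiset type of $\term{x}$, the contexts, and the weights all splitting across those premises.

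For the forward direction (substitution), from $\term{G |w_n N : i}$ and $\term{D, x:i |w_m M : t}$ I would recurse on the derivation of $\term{M}$. In the base case $\term{M = x}$, the variable rule forces $\type{i} = \type{[t]}$ and $\term{D}$ empty with $m = 0$, so that $\term{\{N/x\}x} = \term{N}$ is typed directly by the hypothesis for $\term{N}$, whose weight $n$ already equals $n + m$. When $\term{M = y}$ with $\term{y} \neq \term{x}$, the variable rule forces $\type{i} = \type{[]}$, hence $\term{G}$ empty and a derivation of $\term{N : []}$ of weight $0$, and the result is immediate. In the inductive cases for abstraction, application, sequencing, and the collection rule, the collection type assigned to $\term{x}$ splits between the subterms or copies; I apply the induction hypothesis to each and reassemble. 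Since substitution neither creates nor removes the constructs counted by the weight, and the variable rule contributes $0$, the weights add to give exactly $k = n + m$.

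For the backward direction (anti-substitution) I would again induct on the structure of $\term{M}$, this time using inversion on the given derivation of $\term{G + D |w_k \{N/x\}M : t}$. The term $\term{\{N/x\}M}$ is the skeleton of $\term{M}$ with each leaf $\term{x}$ replaced by a copy of $\term{N}$, so inverting the rule at the root reveals whether the root constructor belongs to $\term{M}$ --- an abstraction, application, sequencing, or unit --- and returns subderivations to which the induction hypothesis applies. When $\term{M = x}$ the entire derivation is a typing of $\term{N}$, which I package with $\type{i} = \type{[t]}$, empty $\term{D}$, and $m = 0$. Each occurrence of $\term{x}$ in $\term{M}$ therefore contributes exactly one subderivation of $\term{N}$, and collecting these across the context splits yields the single multiset type $\type{i}$, the summed context $\term{G}$, and the summed weight $n$.

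The main obstacle is this backward direction, and in particular the bookkeeping in the application and collection cases, where the derivation of $\term{\{N/x\}M}$ interleaves subderivations that genuinely type $\term{N}$ with those that type the structure of $\term{M}$, spread across a context split and, under the collection rule, across several independent typings of the same term. Reconciling these so that the scattered $\term{N}$-subderivations merge into one collection type for $\term{x}$, with contexts summing to $\term{G}$ and weights summing to $n$, is exactly where non-idempotency does the work: each use of the variable rule for $\term{x}$ corresponds to precisely one $\term{N}$-subderivation, with no coalescing of equal types, so the multiset $\type{i}$ is recovered simply by listing them. Once this correspondence is set up, the weight identity $k = n + m$ follows from the same additivity used in the forward direction.
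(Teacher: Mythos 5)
Your proposal is correct and takes essentially the same approach as the paper: structural induction on $\term M$ with the statement generalized to collection-typed judgements (the paper's separate collection case), splitting the multiset type of $\term x$, the contexts, and the weights across the premises of the application, sequencing, and collection rules, and obtaining the backward direction by inversion on the collection rule typing $\term N$. The paper merely packages the two directions as upward permutations of the admissible substitution rule $\RR s$, with the same base cases (including $\type{i}=\type{[]}$ for $\term y\neq\term x$) and the same weight additivity $k=n+m$ that you describe.
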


For the weakly typed calculus, a machine step from a typed state leads to a typed state of weight exactly one less.

%--------------- LEMMA: machine evaluation reduces typing weights
\begin{restatable}[Evaluation reduces typing weights]{lemma}{evaluationReducesWeights}
\label{lem:steps}
If 
\[
	\step {S_A}MK {T_A}NL
\]
then $\term{|w_n ??{S_A}MK : \e=>!!i}$ if and only if $\term{|w_{n-1} ??{T_A}NL : \e=>!!i}$.
\end{restatable}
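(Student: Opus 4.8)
The plan is to prove the statement by case analysis on the four machine transition rules, showing in each case that a typing derivation for the source state $(S_A, \term M, K)$ can be transformed into one for the target state $(T_A, \term N, L)$ of weight exactly one less, and vice versa. Since the machine is deterministic with exactly one rule per constructor (ignoring variables), there are precisely four cases to handle, matching the four transitions: push, pop, sequencing, and the $\term*$-with-continuation step. In each case I would unfold the State rule $(\term?)$, which forces the memory $\term{S_A}$, the term $\term M$, and the continuation $K$ to match at a shared interface of vector/memory types, and then rebuild the derivation around the new components.

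First I would treat the push step $\step{S_A}{[N]a.M}K{S_A\msep\term N_a}MK$. Here the source derivation types $\term{[N]a.M}$ via the weak application rule $(\term a)$, splitting the context into $\term{G+D}$ with $\term{G|w N:i}$ and $\term{D|w M:a(i)\,!!k => !!m}$, and the State rule glues this to $\term{|w ?{S_A}:!!i'}$ and $\term{|w K:...}$. The transition moves the argument $\term N$ onto stack $a$ in memory, which on the type side means absorbing the $\term{|w N:i}$ premise into an extended memory typing $\term{|w ?{S_A\msep\term N_a}:!!i'\,a(i)}$ using the memory rule $(\term o)$. Because the application rule and the memory-extension rule $(\term o)$ each contribute exactly one rule instance relevant to the weight bookkeeping, I would verify the net weight drops by one: the application rule instance disappears while the $(\term o)$ instance, which does not count toward the weight (only abstraction, application, sequencing, and unit count), is what records the pushed term. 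The symmetric direction reconstructs the application from the memory. The pop step $\step{S_A\msep\term N_a}{a<x>.M}K{S_A}{\{N/x\}M}K$ is the mirror image: the source uses the abstraction rule $(\term l)$ together with a memory containing $\term N$ on stack $a$, and here the Substitution Lemma (Lemma~\ref{lem:substitutionWeak}) is the essential tool, converting the abstraction-plus-stacked-argument into a derivation for $\term{\{N/x\}M}$ with weights adding up correctly.

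For the sequencing step $\step{S_A}{M;N}K{S_A}M{\term N\,K}$, the source types $\term{M;N}$ by rule $(\term;)$ at an intermediate type $\term{!!k}$, and the target pushes $\term N$ onto the continuation stack, typed by the continuation rule $(\term k)$; the interface type $\term{!!k}$ is exactly what lines up the two halves, so this is a straightforward regrouping where the sequencing rule instance is traded for a continuation-stack rule instance, again a net weight decrease of one. Finally the step $\step{S_A}*{\term M\,K}{S_A}MK$ consumes the unit $\term*$ typed by $(\term*)$ (which forces input and output memory types to coincide) against a nonempty continuation $\term M\,K$, and the target simply resumes with $\term M$ against $K$; the unit rule instance (weight one) is removed while the continuation structure is peeled by one layer, so the weight drops by one.

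The main obstacle I expect is the weight bookkeeping: I must track carefully which rule instances count toward the weight $n$ (only $\rr l$, $\rr a$, $\rr;$, $\rr*$) versus the structural rules for memory and continuation stacks ($\rr o$, $\rr k$, etc.) that merely shuffle type information without contributing to the weight, and confirm that in every case exactly one weight-bearing rule instance is consumed. The push case is subtle because it trades an application $(\rr a)$, which counts, for a memory extension $(\rr o)$, which does not---so the bookkeeping must confirm no other weight-bearing instance is created or destroyed. The pop case is delicate because the weight change is mediated by the Substitution Lemma, whose weight equation $k = n+m$ must be shown to interact correctly with the removal of the $(\rr l)$ instance. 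Aligning these two accountings so that every transition yields precisely $n-1$ is the crux; once the interface types are pinned down by inversion on the State rule, however, each case reduces to a routine rebuilding of the derivation, so I would relegate the full calculations to a case-by-case check.
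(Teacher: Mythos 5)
Your proposal is correct and follows essentially the same route as the paper's own proof: a case analysis on the four machine transitions, inverting the State rule to expose the interface types, trading the weight-bearing rule instance ($\rr a$, $\rr l$, $\rr ;$, or $\rr *$) for the non-counting structural rules on memories and continuation stacks, and invoking the weak Substitution Lemma (Lemma~\ref{lem:substitutionWeak}) with its weight equation $k=n+m$ precisely in the pop case. The weight bookkeeping you flag as the crux works out exactly as you describe, so only the routine case-by-case reconstruction remains.
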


It follows that weak types guarantee termination of the machine while measuring the length of a run.

%--------------- THEOREM: types => machine termination
\begin{theorem}[Weak types quantify evaluation]
\label{thm:types => termination}
For a typed state $\term{|w_n ??{S_A}M\e : \e => !!i}$ there exists a typed memory $\term{|w_m ?{T_A} : !!i}$ such that $\eval[k]{S_A}M{T_A}$ where $n = m+k$.
\end{theorem}

\begin{proof}
Recall that failure states cannot be typed. A typed state $\term{|w_n ??{S_A}MK : \e=>!!i}$ is then either final or has a machine transition. In the case of a transition, by Lemma~\ref{lem:steps} the new state is typed with weight one less. Thus, the machine continues taking steps, preserving types but reducing weights, until it reaches a final state, giving a run of length $k$ as below left, so that $\eval[k]{S_A}M{T_A}$ by Proposition~\ref{prop:semantics agree}. The final state $\term{??{T_A}*\e}$ is typed as below right, with weight $m+1$ where $\term{|w_m ?{T_A} : !!i}$. This represents a run of length $1$, so that a run of length $k$ starts from a weight $m+k=n$.
\[
\steps {S_A} M \e {T_A} * \e\,\raisebox{2pt}{$\scriptstyle{k}$}
\qquad\qquad
\vc{\infer[\rr ?]{\term{|w_{m+1} ??{T_A}*\e : \e => !!i}}{\term{|w_m ?{T_A} : !!i} && \infer{\term{|w_1 * : !!i => !!i }}{} && \infer{\term{|w_0 \e : !!i => !!i}}{}}}
\qedhere
\]
\end{proof}

For the reverse, that all terminating states can be typed, it is now sufficient to give a type to final states $(T_A,\term *,\e)$. We may do so simply by typing each term in the memory $T_A$ with the empty collection type $\type{[]}$. Denote by $\type{[]^n}$ the type vector of $\type{[]..[]}$ of length $n$, and by $\type{[]^f}$ for a function $f:A\to\mathbb N$ the memory type $\{\type{[]^{f(a)}}\mid a\in A\}$. Let $|T|$ denote the length of a stack $T$ and let $|T_A|=f$ where $f(a)=|T_a|$ denote the dimensions of a memory $T_A$. For types, likewise let $|\type{!i}|$ denote the length of a type vector and $|\type{!!i}|=f$ where $f(a)=|\type{!i_a}|$ the dimensions of a memory type.

%--------------- THEOREM: machine termination => types
\begin{theorem}[Evaluation implies weak typing]
\label{thm:termination => types}
If $\eval[n]{S_A}M{T_A}$ then $\term{|w_n ??{S_A}M\e : \e => []^f}$ where $f=|T_A|$.
\end{theorem}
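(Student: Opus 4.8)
The plan is to reduce the statement to the machine semantics and then propagate a typing \emph{backwards} along the run, so that the real work is carried entirely by Lemma~\ref{lem:steps}. By Proposition~\ref{prop:semantics agree} the evaluation $\eval[n]{S_A}M{T_A}$ is witnessed by a successful run
\[
	\steps {S_A} M \e {T_A} * \e\,\raisebox{2pt}{$\scriptstyle n$}
\]
of length $n$, that is, a sequence of $n$ states joined by $n-1$ transitions and ending in the final state $(T_A,\term *,\e)$. As flagged before the statement, the idea is to type this final state in the trivial way and then read Lemma~\ref{lem:steps} from right to left.

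First I would type the final state. Each term occurring in the output memory $T_A$ is typed with the empty collection type $\type{[]}$ by the collection rule $\rr c$ with zero premises, and these are assembled into $\term{|w_0 ?{T_A} : []^f}$ for $f=|T_A|$ using the memory rules $\rr e$ and $\rr o$; this derivation has weight $0$, since none of $\rr c$, $\rr e$, $\rr o$ count towards the weight. Pairing it with $\term{|w_1 * : []^f => []^f}$ from $\rr *$ and the empty continuation $\term{|w_0 \e : []^f => []^f}$ from $\rr i$, the state rule $\rr ?$ yields
\[
	\term{|w_1 ??{T_A}*\e : \e => []^f}\,,
\]
which is exactly the right-hand derivation in the proof of Theorem~\ref{thm:types => termination} specialised to $m=0$.

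It then remains to iterate Lemma~\ref{lem:steps} in its reflecting direction: if a state reached from a state $\Sigma$ by a single transition is typed with weight $w$ and type $\type{\e => []^f}$, then $\Sigma$ is typed with weight $w+1$ and the same type. An induction on the number of transitions remaining in the run---starting from the final state at weight $1$ and working back through all $n-1$ transitions---then types the initial state $(S_A,\term M,\e)$ with weight $1+(n-1)=n$ and type $\type{\e => []^f}$, giving the required judgement $\term{|w_n ??{S_A}M\e : \e => []^f}$. The intermediate continuation stacks cause no difficulty, precisely because Lemma~\ref{lem:steps} is stated for arbitrary $K$.

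Since Lemma~\ref{lem:steps} is a biconditional that already carries both the typing and the output memory type backwards, the content is light and the only things needing care are bookkeeping: that run length counts states and hence contributes $n-1$ transitions, that the base typing of the final state has weight exactly $1$, and that the output type $\type{[]^f}$ is fixed once at the final state and propagated unchanged. The one conceptual point worth isolating is why the trivial empty-collection typing of $T_A$ is both legitimate and sufficient; it is, because the return values left on the stacks are never inspected once the machine halts---exactly the phenomenon noted in the introduction, that weak typing characterises machine termination while giving no guarantees about the returned stack.
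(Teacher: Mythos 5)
Your proposal is correct and takes essentially the same route as the paper: type the final state $(T_A,\term{*},\e)$ by giving every term in $T_A$ the empty collection type, then propagate the typing backwards along the run supplied by Proposition~\ref{prop:semantics agree}, using the right-to-left direction of Lemma~\ref{lem:steps} at each of the $n-1$ transitions. If anything, your bookkeeping is more careful than the paper's own write-up, which asserts weight $0$ for the final state where---as you correctly note, since the unit rule counts towards the weight---it must be weight $1$, so that the backward induction yields $1+(n-1)=n$ as required.
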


\begin{proof}
The memory $T_A$ may be typed $\term{|w_0 ?{T_A}:[]^f}$, and its final state $\term{|w_0 ??{T_A}*\e: \e =>[]^f}$. By Proposition~\ref{prop:semantics agree} there is a run of length $n$ to this state from $(S_A,\term M,\e)$. By Lemma~\ref{lem:steps} each step in this run, backwards from the final state, gives a previous state with the same type but weight one greater, so that by induction $\term{|w_n ??{S_A}M\e : \e => []^f}$.
\end{proof}

Combining both theorems, since a typeable state $\term{|w ??{S_A}M\e: \e=>!!i}$ terminates with a memory $T_A:\type{!!i}$, which is also typed by $\type{[]^f}$ where $f=|\type{!!i}|$, we have $\term{|w_n ??{S_A}M\e: \e =>[]^f}$ where the weight $n$ of the latter is exactly the number of machine steps to successful termination.

%------------------------------------------------------------ SPINE NORMALIZATION
\section{Spine Normalization}
\label{sec:spine}

The present section will establish that weak typing and successful evaluation further coincide with spine normalization. First we observe the shape of spine normal forms and demonstrate that they may be typed.

%--------------- PROPOSITION: spine normal forms
\begin{restatable}[Spine normal forms]{proposition}{spineNF}
\label{prop:spine NF}
The normal forms of spine reduction $\rw_\spine$ are the terms $\term{\WW}$ given by the following mutually recursive grammars.
\[
\begin{array}{l@{~}l@{~}l@{~}l@{~}l}
	\term{\WW} &\Coloneqq& \term{a<x>.\WW} &\mid& \term{\VV}
\\	\term{\VV} &\Coloneqq& \term{[M]a.\VV} &\mid& \term{x;\WW} ~\mid~ \term{x}  ~\mid~ \term*
\end{array}
\]
\end{restatable}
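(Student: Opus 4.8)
The plan is to prove the two inclusions separately. For ``every $\term\WW$ is a $\rw_\spine$-normal form'' I would induct on the grammar (equivalently, on the mutual derivation of $\term\WW$ and $\term\VV$); for the converse ``every $\rw_\spine$-normal form is a $\term\WW$'' I would induct on the structure of the term, reading off from the six reduction rules which constructor combinations a normal form must avoid.

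For the forward direction, the guiding observation is that the spine-accessible positions of a term are exactly the body of a pop, the body \emph{but not the argument} of a push, and both sides of a sequencing. Hence it suffices to check, for each production, that (a) the immediate spine subterms are normal, which is the induction hypothesis, and (b) the term is not itself a redex. Each such check is immediate: $\term{a<x>.\WW}$ is never an outermost redex; $\term{[M]a.\VV}$ is not a Beta or Passage redex because a $\term\VV$ never begins with a pop; and $\term{x;\WW}$ is not a Next, Prefix, or Associate redex because its left side is a variable rather than $\term*$, a pop, a push, or a sequencing. The cases $\term x$ and $\term*$ are trivial, and the argument of a push may be arbitrary precisely because it lies off the spine.

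For the backward direction I would strengthen the statement so as to carry the mutual recursion: simultaneously prove that a $\rw_\spine$-normal term $\term M$ is a $\term\WW$, and that it is moreover a $\term\VV$ whenever it does not begin with a pop. The induction on $\term M$ then splits on the outermost constructor, with two informative cases. For a push $\term{[N]a.M'}$, normality rules out a Beta or Passage redex (the Passage side condition being met up to $\alpha$-renaming), forcing the normal body $\term{M'}$ not to begin with a pop; the strengthened hypothesis then gives $\term{M'}\in\term\VV$, so $\term M\in\term\VV$. For a sequencing $\term{M';N}$, excluding the four sequencing redexes removes $\term*$, a pop, a push, and a nested sequencing as possible left components, leaving only a variable, so $\term M=\term{x;\WW}\in\term\VV$ with $\term N\in\term\WW$ supplied by the hypothesis.

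The main obstacle, modest as it is, lies in setting up this mutual recursion: the plain invariant ``$\term M$ is a $\term\WW$'' is too weak for the push case, which needs the sharper fact that a normal body not beginning with a pop is a $\term\VV$. Once this strengthened invariant is fixed, the rest is a routine enumeration of the six reduction rules against the five constructors, the only recurring care being to keep the argument of a push off the spine so that its possibly-reducible subterms impose no constraint on normality.
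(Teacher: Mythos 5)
Your proposal is correct and takes essentially the same route as the paper: a structural induction on terms with the mutual invariant that $\term\WW$ captures exactly the spine-normal terms and $\term\VV$ exactly the spine-normal terms that are not abstractions (your ``does not begin with a pop''), using the same case analysis of the six rules against the five constructors. The only presentational difference is that the paper establishes both inclusions at once as a biconditional in a single induction, where you separate the two directions.
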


%--------------- LEMMA: spine normal forms typeable
\begin{restatable}[Spine normal forms are weakly typeable]{lemma}{spineNFTyped}
\label{lem:spine NF => typed}
For any $\term{\WW}$ there is a derivation $\term{G |w \WW : !!i => []^f}$.
\end{restatable}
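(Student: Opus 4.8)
The plan is to prove the statement by mutual structural induction on the grammars for $\term{\WW}$ and $\term{\VV}$, exploiting that the context $\term G$, the input memory type $\type{!!i}$, and the function $f$ are all existentially quantified, so at each step I am free to choose whatever input and context make the composition go through.

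The base and non-push cases follow directly from this freedom. For $\term{\VV}=\term*$ the unit rule types $\term{|w * : !!i => !!i}$ for any $\type{!!i}$, so choosing $\type{!!i}=\type{\e}=\type{[]^f}$ with $f\equiv 0$ gives output $\type{[]^f}$. For $\term{\VV}=\term x$ the variable rule types $\term{x:[t] |w x : t}$ for any computation type $\type t$; taking $\type t=\type{\e=>\e}$ gives the empty output. For $\term{\VV}=\term{x;\WW}$ the induction hypothesis types $\term{\WW : !!k => []^f}$; I type $\term x$ by the variable rule with $\type t=\type{!!i => !!k}$ (choosing $\type{!!i}$ freely and $\type{!!k}$ to match the continuation), and the sequencing rule composes these into $\term{x;\WW : !!i => []^f}$. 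For $\term{\WW}=\term{a<x>.\WW'}$ the induction hypothesis gives $\term{G',x:i |w \WW' : !!k => []^f}$ for some $\type i$ (possibly $\type{[]}$), and the abstraction rule yields $\term{G' |w a<x>.\WW' : a(i)\,!!k => []^f}$, still with output of the form $\type{[]^f}$.

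The interesting case, which I expect to be the main obstacle, is $\term{\VV}=\term{[M]a.\VV'}$, where two things must be arranged. First, the pushed term $\term M$ is arbitrary and possibly non-normal, but in the weak system it is always typeable with the empty collection: the collection rule with zero premises gives $\term{|w M : []}$ in the empty context (as noted in the text, the argument of an application may receive the empty type). Second, the weak application rule requires the continuation $\term{\VV'}$ to be typed with an input whose head at location $a$ is exactly this collection $\type{[]}$, whereas the induction hypothesis only delivers $\term{\VV' : !!k' => []^{f'}}$ for an unconstrained $\type{!!k'}$.

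To bridge this gap I would first record a routine auxiliary observation: any weakly typeable term admits a pass-through empty slot, i.e.\ from $\term{G |w M : !!i => !!m}$ one obtains $\term{G |w M : a([])\,!!i => a([])\,!!m}$. This holds by a direct induction on the typing derivation, threading the extra empty-collection entry at $a$ unchanged through every rule (it is chosen freely at the variable and unit leaves and merely carried along by abstraction, application, and sequencing, while the collection subderivations typing arguments are left untouched); operationally it records that a value pushed onto $a$ and never popped survives to the output. Applying this to the induction hypothesis for $\term{\VV'}$ converts $\term{!!k' => []^{f'}}$ into $\term{a([])\,!!k' => a([])\,[]^{f'}}$, whose output is again of the form $\type{[]^{f''}}$; the application rule, with argument $\term M$ typed $\type{[]}$, then consumes the added input slot and produces $\term{[M]a.\VV' : !!k' => []^{f''}}$, completing the case. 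The only genuine subtlety is the alignment of the pushed argument's empty type with the head of the continuation's input, which the pass-through observation resolves cleanly.
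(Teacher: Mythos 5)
Your decomposition (mutual induction on $\term\WW$ and $\term\VV$, the empty collection $\term{|w M : []}$ for pushed arguments, and the diagnosis that the application case is the crux) matches the paper's proof, but the pass-through lemma you invent to close that case is false as stated, and its proof sketch breaks exactly where it matters. Take $\term{|w a<x>.[x].* : a([t]) => [t]}$, which is derivable. Your lemma would give $\term{|w a<x>.[x].* : a([])\,a([t]) => a([])\,[t]}$, but the only rule that can assign a computation type to an abstraction is $\RR l$, and it pops the written \emph{head} of the input at $a$ --- which after your insertion is $\type{[]}$, not $\type{[t]}$. So the extra entry is \emph{not} ``merely carried along by abstraction'': at any abstraction on location $a$ it is consumed, forcing $\term{x:[]}$, after which the body cannot produce $\type{[t]}$ on the output (contexts are additive, so the variable rule has no budget for $\term x$) and the stranded $\type{a([t])}$ in the input can never surface as $\type{a([])}$; the judgement is underivable. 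Your operational gloss gives the game away: a value pushed onto $a$ \emph{before} running the term sits on \emph{top} of the stack, so it is the first thing any $a$-pop sees. The version of the lemma that really does thread unchanged inserts the slot at the \emph{bottom}, $\term{G' |w M : !!i\,a([]) => a([])\,!!m}$ --- and even that needs $\Gamma$ to change (at a variable leaf the retyping happens in the context, whereas your statement holds $\Gamma$ fixed). But the bottom version is useless for your application case, because the weak application rule needs the argument's collection precisely at the head position $\type{a(i)\,!!k}$.

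The paper avoids any such lemma by strengthening the induction hypothesis for $\term\VV$: for \emph{every} $f:A\to\mathbb N$ there is a derivation $\term{G |w \VV : []^f => []^g}$. The application case for $\term{[M]a.\VV}$ then just invokes the hypothesis at $f'$ with $f'(a)=f(a)+1$, so the head at $a$ is $\type{[]}$ by construction, and the leaves absorb the arbitrary all-empty input because $\term *$ takes any identity type and a variable's type is freely chosen ($\term{x:[[]^f=>\e] |- x : []^f => \e}$; in the sequencing case one sets $\term{x:[[]^f=>!!i]}$ to feed $\term\WW$). With all input entries equal to $\type{[]}$, the positional bookkeeping that sinks your insertion argument disappears. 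You could alternatively repair your route with an ``insert $\type{[]}$ at any depth at $a$'' lemma restricted to spine-normal $\term\VV$ (it does hold there, since the variable and unit leaves absorb arbitrary inputs), but proving it is an induction on $\term\VV$ of exactly the same shape --- you would be re-proving the paper's strengthened statement.
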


Spine reduction preserves and reflects typing, where weights are strictly reduced in the reduction rules \emph{beta} $(\term{[N]a.a<x>.M}\rw\term{\{N/x\}M})$ and \emph{next} $(\term{*;M}\rw\term M)$, and preserved otherwise.

%--------------- LEMMA: spine reduction
\begin{restatable}[Weighted subject reduction and expansion]{lemma}{weightedSpine}
\label{lem:spine reduction}
If $\term M\rw_\spine\term N$ then $\term{G |w_m M:t}$ if and only if $\term{G |w_n N:t}$, where $m=n+2$ in the case of a beta-step or next-step and $m=n$ otherwise.
\end{restatable}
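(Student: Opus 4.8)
The plan is to prove the statement by induction on the spine context $\term W$ in which the root redex fires, writing $\term M=\term{W\{M_0\}}$ and $\term N=\term{W\{N_0\}}$ for a reduction rule $\term{M_0}\rw\term{N_0}$. The base case $\term{W=\{\}}$ treats the six reduction rules directly, and the inductive step propagates both the biconditional and the weight bookkeeping through the four spine-context constructors. Throughout I work only in the weak system, so the application rule has just its two premises and there is no weakening rule to account for.

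For the base case I would proceed rule by rule, using inversion on the typing rules to decompose a derivation of the redex into derivations of its immediate subterms, and then reassemble (reversibly) into a derivation of the contractum. The two weight-reducing rules are the interesting ones. For \emph{beta}, a derivation of $\term{[N]a.a<x>.M}$ factors, via the weak application and abstraction rules, into $\term{G|w_p N:i}$ and $\term{D,x:i|w_q M:t}$ with total weight $p+q+2$; Lemma~\ref{lem:substitutionWeak} then yields $\term{G+D|w_{p+q}\{N/x\}M:t}$, and conversely, so that $m=n+2$. A short check confirms that the conclusion type $\type{!!k => !!m}$ and the context $\term{G+D}$ are identical on both sides. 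For \emph{next}, a derivation of $\term{*;M}$ factors via sequencing and unit into $\term{|w_1 *:!!i => !!i}$ and $\term{D|w_q M:t}$ of total weight $q+2$, while $\term M$ alone has weight $q$, giving $m=n+2$; here the unit's type $\type{!!i => !!i}$ forces the shared type $\type t=\type{!!i => !!m}$.

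The remaining four rules---\emph{passage}, the two \emph{prefix} rules, and \emph{associate}---merely permute a fixed pair of rule instances (an application and an abstraction, a sequencing and an abstraction, a sequencing and an application, or two sequencings). Since each side uses exactly the same instances over the same subderivations, the weight is preserved, $m=n$, and the biconditional is immediate once the types are checked to line up. For the inductive step, each spine-context constructor contributes one surrounding rule instance (abstraction in $\term{a<x>.W}$, application in $\term{[M]a.W}$, or sequencing in $\term{W;M}$ and $\term{M;W}$) whose hole-containing premise is handled by the induction hypothesis while its other premises are untouched; the pushed argument in $\term{[M]a.W}$ is never a hole, which is exactly why spine contexts suffice. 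Because that surrounding instance adds the same weight on both sides, the global weight difference equals the one at the redex, and the biconditional lifts since inversion and reconstruction are reversible.

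The main obstacle is the type-matching in the \emph{passage} case, where the push on $b$ and the pop on $a$ are transposed: I would use the side condition $a\neq b$ to argue that the memory-type entries at locations $a$ and $b$ are independent, so that swapping the $\type{a(i)}$ and $\type{b(k)}$ components of the memory type is sound, and the side condition $x\notin\fv N$ to ensure the derivation of $\term N$ assigns $x$ the empty type $\type{[]}$, so the abstraction can be pulled outside the application without disturbing the context split $\term{G+D}$. The analogous but milder condition $x\notin\fv M$ for \emph{prefix (pop)} is handled the same way. By contrast, \emph{beta} and \emph{next} are comparatively routine once Lemma~\ref{lem:substitutionWeak} is in hand, since that lemma already packages the additive weight behaviour of substitution.
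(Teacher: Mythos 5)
Your proposal is correct and takes essentially the same route as the paper's proof: induction on spine contexts, with the six rules handled at the root by inversion and reassembly, Lemma~\ref{lem:substitutionWeak} supplying the $m=n+2$ accounting for beta (and the removed unit/sequencing pair for next), the permutation cases weight-preserving via the commutation $\type{a(i)\,b(n)\,!!k}=\type{b(n)\,a(i)\,!!k}$ and the freshness side conditions. Your key observation for the inductive step---that the hole of a spine context is never in argument position, so the subderivation for the redex occurs exactly once---is precisely the remark the paper uses to lift the biconditional and the weight difference through contexts.
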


It now follows that weak typeability and spine normalization coincide.

%--------------- THEOREM: weak typing ~ spine normalization
\begin{theorem}[Weak typing characterizes spine normalization]
\label{thm:types <=> spine}
A term is weakly typeable if and only if it is spine-normalizing.
\end{theorem}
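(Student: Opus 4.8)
The plan is to prove the two implications separately, using the weighted subject reduction and expansion lemma (Lemma~\ref{lem:spine reduction}) as the workhorse in both directions, together with the typeability of spine normal forms (Lemma~\ref{lem:spine NF => typed}) for one direction and the non-beta strong normalization result (Proposition~\ref{prop:non-beta SN}) for the other.

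For the direction that spine normalization implies weak typeability, I would argue by induction on the length of a spine reduction to normal form. If $\term M$ is spine-normalizing, fix a finite spine reduction $\term M \rw_\spine^* \term{\WW}$ to a spine normal form $\term{\WW}$, whose shape is given by Proposition~\ref{prop:spine NF}. By Lemma~\ref{lem:spine NF => typed} the normal form $\term{\WW}$ is weakly typeable. Each step $\term P \rw_\spine \term Q$ along this sequence, read backwards, lifts typeability of $\term Q$ to typeability of $\term P$ by the subject expansion half of Lemma~\ref{lem:spine reduction}. Iterating over the reduction sequence yields a weak typing derivation for $\term M$.

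For the converse, that weak typeability implies spine normalization, I would in fact establish \emph{strong} spine normalization. Suppose $\term{G |w_n M:t}$ and, for contradiction, that there is an infinite spine reduction from $\term M$. By Lemma~\ref{lem:spine reduction} the beta-steps and next-steps strictly decrease the weight by $2$, while the remaining spine rules---passage, the two prefix rules, and associate---preserve it. Since the weight is a non-negative integer, only finitely many beta- and next-steps can occur along any reduction from $\term M$. An infinite reduction must therefore have an infinite tail consisting solely of weight-preserving steps. But these steps form a subrelation of non-beta reduction, which is strongly normalizing by Proposition~\ref{prop:non-beta SN}, so no such infinite tail can exist---a contradiction. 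Hence every spine reduction from $\term M$ is finite, and in particular $\term M$ is spine-normalizing.

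The main obstacle is the converse direction: it is not enough that the weight bounds the number of weight-decreasing steps, because spine reduction also contains weight-preserving rules that could \emph{a priori} loop. The crucial observation is to partition the spine rules into those that strictly decrease the weight (beta and next) and those that preserve it (passage, prefix, associate), and to recognize that the latter are contained in non-beta reduction and so inherit its strong normalization from Proposition~\ref{prop:non-beta SN}. Care is needed to confirm from Lemma~\ref{lem:spine reduction} exactly which rules are weight-preserving, and to note that the next-rule---although non-beta---is counted among the weight-decreasing steps rather than in the weight-preserving tail.
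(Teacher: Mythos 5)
Your proposal is correct and follows essentially the same route as the paper: both directions rest on Lemma~\ref{lem:spine reduction}, with Lemma~\ref{lem:spine NF => typed} supplying typeability of spine normal forms and Proposition~\ref{prop:non-beta SN} ruling out an infinite tail of weight-preserving (non-beta) steps. The only cosmetic difference is that the paper phrases the expansion direction as an induction on the sum length of all spine reduction paths, whereas you induct backwards along one fixed reduction sequence to normal form; your explicit partition of spine steps into weight-decreasing (beta, next) and weight-preserving (passage, prefix, associate) matches the paper's argument.
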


\begin{proof}
From left to right, since non-beta steps are strongly normalizing by Proposition~\ref{prop:non-beta SN}, an infinite reduction must contain infinitely many beta-steps. By Lemma~\ref{lem:spine reduction} spine beta-reduction decreases the weight of the typing derivation, while other steps do not increase it. Then spine reduction is terminating. From right to left, proceed by induction on the sum length of all spine reduction paths. Spine-normal terms are typeable by Lemma~\ref{lem:spine NF => typed}, and for a step $\term M\rw_\spine\term N$ if $\term N$ is typeable then so is $\term M$ by Lemma~\ref{lem:spine reduction}.
\end{proof}

Since weak typing characterizes both spine normalization and machine termination (by Theorems~\ref{thm:types => termination} and~\ref{thm:termination => types}), the natural corollary would be that machine termination and spine normalization coincide. In one direction, this follows directly: by Theorem~\ref{thm:termination => types} termination implies typeability, and by Theorem~\ref{thm:types <=> spine} typeability implies spine normalization.

In the other direction, however, there is still a small gap to be closed: a spine-normalizing term $\term M$ may be typeable as $\term{G |w M: !!i => !!k}$, but the machine is only guaranteed to run for states of type $\type{\e => !!k}$. In other words, for $\term M$ to run successfully, a memory $S_A$ of type $\term{|w ?{S_A}:!!i}$ must be provided, and each variable $\term{x:i}$ in $\term G$ must be substituted by a term $\term N$ typed $\term{|w N:i}$. However, these are not guaranteed to exist, since not all quantitative types are inhabited. This is in contrast with simple types for the FMC, which \emph{are} inhabited~\cite[Remark 3.7]{Heijltjes-2022}.

%--------------- PROPOSITION: no type inhabitation
\begin{proposition}[No type inhabitation]
There exist types $\type{t}$ such that $\notdashweak\term{M:t}$ for any $\term M$.
\end{proposition}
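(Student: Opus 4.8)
The plan is to exhibit one concrete computation type with no closed inhabitant. A preliminary point: the statement must concern \emph{closed} terms, i.e.\ derivations $\term{|w M:t}$ with empty context, since the variable rule $\term{x:[t] |w x:t}$ inhabits \emph{every} computation type $\type t$ as soon as a free variable is allowed. The obstruction I would exploit is the asymmetry between arguments and results. A collection type $\type{[t_1,\dots,t_n]}$ occurring on a popped (input) slot is only ever realised by a bound variable during typing, and may be split freely across the premises of the collection rule $\rr c$, so such types are always inhabited. A collection occurring on a \emph{pushed} (output) slot, by contrast, must be realised by a single term---the one sitting in that memory cell---that carries all of $\type{t_1},\dots,\type{t_n}$ simultaneously. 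I would therefore build two computation types that are each inhabited but share no common closed inhabitant, and place their collection on an output slot.

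Concretely, I would take $o=\type{\e=>\e}$, inhabited by $\term*$, and $\type{\e=>[o]}$, inhabited by $\term{[*].*}$, and claim that no single closed term $\term P$ has both types. Suppose one did. Then the state $(\e_A,\term P,\e)$ over the empty memory $\e_A$ can be typed both as $\type{\e=>\e}$ and as $\type{\e=>[o]}$; since evaluation is a deterministic partial function, its unique run yields one final memory $T_A$, which by Theorem~\ref{thm:types => termination} is typed both $\e$ and $\type{[o]}$. But the memory rules $\rr e$ and $\rr o$ make the length of each stack in a typed memory equal to the length of the corresponding vector in its type; so the default stack of $T_A$ would have length $0$ (from type $\e$) and length $1$ (from type $\type{[o]}$) at once, a contradiction. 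Hence $o$ and $\type{\e=>[o]}$ have no common closed inhabitant.

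Finally I would set $\type t=\type{\e => [o, \e=>[o]]}$ and show it is uninhabited. A closed derivation $\term{|w M:t}$ gives, via the state rule $\rr ?$, a typed initial state, so by Theorem~\ref{thm:types => termination} running $\term M$ from $\e_A$ terminates in a memory typed $\type{[o, \e=>[o]]}$: a single default cell holding a closed term $\term P$ with $\term{|w P:[o, \e=>[o]]}$. As $\rr c$ is the only rule deriving a collection type, inverting it---with the empty context forcing both sub-contexts empty---yields $\term{|w P:o}$ and $\term{|w P:\e=>[o]}$, contradicting the previous paragraph. So no closed term has type $\type t$. The main obstacle is precisely the incompatibility lemma of the second paragraph; everything else is bookkeeping with the syntax-directed rules. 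It is worth remarking why this cannot arise for simple types, where a collection collapses to a single type and an output cell is never asked to meet two distinct quantitative demands at once. A purely syntactic variant of the incompatibility argument is also available, replacing the machine run by the observation (via subject reduction, Lemma~\ref{lem:spine reduction}, and Proposition~\ref{prop:spine NF}) that a closed spine-normal form is either a pop, forcing nonempty input, or a sequence of pushes ending in $\term*$, whose output length is fixed.
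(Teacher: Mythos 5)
Your proof is correct, and it arrives at the paper's exact witness type $\type{\e => [ \e=>\e , \e => [\e=>\e] ]}$, but it establishes the key incompatibility of $\type{\e=>\e}$ and $\type{\e => [\e=>\e]}$ by a genuinely different mechanism. The paper argues syntactically: typeability implies spine normalization (Theorem~\ref{thm:types <=> spine}), subject reduction \emph{and} expansion (Lemma~\ref{lem:spine reduction}) transfer both types to any spine normal form of the term, and the grammar of Proposition~\ref{prop:spine NF} then forces that normal form to be simultaneously $\term{*}$ and of the shape $\term{[M].*}$, which is impossible --- essentially the ``purely syntactic variant'' you sketch in your final sentence. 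You instead argue semantically: two typings of the same initial state over the empty memory, determinism of evaluation giving a single final memory, Theorem~\ref{thm:types => termination} typing that memory both $\type{\e}$ and $\type{[\e=>\e]}$, and the observation that rules $\RR e$ and $\RR o$ force stack lengths to equal type-vector lengths. Both routes are sound on the paper's results; the syntactic one is more self-contained (it needs only Section~\ref{sec:spine}), while yours reuses the quantitative soundness theorem and makes the contradiction a one-line arithmetic clash of stack lengths, which arguably better exposes \emph{why} output collections, unlike input collections, impose a joint demand on a single term. Your inversion steps are fine: $\RR c$ is indeed the only rule concluding a collection type, and an empty context splits only into empty contexts since sums are pointwise multiset unions. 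One small wording point: the term $\term P$ extracted from the final memory need not be literally closed (a free variable may occur inside an argument typed with the empty collection), but your incompatibility argument only uses that $\term P$ is typed in the empty context, so nothing breaks.
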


\begin{proof}
Since types are preserved under spine reduction, if a type is inhabited by a term, it is inhabited by its spine normal form. The only normal form inhabiting the type $\type{\e=>\e}$ is $\term{|- * : \e=>\e}$ (it cannot be an abstraction, since the type has no inputs; it cannot contain free variables, since the context is empty; and it cannot be an application, since the type has no outputs). Similarly, the only spine normal forms inhabiting $\type{\e => [\e=>\e]}$ are of the form $\term{|- [M].* : \e => [\e =>\e ]}$. It follows that the collection type $\type{[ \e=>\e , \e => [\e=>\e] ]}$ is not inhabited, and so neither is the type $\type{\e => [ \e=>\e , \e => [\e=>\e] ]}$.
\end{proof}

To connect spine normalization and termination, for every spine normal term $\term\WW$ a memory $S_A$ and substitution map $\sigma$ will be given such that for $S_A$ and $\term{\sigma\WW}$ the machine terminates. A \emph{substitution map} $\sigma=\{\term{M_1}/\term{x_1},\dots,\term{M_n}/\term{x_n}\}$ is a finite function from variables to terms, applied as $\term{\sigma M}$ to a term $\term M$ as a simultaneous substitution. To allow a successful machine run for $\term{\sigma\WW}$, the memory $S_A$ and substitution map $\sigma$ must be such that, at any point in the computation, there are sufficient terms on all stacks to pop from. To this end the following notions are defined: for a natural number $d$, a \emph{memory of dimension $d$} will have at least $d$ terms of dimension $d-1$ on each stack, and a \emph{term of dimension $d$} consists of sufficient applications to generate a memory of dimension $d$. These are formalized as follows, again noting that the locations $A$ for a given term are finite and may be determined syntactically.

For a stack $S=\e\,\term{N_1}\dots\term{N_n}$ let $\term{[S]a.M}$ denote the term $\term{[N_1]a..[N_n]a.M}$, and for a memory $S_A$ let $\term{[S_A].M}$ denote a term $\term{[S_{a_1}]a_1..[S_{a_n}]a_n.M}$ where $A=\{a_1\dots a_n\}$. For a natural number $d$, a \emph{memory of dimension $d$} is one $S_A$ where $|S_A|=f$ with $f(a)\geq d$ for all $a\in A$, and each term in $S_A$ is a term of dimension $d-1$, where a \emph{term of dimension $d$} is of the form $\term{[S_A].*}$ with $S_A$ of dimension $d$. A \emph{substitution map of dimension $d$} is one that sends every variable to a term of dimension $d$. Observe that an element of dimension $d$ is also one of any $d'< d$.

%--------------- LEMMA: spine normal => termination
\begin{restatable}[Spine normal forms terminate]{lemma}{spineNFterminate}
\label{lem:spine-normal => runs}
For a spine-normal term $\term\WW$ there is a dimension $d$ such that for any memory $S_A$ and substitution map $\sigma$ of dimension $d$, there is a memory $T_A$ such that $\eval {S_A}{\sigma\WW}{T_A}$.
\end{restatable}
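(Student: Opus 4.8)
The plan is to phrase the claim in terms of the big-step relation, which by Proposition~\ref{prop:semantics agree} is equivalent to machine termination and keeps the continuation implicit in the sequencing rule, and then to prove by induction on the size of $\term\WW$ (following the grammar of Proposition~\ref{prop:spine NF}) a slightly generalized statement: there is a dimension $d$ such that $\eval{S_A}{\sigma\WW}{T_A}$ is defined for some $T_A$ whenever $\sigma$ has dimension at least $d$ and the memory is \emph{stocked} to dimension $d$, meaning the top of each stack carries at least $d$ fresh elements of dimension $d-1$ (a memory of dimension $d$ is a special case). The generalization to stocked memories is needed because, as the induction descends, arbitrary arguments get pushed below the fresh elements. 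Throughout I rely on the structural fact, visible in the grammar, that within each $\term\WW$ all abstractions precede all applications, so pops never follow pushes inside a single spine normal form.

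The case analysis follows the constructors. For an abstraction $\term{a<x>.\WW}$, since $d\geq 1$ the stack $a$ is non-empty; I pop its top element $\term N$, a term of dimension $d-1$, extend $\sigma$ by $\term N/\term x$ to a map of dimension $d-1$, and apply the induction hypothesis to the body with the memory decreased by one and now stocked to dimension $d-1$ — so the chosen $d$ must exceed the abstraction nesting. For an application $\term{[M]a.\VV}$ I push the \emph{arbitrary} term $\term{\sigma M}$ onto stack $a$ and recurse on $\term\VV$; crucially $\term\VV$ performs no pop before its head, so this argument is not consumed here. The heads are handled directly: $\term*$ evaluates by the unit rule; a bare head variable becomes $\term{\sigma x}=\term{[R_A].*}$ for some memory $R_A$, which merely pushes $R_A$ and reaches $\term*$, hence evaluates for any finite $R_A$ independently of $\term\WW$; and for $\term{x;\WW}$ I first evaluate $\term{\sigma x}$ — which replenishes every stack with at least $d$ fresh, high-dimension elements — and then apply the induction hypothesis to the body after the semicolon.

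The main obstacle, and the reason the dimension machinery is required, is that the pushed arguments $\term{\sigma M}$ are arbitrary terms that may be non-terminating or may even be variables, so the construction only succeeds if they are never popped. I would maintain the invariant that whenever an abstraction is about to pop a stack, the top of that stack consists of fresh, sufficiently high-dimensional elements shielding any arbitrary argument lying beneath. This holds because, by the abstractions-before-applications shape, every abstraction pops from a freshly stocked memory: either the initial $S_A$, or a memory just replenished by a preceding head variable $\term{\sigma x}=\term{[R_A].*}$, which deposits at least $d$ fresh elements on every stack before the following spine normal form can pop. Choosing $d$ larger than a structural measure of $\term\WW$ — bounding both the number of pops on any single stack between two replenishments and the depth of the dimension-degradation incurred when a popped element is itself used as a head variable, and so replenishes at a dimension one lower — guarantees that every pop finds a genuine term on top and no arbitrary argument is ever reached. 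Termination of the run is then automatic, since the derivation built this way follows the finite structure of $\term\WW$, with each head-variable evaluation contributing only finitely many pushes.
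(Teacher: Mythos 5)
Your strategy is the paper's own --- induction over the spine-normal grammar, with a dimension that is decremented at each abstraction, replenished by head variables, and with pushed arguments never consumed --- but the induction as you set it up does not close. The problem is the application case: your single hypothesis requires the memory to be \emph{stocked} to dimension $d$, with fresh elements of dimension $d-1$ on \emph{top} of every stack, yet for $\term{[M]a.\VV}$ you push the arbitrary term $\term{\sigma M}$ onto stack $a$, after which the top of that stack is junk and your hypothesis is false, so you cannot invoke the induction hypothesis on $\term\VV$. You bridge this with the claim that $\term\VV$ ``performs no pop before its head'', but that claim is not a consequence of your stated hypothesis --- it is precisely the second induction hypothesis you would need --- and your concluding invariant (``every abstraction pops from a freshly stocked memory, either the initial one or one just replenished by a head variable'') is a property of the very run you are in the middle of constructing: it is true, but establishing it is the content of the lemma, so organizing the proof around it is circular rather than a well-founded induction.

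The paper closes the induction with a small but essential refinement you are missing: a mutual induction with two statements following the $\term\WW$/$\term\VV$ grammar. Statement (1), for $\term\WW$: evaluation succeeds from any memory of dimension $d$. Statement (2), for $\term\VV$: evaluation succeeds from the \emph{empty} memory, $\eval{\e}{\sigma\VV}{T_A}$, with no memory condition at all. The application case is then immediate --- recurse on $\term\VV$ from $\e$, and lift by the (implicit) frame or expansion property that evaluation is unaffected by extra material at the bottom of the stacks, so the pushed $\term{\sigma M}$ lies inert --- and the sequencing case $\term{x;\WW}$ re-enters statement (1) because evaluating $\term{\sigma x}=\term{[R_A].*}$ \emph{from the empty memory} leaves exactly a memory $R_A$ of dimension $d$, with no junk beneath, so no shielding invariant is ever needed. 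Your ``stocked'' notion then becomes unnecessary, and the rest of your case analysis --- the step from $d+1$ to $d$ at abstractions, monotonicity in the dimension, and your correct observation about dimension degradation through popped head variables --- matches the paper's proof and goes through essentially verbatim once the hypothesis is split this way.
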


Since machine termination guarantees typing (Theorem~\ref{thm:termination => types}), the above would be sufficient to connect spine normalization to weak typeability. The following lemma is however instructive in connecting spine reduction and machine termination explicitly.

%--------------- LEMMA: evaluation invariant under spine reduction
\begin{restatable}[Evaluation is invariant under spine reduction]{lemma}{spineTermination}
\label{lem:spine reduction preserves termination}
If $\term{M}\rw_\spine\term{N}$ then $\eval{S_A}{\sigma M}{T_A}$ if and only if $\eval{S_A}{\sigma N}{T_A}$ for any memories $S_A$ and $T_A$ and substitution map $\sigma$.
\end{restatable}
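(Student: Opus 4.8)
The plan is to work entirely with the inductive big-step relation $\evalarrow$, reading off from its four rules the following compositional characterisation, valid by inspection of the rule shapes and making no use of weights: $\eval{R_A}{\term{P;Q}}{T_A}$ holds iff $\eval{R_A}{\term P}{U_A}$ and $\eval{U_A}{\term Q}{T_A}$ for some $U_A$; $\eval{S_A}{\term{[N]a.M}}{T_A}$ iff $\eval{S_A\msep\term N_a}{\term M}{T_A}$; writing a non-empty $a$-stack as $S_A=S_A'\msep\term P_a$, $\eval{S_A}{\term{a<x>.M}}{T_A}$ iff $\eval{S_A'}{\term{\{P/x\}M}}{T_A}$, while both are undefined when the $a$-stack is empty; and $\eval{S_A}{\term*}{U_A}$ iff $U_A=S_A$. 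Each is a genuine biconditional that preserves the output memory, so neither determinism nor Proposition~\ref{prop:semantics agree} is needed.

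Next I would reduce the role of $\sigma$ by noting that spine reduction is stable under substitution, $\term M\rw_\spine\term N$ implies $\term{\sigma M}\rw_\spine\term{\sigma N}$: spine contexts are closed under substitution, and each root rule survives, the side conditions $x\notin\fv N$ (Passage) and $x\notin\fv M$ (Prefix-pop) remaining intact under a capture-avoiding $\sigma$ that leaves the bound $\term x$ untouched. Keeping $\sigma$ explicit is nonetheless convenient, so I would prove the biconditional for all $\sigma$, $S_A$, $T_A$ by induction on the spine context $\term W$ with $\term M=\term{W\{M_0\}}$ and $\term N=\term{W\{N_0\}}$, where $\term{M_0}\rw\term{N_0}$ is a root step. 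The context cases are dispatched by the compositional facts: $\term{W;M'}$ and $\term{M';W}$ by the sequencing decomposition, $\term{[M']a.W}$ by the push clause, and $\term{a<x>.W}$ by the pop clause, where for a non-empty $a$-stack with top $P$ the induction hypothesis is applied at the smaller context $\term W$ with the extended substitution $\sigma\cup\{\term P/\term x\}$ (choosing $\term x$ fresh), and the empty-stack subcase leaves both sides undefined.

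The base case $\term W=\term{\{\}}$ is a case analysis on the six root rules, each handled by chaining the compositional facts. Beta and Next are immediate: $\eval{S_A}{\term{[N]a.a<x>.M}}{T_A}$ unfolds by the push then pop clauses to $\eval{S_A}{\term{\{N/x\}M}}{T_A}$, and $\eval{S_A}{\term{*;M}}{T_A}$ unfolds by the sequencing and unit clauses to $\eval{S_A}{\term M}{T_A}$. For Passage and Prefix-pop the point is that, after popping the top $P$ of the $a$-stack, the side condition lets the induced $\{P/x\}$ pass the unchanged subterm: $x\notin\fv N$ gives $\term{\{P/x\}([N]b.M)}=\term{[N]b.\{P/x\}M}$ and $x\notin\fv M$ gives $\term{\{P/x\}(N;M)}=\term{\{P/x\}N;M}$, after which both sides coincide, and an empty $a$-stack leaves both undefined. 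Prefix-push is a direct reassociation via the push and sequencing clauses, and Associate reduces to the logical reassociation of nested existentials, $\exists U_A\,\exists V_A$ versus $\exists V_A\,\exists U_A$, in the two decompositions of $\term{(P;N);M}$ and $\term{P;(N;M)}$.

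The only genuinely delicate points are bookkeeping rather than conceptual: the free-variable side conditions are exactly what licenses commuting $\{P/x\}$ past an untouched subterm (and what keeps spine reduction stable under $\sigma$), and capture must be avoided when extending $\sigma$ under the abstraction binder. It is worth observing that Associate is the one rule whose two sides do not reach a common machine state, since the continuation stacks $\term N\,\term M\,K$ and $\term{(N;M)}\,K$ differ; a machine-level argument would therefore need an auxiliary equivalence on continuation stacks, whereas working with $\evalarrow$ dissolves the mismatch because its sequencing clause already flattens such nesting associatively.
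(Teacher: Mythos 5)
Your proposal is correct and follows essentially the same route as the paper: induction on the spine context, with the six root rules handled by inverting the syntax-directed big-step rules and the four context cases by the induction hypothesis, using the substitution-composition identity $\term{\{\sigma N/x\}(\sigma M)}=\term{\sigma(\{N/x\}M)}$ under capture-avoidance. Your handling of the abstraction context by extending $\sigma$ with a fresh $x\mapsto\term P$ is just a cleaner phrasing of the paper's step of writing the popped term as $\term{\sigma P}$ with $\sigma$ not binding in $\term P$, and the rest matches case for case.
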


The main theorem connecting weak typing, spine normalization, and machine termination is then the following.

%--------------- THEOREM: weak typing <=> termination <=> spine normalization
\begin{theorem}[Weak types characterize termination and spine normalization]
\label{thm:weak}
For a term $\term M$ the following are equivalent:
\begin{enumerate}

\item\label{case:w} $\term M$ is weakly typeable $\term{G |w M:t}$,
\item\label{case:s} $\term M$ is spine-normalizing,
\item\label{case:t} $\term M$ is machine-terminating: there are memories $S_A$, $T_A$ and a substitution map $\sigma$ such that $\eval{S_A}{\sigma M}{T_A}$.

\end{enumerate}
\end{theorem}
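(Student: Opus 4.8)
The plan is to prove the three-way equivalence by establishing a cycle of implications, drawing together the lemmas already proved in Sections~\ref{sec:machine termination} and~\ref{sec:spine}. Most of the work has been done; this theorem is primarily an assembly of those results, with care taken to bridge the one genuine gap identified in the discussion preceding the statement.

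\medskip
\noindent\textbf{Implication \ref{case:w} $\Rightarrow$ \ref{case:s}.} This is immediate from Theorem~\ref{thm:types <=> spine}, which already states that a term is weakly typeable if and only if it is spine-normalizing. So I would simply invoke it.

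\medskip
\noindent\textbf{Implication \ref{case:s} $\Rightarrow$ \ref{case:t}.} This is where the real content lies, and I expect it to be the main obstacle --- or rather, the place where the earlier lemmas must be deployed precisely. Given that $\term M$ is spine-normalizing, it has a spine normal form $\term\WW$, reached by a finite spine reduction $\term M\rw_\spine^*\term\WW$. By Lemma~\ref{lem:spine-normal => runs}, there is a dimension $d$ and, for any memory $S_A$ and substitution map $\sigma$ of dimension $d$, a memory $T_A$ with $\eval{S_A}{\sigma\WW}{T_A}$; concrete witnesses $S_A$ and $\sigma$ of dimension $d$ always exist (each stack can be filled with finitely many terms of the form $\term{[\,]\dots.*}$ of the required dimension, and $\sigma$ sends each free variable to such a term), so evaluation of $\term{\sigma\WW}$ succeeds. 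I would then transport this termination back along the reduction: since $\term M\rw_\spine^*\term\WW$, repeated application of Lemma~\ref{lem:spine reduction preserves termination} (once per spine step, using that evaluation is invariant in both directions under a single spine step) gives $\eval{S_A}{\sigma M}{T_A}$ for the same $S_A$, $\sigma$, and $T_A$. This establishes machine termination for $\term M$.

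\medskip
\noindent\textbf{Implication \ref{case:t} $\Rightarrow$ \ref{case:w}.} Suppose $\eval{S_A}{\sigma M}{T_A}$ for some $S_A$, $T_A$, and $\sigma$. By Theorem~\ref{thm:termination => types}, $\term{\sigma M}$ is weakly typeable (from $\eval[n]{S_A}{\sigma M}{T_A}$ we obtain $\term{|w_n ??{S_A}{\sigma M}\e:\e => []^f}$, and hence in particular a typing $\term{G' |w \sigma M : !!i => []^f}$ for the term itself). It remains to pull typeability of $\term{\sigma M}$ back to typeability of $\term M$, i.e.\ to undo the substitution map. I would handle this by a straightforward induction on the support of $\sigma$, using the substitution Lemma~\ref{lem:substitutionWeak} in its expansion direction: if $\term{\{N/x\}M'}$ is typeable then so is $\term{M'}$ together with a typing of $\term N$ for the variable $\term x$, so each substituted variable can be reintroduced one at a time while retaining a valid (weak) typing. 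Applying this finitely often recovers a derivation $\term{G |w M : t}$, completing the cycle.

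\medskip
\noindent The only subtle point, and the step I would treat most carefully, is the backward transport of termination along spine reduction in \ref{case:s} $\Rightarrow$ \ref{case:t}: Lemma~\ref{lem:spine reduction preserves termination} is stated for a single spine step modulo a substitution map $\sigma$, so I must check that iterating it over the finitely many steps of $\term M\rw_\spine^*\term\WW$ keeps the same $\sigma$ (it does, since $\sigma$ is fixed once $d$ is chosen from $\term\WW$, and the lemma quantifies over all $\sigma$ uniformly). With that observed, the equivalence follows as a clean cycle \ref{case:w}~$\Rightarrow$~\ref{case:s}~$\Rightarrow$~\ref{case:t}~$\Rightarrow$~\ref{case:w}.
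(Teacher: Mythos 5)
Your proposal is correct and follows essentially the same route as the paper's proof: \ref{case:w}~$\Leftrightarrow$~\ref{case:s} via Theorem~\ref{thm:types <=> spine}, \ref{case:s}~$\Rightarrow$~\ref{case:t} via Lemma~\ref{lem:spine-normal => runs} transported backwards along the spine reduction by Lemma~\ref{lem:spine reduction preserves termination}, and \ref{case:t}~$\Rightarrow$~\ref{case:w} via Theorem~\ref{thm:termination => types} followed by the expansion direction of Lemma~\ref{lem:substitutionWeak}. Your two explicit checks---that memories and substitution maps of dimension $d$ are actually inhabited, and that the one-step invariance lemma iterates with the same fixed $\sigma$---are details the paper leaves implicit, and they hold just as you argue.
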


\begin{proof}
By Theorem~\ref{thm:types <=> spine} statements~\ref{case:w} and~\ref{case:s} are equivalent. From statement~\ref{case:s} to~\ref{case:t}, if $\term M$ spine normalizes to $\term\WW$ then Lemma~\ref{lem:spine-normal => runs} gives memories $S_A$ and $T_A$ and a substitution map $\sigma$ such that $\eval{S_A}{\sigma\WW}{T_A}$, and since machine evaluation is reflected by spine reduction (Lemma~\ref{lem:spine reduction preserves termination}), also $\eval{S_A}{\sigma M}{T_A}$. From~\ref{case:t} to~\ref{case:w}, by Theorem~\ref{thm:termination => types} for $\eval{S_A}{\sigma M}{T_A}$ we get $\term{|w ??{S_A}{\sigma M}\e : \e=>[]^f}$ where $f=|T_A|$. The typing rule for states gives $\term{|w ?{S_A}:!!i}$ and $\term{|w \sigma M: !!i => []^f}$ for some $\type{!!i}$. By the weak substitution lemma (Lemma~\ref{lem:substitutionWeak}), since $\term{\sigma M}$ is typeable so is $\term M$, and we have $\term{G |w M: !!i => []^f}$ for some $\Gamma$.
\end{proof}

Note that the above theorem resolves the problem of uninhabited types as follows: if $\term M$ is typed $\term{G |w M:t}$ where either $\Gamma$ or $\type{t}$ contains an uninhabited (input) type, then (by~\ref{case:w} $\Rightarrow$~\ref{case:s}) it has a spine normal form, which in turn (by~\ref{case:s} $\Rightarrow$~\ref{case:t}) evaluates on the machine, which then (by~\ref{case:t} $\Rightarrow$~\ref{case:w}) guarantees that $\term M$ may also be typed with only inhabited types.

%------------------------------------------------------------ STRONG NORMALIZATION
\section{Strong Normalization}
\label{sec:SN}

Where the weak type system characterizes machine evaluation and spine normalization, the strong type system characterizes strong normalization (SN). This section will prove one direction, that normalization reduces the weight of strong type derivations and, hence, that strong quantitative types guarantee SN and give upper bounds to the length of reductions. 

We start with the substitution lemma for strong types. The substitution typing rule can be eliminated, maintaining or reducing weight, and introduced, maintaining weight. The inequality is due to the weakening rule~$\RR W$: substitution into a weakened part of the context erases the typing information of a term, maintaining only the types for its context.

%--------------- LEMMA: strong substitution
\begin{restatable}[Substitution, strong]{lemma}{substitutionStrong}
\label{lem:strong substitution}
If there are derivations $\term{G |s_n N:n}$ and $\term{x:n , D |s_m M:t}$ then there is a derivation $\term{G+D |s_k \{N/x\}M :t}$ where $k\leq n+m$, and vice versa where $k=n+m$.
\end{restatable}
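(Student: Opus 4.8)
The goal is to prove that $\term{G |s_n N:n}$ and $\term{x:n , D |s_m M:t}$ give $\term{G+D |s_k \{N/x\}M :t}$ with $k\leq n+m$, and conversely with $k=n+m$. My plan is to prove this as a structural induction on the derivation of $\term{x:n , D |s_m M:t}$ for the forward (elimination) direction, and a structural induction on the derivation of $\term{G+D |s_k \{N/x\}M:t}$ for the backward (introduction) direction. The weak version (Lemma~\ref{lem:substitutionWeak}) will serve as the template, and the only genuinely new ingredient is the weakening rule $\RR W$ together with the extra witness premise $\term{E |s N:t}$ in the strong application rule $\RR A$.

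\begin{proof}[Proof sketch]
First I would set up the forward direction by induction on the derivation $\term{x:n , D |s_m M:t}$, analysing the last rule. The base case is the variable rule $\RR x$, where $\term M=\term x$ forces $\type n=\type{[t]}$ and $\type m$ to be the weight of typing $\term x:\type t$; substitution replaces $\term x$ by $\term N$, and I recover exactly the given derivation of $\term{G |s_n N:t}$ with $\type n = \type{[t]}$, so $k=n$ matches $k=n+m$ here since $m$ counts no rule instances for a variable (weight $0$). The inductive cases for abstraction $\RR l$, sequencing $\RR ;$, unit $\RR *$, and collection $\RR c$ proceed by splitting the collection type $\type n$ and the context $\term G$ across the premises (using that collection types are multisets and contexts sum pointwise), applying the induction hypothesis to each premise, and reassembling; these are routine and weight-additive, giving $k=n+m$. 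The strong application rule $\RR A$ is handled the same way, but now I must distribute the slice of $\type n$ used for $\term x$ across \emph{three} premises, including the witness premise, again splitting the multiset $\type n$ accordingly.

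The crux is the weakening case $\RR W$, and this is where the inequality $k\leq n+m$ (rather than equality) enters. If the last rule is $\term{G'+D' |s_m M:t}$ from $\term{G' |s_m M:t}$, and the variable $\term x$ lies in the weakened part $\term{D'}$ (so $\term M$ does not actually use the type information for $\term x$ recorded in $\type n$), then substituting $\term N$ for $\term x$ erases the contribution of $\term N$ entirely: I keep the derivation of $\term M$ and only need to absorb the context $\term G$ of $\term N$ by a further weakening, obtaining weight $k=m < n+m$ whenever $n>0$. I expect this to be the main obstacle, precisely because I must track carefully \emph{which} part of the multiset $\type n$ assigned to $\term x$ survives into the non-weakened premise and which part is discarded, and argue that the discarded part corresponds to a strictly positive weight in the derivation of $\term N$. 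The converse (introduction) direction is an induction on the derivation of $\term{G+D |s_k \{N/x\}M:t}$, identifying the occurrences of $\term N$ created by the substitution; here no information is lost, so every occurrence contributes and the weights add up exactly to $k=n+m$, with the witness premise of $\RR A$ supplying the separately-typed copy of $\term N$ required when $\term x$ occurs in argument position.
\end{proof}
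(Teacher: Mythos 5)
Your proposal takes essentially the same route as the paper's proof: an induction in the style of the weak substitution lemma (rule permutation with multiset splitting), where the only genuinely new cases are strong application, handled by distributing the collection type and contexts across all three premises including the witness, and weakening, handled by discarding the sub-derivations of $\term{N}$ corresponding to the weakened-in portion of $\term x$'s type and absorbing their contexts by a further weakening---exactly the paper's source of the inequality $k\leq n+m$. One small caution: the discarded part need not have strictly positive weight (e.g.\ when $\term N$ is a variable its derivation has weight $0$), but since the lemma claims only $k\leq n+m$ and your argument does not depend on strictness, nothing breaks.
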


The following lemma gives preservation of typing for beta-reduction, reducing weights.

%--------------- LEMMA: subject reduction, beta
\begin{restatable}[Subject reduction, beta]{lemma}{subjectReductionBeta}
\label{lem:beta subject reduction}
If $\term M\rw\term{N}$ by a beta-step and $\term{G |s_m M:t}$ then $\term{G |s_n N:t}$ where $m>n$.
\end{restatable}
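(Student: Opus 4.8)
The plan is to prove this by induction on the structure of the beta-redex's position, i.e.\ by induction on the reduction context $\term C$ in which the beta-rule $\term{[N]a.a<x>.M}\rw\term{\{N/x\}M}$ fires. The base case is the redex itself at the root, and the inductive cases close the statement under each context former. The crux lies entirely in the base case: I would analyze a strong typing derivation of $\term{[N]a.a<x>.M:t}$ and show that contracting the redex yields a derivation of $\term{\{N/x\}M:t}$ with strictly smaller weight, after which the contextual closure is routine.

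For the base case, suppose $\term{G |s_m [N]a.a<x>.M : !!k => !!m'}$. Reading the strong application rule $\RR A$ bottom-up, this derivation splits the context as $\term{G = G_1+G_2+G_3}$ with premises $\term{G_1 |s N:i}$, $\term{G_2 |s a<x>.M : a(i)\,!!k => !!m'}$, and the witness $\term{G_3 |s N:t'}$ for some $\type{t'}$. Inverting the abstraction rule $\RR l$ on the middle premise gives $\term{G_2, x:i |s M : !!k => !!m'}$. Now I apply the strong substitution lemma (Lemma~\ref{lem:strong substitution}) to $\term{G_1 |s_{n_1} N:i}$ and $\term{G_2, x:i |s_{m_0} M : !!k => !!m'}$, obtaining $\term{G_1+G_2 |s_{k} \{N/x\}M : !!k => !!m'}$ with $k\leq n_1+m_0$. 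Since $\term{G_3}$ carried the additional witness typing of $\type{N}$, I expect to recover the full context $\term G$ using the weakening rule $\RR W$, absorbing $\term{G_3}$ into the context of the substituted term. The strict decrease in weight follows by bookkeeping: the redex derivation pays for the application instance and the abstraction instance (two rule instances counted in the weight), both of which disappear on the right-hand side, so even accounting for the $k\leq n_1+m_0$ inequality from substitution the net weight strictly drops.

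For the inductive step, I would show that each reduction-context former preserves the strict weight decrease. Given $\term M = \term{C\{P\}}\rw\term{C\{Q\}} = \term N$ where $\term P\rw\term Q$ is the beta-step and the inductive hypothesis supplies a weight decrease for the $\term P\to\term Q$ contraction, I inspect the typing rule at the head of $\term C$. In each case—abstraction $\RR l$, the two application positions, sequencing $\RR ;$, and collection $\RR c$—the typing derivation of $\term{C\{P\}}$ has a subderivation typing $\term P$, which by the inductive hypothesis can be replaced by a lower-weight derivation of $\term Q$ of the same type; the surrounding rule instances are unaffected and simply re-add their fixed contribution, so the total weight still strictly decreases. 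The one position requiring care is the argument position $\term{[C]a.M}$ of the strong application rule, where $\term P$ is typed \emph{twice} (once via the first premise with collection type $\type i$, once via the witness premise), so both subderivations must be replaced; I would appeal to the inductive hypothesis for each, noting the collection case via $\RR c$ reduces to the single-type case componentwise.

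The main obstacle I anticipate is the interaction between substitution and weakening in the base case, precisely the inequality $k\leq n+m$ flagged in Lemma~\ref{lem:strong substitution}: because substituting into a weakened part of the context discards a term's typing information, the naive weight accounting could in principle fail to decrease. The resolution is that the strong application rule's third (witness) premise guarantees $\term N$ is genuinely typed and contributes positively to the redex's weight, and the two rule instances ($\RR A$ and $\RR l$) eliminated by the contraction supply a fixed surplus of at least two; I would verify that this surplus strictly dominates any slack introduced by the substitution inequality, so that $m>n$ holds in every case. Handling the empty-collection subcase of the first premise (where $\type i=\type{[]}$ and $\term N$ is used zero times, surviving only through the witness) is the sharpest point, and I would check it explicitly to confirm the strict inequality is not lost there.
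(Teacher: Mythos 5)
Your proposal is correct and takes essentially the same route as the paper's own proof: induction on the reduction context, with the redex case handled by inverting the strong application and abstraction rules, applying the strong substitution lemma, and absorbing the witness context via weakening, so that the two eliminated rule instances provide the fixed surplus guaranteeing the strict weight decrease despite the substitution inequality. Your careful treatment of the argument position $\term{[C]a.M}$ --- replacing every subderivation of the argument, witness included, and checking the empty-collection subcase where only the witness types the argument --- is exactly the paper's argument-context case.
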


While we have subject reduction, note that we do not have subject \emph{expansion}. In the reduction $\term{[N]a.a<x>.M}\rw\term{\{N/x\}M}$, if $x$ is not free in $\term M$, there is no typing information for $\term N$ in the reduct, but this is required for the strong application rule in the redex. This familiar situation presents an obstacle in demonstrating that all strongly normalising terms can be typed, covered in the next section.

For the remaining reduction rules, we do have subject expansion as well as reduction. Recall that the \emph{next} reduction, $\term{*;M}\rw\term M$, reduces weights, while the other non-beta reduction rules maintain weights. It would be possible to use an adjusted weight that counts \emph{only} application and abstraction typing rules, to give exact weights for non-beta subject reduction, since non-beta rules are inherently terminating by Proposition~\ref{prop:non-beta SN}.

%--------------- LEMMA: non-beta reduction
\begin{restatable}[Subject reduction and expansion, non-beta]{lemma}{strongNonBeta}
\label{lem:non-beta subject reduction}
If $\term M\rw\term N$ by a non-beta-step then $\term{G |s_m M:t}$ if and only if $\term{G |s_n N:t}$ where $m\geq n$.
\end{restatable}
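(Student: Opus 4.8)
The plan is to prove Lemma~\ref{lem:non-beta subject reduction} by a case analysis on the five non-beta reduction rules, combined with a standard context-closure argument. The statement is an equivalence with a weight inequality, so I would treat each rule by exhibiting how a derivation of the redex is transformed into a derivation of the reduct and vice versa, tracking the weight in each direction. Since the reduction relation is closed under arbitrary contexts, I would first reduce to the root case: assuming the equivalence holds when $\term M\rw\term N$ is a root rewrite (i.e.\ $\term M$ is itself the redex), the general case $\term{C\{M\}}\rw\term{C\{N\}}$ follows by a routine induction on the reduction context $\term C$, since every typing rule is compositional and each context former corresponds to exactly one rule that preserves weight while composing the subderivations. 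I would note that the weakening rule $\RR W$ commutes freely past this analysis, so the inequality $m\geq n$ (rather than equality) is an artefact of weakening being insertable on either side.

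The core of the argument is the four non-beta root cases. For \emph{passage}, $\term{[N]b.a<x>.M}\rw\term{a<x>.[N]b.M}$ with $a\neq b$ and $x\notin\fv N$, both sides decompose via one application rule and one abstraction rule; I would show that a derivation of one side can be rearranged into a derivation of the other by swapping the order of these two rules, which is legitimate precisely because the side-conditions guarantee the type of $\term N$ does not mention $x$ and the locations $a,b$ are distinct so the memory-type components do not interfere. The weight is preserved exactly. For the two \emph{prefix} rules and \emph{associate}, the redex and reduct are built from the same subderivations of $\term P$, $\term N$, $\term M$ recombined through the sequencing rule $\RR;$ and either an abstraction, application, or a second sequencing; in each case I would match the memory types $\type{!!i},\type{!!k},\type{!!m}$ threaded through the composition and check that the same multiset of context contributions $\term{G+D}$ (or $\term{G+D+E}$) appears on both sides, again with weight preserved exactly. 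The \emph{next} case $\term{*;M}\rw\term M$ is the only weight-decreasing one: the redex uses the unit rule $\RR*$ and the sequencing rule $\RR;$, contributing two to the weight, whereas the reduct is just a derivation of $\term M$; I would observe that the unit rule forces the left input and output memory types to coincide, so the sequencing rule composes trivially and the two derivations have the same conclusion, establishing both reduction and expansion with $m=n+2$, hence $m\geq n$.

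For each case I would verify the equivalence in both directions simultaneously, since the rearrangements are invertible: the same swap of rule instances that turns a redex derivation into a reduct derivation also runs backwards. The side-conditions ($x\notin\fv N$ for passage and pop-prefix, $a\neq b$ for passage) are exactly what guarantees invertibility, so subject expansion is as immediate as subject reduction here, in contrast with the beta case of Lemma~\ref{lem:beta subject reduction}.

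The main obstacle I anticipate is the careful bookkeeping around the strong application rule $\RR A$ and the weakening rule $\RR W$. The strong application rule carries a \emph{third} premise $\term{E |s N:t}$ that witnesses typeability of the argument independently; in the passage and push-prefix cases, where an application is moved past another construct, I must ensure this witness premise is carried along intact and that the context contributions $\term E$ are accounted for in the $\term{G+D+E}$ bookkeeping on both sides. The weakening rule is the source of the inequality rather than equality: a derivation may insert weakening at a point that the transformation cannot preserve positionally, so when reconstructing the reduct I may need to absorb a weakening step, which can only drop weight, never add it. Managing the interaction of these two strong-specific rules with the rule-swapping in the passage and associate cases, while keeping the weight accounting tight enough to yield $m\geq n$ in both directions, is where the real care is required; the remaining arithmetic on memory types is routine.
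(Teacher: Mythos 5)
Your root-case analysis essentially matches the paper's: the paper proves this lemma by reusing the case analysis of Lemma~\ref{lem:spine reduction} and singles out exactly one new top-level case, namely Prefix (push), where the witness premise of the strong application rule $\RR A$ must be carried across the sequencing rule --- a point you correctly anticipate. The genuine gap is in your context-closure step. You claim that each context former corresponds to exactly one rule, so that the derivation of the term in the hole occurs exactly once and the induction on contexts is routine. That observation is imported from the weak lemma, where it is valid only because \emph{spine} contexts exclude the argument position; the present lemma is about full reduction, and for the argument context $\term{[C]a.M}$ the claim is false in the strong system. The first premise of $\RR A$ types the argument with a collection type, introduced by the collection rule $\RR c$, so a derivation of $\term{C\{M\}}$ contains one sub-derivation of the hole term for each element of the collection, plus one more for the witness premise --- any non-zero number in total. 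This is precisely the case the paper isolates as new for context closure: one must replace \emph{every} occurrence, and since each replacement has equal or lower weight by the inductive hypothesis, the weights sum to give $m\geq n$. As written, your induction would fail at exactly this case; the repair is straightforward, but it is a real omission rather than bookkeeping.

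A second, conceptual error: you twice attribute the inequality $m\geq n$ to the weakening rule $\RR W$, saying that absorbing a weakening ``can only drop weight''. But the weight of a derivation counts only abstraction, application, sequencing, and unit rule instances; $\RR W$ (like $\RR c$ and the variable rule) contributes nothing, so inserting or absorbing weakenings is weight-neutral. The slack in $m\geq n$ comes from the Next case $\term{*;M}\rw\term M$, where $m=n+2$ --- which you also correctly identify, contradicting your weakening explanation --- and, once the context closure is done properly, from the fact that a weight-decreasing step inside an argument is accounted once per occurrence of the argument's sub-derivation.
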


We conclude this section with the theorem that typed terms are strongly normalizing.

%--------------- THEOREM: strong types => SN
\begin{theorem}[Strong normalization]
\label{thm:SN}
If $\term{G |s M:t}$ then $\term M$ is strongly normalizing.
\end{theorem}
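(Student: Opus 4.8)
The plan is to prove strong normalization by a standard measure-decrease argument using the weight of the strong type derivation, but the subtlety is that not every reduction step strictly decreases the weight. By Lemma~\ref{lem:beta subject reduction}, beta-steps strictly reduce the weight of a strong typing derivation, and by Lemma~\ref{lem:non-beta subject reduction}, non-beta steps preserve typing with $m \geq n$, so they are non-increasing but not necessarily strictly decreasing. A naive induction on weight alone therefore fails, because an infinite reduction could in principle consist of non-beta steps that keep the weight constant.

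First I would record the two facts that resolve this tension. The weight gives a global bound on the number of beta-steps: along any reduction sequence, each beta-step drops the weight by at least one (Lemma~\ref{lem:beta subject reduction}), and since weights are natural numbers and non-beta steps never increase the weight, there can be only finitely many beta-steps in total, bounded by the initial weight. Separately, by Proposition~\ref{prop:non-beta SN}, non-beta reduction on its own is strongly normalizing, so there can be no infinite sequence consisting purely of non-beta steps.

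The key step is then to combine these into a lexicographic argument ruling out any infinite reduction from a typed term $\term M$. I would argue by contradiction: suppose there were an infinite reduction path starting from $\term M$. By the weight bound it can contain only finitely many beta-steps, so from some point onward the path consists entirely of non-beta steps; this is an infinite non-beta reduction, contradicting Proposition~\ref{prop:non-beta SN}. Hence $\term M$ is strongly normalizing. Made fully explicit, the measure is the lexicographic pair consisting of the weight (which strictly decreases on beta-steps and is non-increasing on non-beta steps, by Lemmas~\ref{lem:beta subject reduction} and~\ref{lem:non-beta subject reduction}) together with the non-beta termination measure from Proposition~\ref{prop:non-beta SN} (which strictly decreases on every non-beta step and is reset, but bounded, across beta-steps); every reduction step strictly decreases this pair, so the relation is well-founded and $\term M$ admits no infinite reduction.

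The main obstacle to watch is the bookkeeping in the lexicographic combination: one must be careful that a beta-step, while decreasing the weight, may increase the secondary non-beta measure, so the two components genuinely need to be ordered lexicographically with the weight dominant, and one must verify that the weight is a well-defined natural number preserved appropriately by subject reduction across both kinds of steps. Since the two supporting lemmas and Proposition~\ref{prop:non-beta SN} supply exactly the monotonicity facts required, the remaining work is the routine verification that every step strictly decreases the lexicographic measure, from which well-foundedness and hence strong normalization of $\term M$ follow.
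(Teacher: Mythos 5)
Your proposal is correct and follows essentially the same route as the paper: the paper's proof likewise combines Lemma~\ref{lem:beta subject reduction} (beta strictly reduces weight), Lemma~\ref{lem:non-beta subject reduction} (non-beta is non-increasing), and Proposition~\ref{prop:non-beta SN} (non-beta reduction is SN) to conclude that every reduction path is finite. Your explicit lexicographic measure and the finitely-many-beta-steps contradiction argument are just a more detailed spelling-out of the paper's two-sentence version of the same idea.
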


\begin{proof}
By \emph{subject reduction}, Lemmata~\ref{lem:beta subject reduction} and~\ref{lem:non-beta subject reduction}, reduction preserves types while maintaining or reducing weights. Since beta-steps strictly reduce the weight of a derivation and non-beta-steps are strongly normalizing by Proposition~\ref{prop:non-beta SN}, all reduction paths must be finite.
\end{proof}

%------------------------------------------------------------ PERPETUAL EVALUATION
\section{Perpetual Evaluation}
\label{sec:perp}

For the reverse direction, that all strongly normalizing terms are typeable, the challenge is that of \emph{weakened} terms, the argument terms $\term N$ in a reduction $\term{[N]a.a<x>.M}\rw\term{\{N/x\}M}$ where $x$ is not free in $\term M$. Reasoning by beta-expansion, to derive SN for $\term{[N]a.a<x>.M}$ we need termination information for $\term N$, which the reduct $\term{\{N/x\}M}=\term M$ does not immediately provide. This problem constitutes the essential difference between the $\lambda I$- and $\lambda K$-calculus (see~\cite{Barendregt-1984}) and is familiar from many strong normalization arguments, in particular those deriving SN from weak normalization by maintaining the weakened term $\term N$ in the term structure~\cite{Klop-1980,Nederpelt-1973} and those using perpetual reduction strategies~\cite{Bergstra-Klop-1982,Raamsdonk-Severi-Sorensen-Xi-1999}.

The present approach will be a variant of perpetual reduction. We will see a \emph{perpetual evaluation} relation $\term M\rp\term\MM$ which relates a term $\term M$ to its normal form $\term\MM$, that in the case of a redex $\term{[N]a.a<x>.M}$ not only evaluates $\term{\{N/x\}M}$ but in addition requires an evaluation $\term N\rp\term{\MM'}$. This definition, as a big-step evaluation relation instead of a strategy, avoids maintaining weakened terms in the syntax, either as redexes or explicit substitutions.

The proof then becomes straightforward: strong normalizability of $\term M$ implies perpetual evaluation $\term M\rp\term\MM$, implies strong typeability $\term{G |s M:t}$. As before, we begin by observing the shape of normal forms.

%--------------- PROPOSITION: normal forms
\begin{restatable}[Normal forms]{proposition}{normalforms}
The normal forms of $\rw$ are the terms $\term\MM$ given by the following grammars.
\[
\begin{array}{l@{~}l@{~}r@{~}l@{~}l}
	\term{\MM} &\Coloneqq&   \term{a<x>.\MM} &\mid& \term{\NN}
\\	\term{\NN} &\Coloneqq& \term{[\MM]a.\NN} &\mid& \term{x;\MM} ~\mid~ \term x ~\mid~ \term*
\end{array}
\qquad
\]
\end{restatable}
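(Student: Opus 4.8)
The plan is to establish the two inclusions separately: that every term generated by the grammar is a $\rw$-normal form, and conversely that every $\rw$-normal form is generated by the grammar. Throughout I would use that $\rw$ is closed under all reduction contexts, so a term is normal exactly when neither it nor any of its subterms is a redex; in particular every subterm of a normal form is again normal.

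For the first inclusion I would argue by mutual induction on the grammars for $\term{\MM}$ and $\term{\NN}$. The only reduction rules that fire at the root of a push $\term{[N]a.M}$ are \emph{Beta} and \emph{Passage}, and both require the continuation $\term M$ to be an abstraction; since in the production $\term{[\MM]a.\NN}$ the continuation is a $\term{\NN}$, which is never an abstraction, no root redex arises. Likewise \emph{Next}, \emph{Prefix}, and \emph{Associate} fire only at the root of a sequencing, and the only sequencing produced by the grammar is $\term{x;\MM}$, whose left component is a variable, so none of these rules applies; and no rule fires at the root of an abstraction, a bare variable, or $\term*$. The induction hypothesis then excludes redexes inside the immediate subterms, so each grammar term is normal.

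For the converse I would proceed by structural induction on an arbitrary normal form $\term P$, splitting on its head constructor. The variable and unit cases are immediate, and an abstraction $\term{a<x>.M}$ follows from the induction hypothesis applied to the (normal) subterm $\term M$. The informative cases are push and sequencing. For a push $\term{[N]a.M}$, the induction hypothesis makes $\term N$ a $\term{\MM}$; if $\term M$ were an abstraction $\term{a'<y>.M'}$ then the term would be a \emph{Beta}-redex when $a'=a$ and a \emph{Passage}-redex when $a'\neq a$ (the side condition $y\notin\fv N$ holding after alpha-renaming), contradicting normality, so $\term M$ is a non-abstraction $\term{\MM}$, hence a $\term{\NN}$, and the whole term matches $\term{[\MM]a.\NN}$. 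For a sequencing $\term{M;N}$, a case analysis on the head of $\term M$ shows that $\term M=\term*$ gives a \emph{Next}-redex, $\term M$ an abstraction or push gives a \emph{Prefix}-redex, and $\term M$ a sequencing gives an \emph{Associate}-redex; the sole surviving case is $\term M=\term x$, which matches $\term{x;\MM}$.

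I expect the main subtlety to lie not in any individual step but in correctly tracking the two-level grammar: the separation of $\term{\MM}$ from $\term{\NN}$ is precisely what records that the continuation of a push can never begin with an abstraction, which is the structural content of the \emph{Beta} and \emph{Passage} rules taken together. The only other point requiring care is the handling of the free-variable side conditions on \emph{Passage} and \emph{Prefix (pop)}, which never obstruct a reduction because the offending bound variable can always be alpha-renamed out of the relevant free-variable set.
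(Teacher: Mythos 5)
Your proposal is correct and takes essentially the same route as the paper: a structural induction on terms with case analysis on the head constructor, using the two-level grammar to record that $\term{\NN}$ is never an abstraction (the paper merely combines your two inclusions into a single ``if and only if'' induction, as in its proof of the spine-normal-form proposition). Your explicit remark that the free-variable side conditions on \emph{Passage} and \emph{Prefix (pop)} are discharged by alpha-renaming is a point the paper leaves implicit, and is handled correctly.
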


Perpetual evaluation will reduce terms to normal form by \emph{iterated weak head reduction}. That is, terms are reduced in a context of applications, a \emph{weak head context}, which functions similarly to the memory of the abstract machine. Where the machine would reach a \emph{final} or \emph{failure} state, reduction continues (i.e.\ is \emph{iterated}) for the remaining subterms, which on the machine would constitute the memory and continuation stack.

We will denote a term $\term M$ in weak head context $\term{[N_1]a_1..[N_n]a_n.M}$ as $\term{[S_A].M}$, where $S_A=\e\msep(\term{N_1})_{a_1}\cdots(\term{N_n})_{a_n}$ is the memory corresponding to the sequence of applications $\term{[N_1]a_1}$ through $\term{[N_n]a_n}$. Note that the notation $\term{[S_A]}$ is not a function on $S_A$, since multiple application sequences may correspond to the same memory (to be exact: those modulo $\term{[P]b.[N]a.M}\sim\term{[N]a.[P]b.M}$ where $a\neq b$). Typing, however, collapses these distinctions again: a term $\term{[S_A].*}$ will have the type $\type{\e => !!i}$ where $S_A:\type{!!i}$.

%--------------- DEF: perpetual evaluation
\begin{definition}[Perpetual evaluation]
\label{def:perpetual}
The \emph{perpetual evaluation} relation $\term M\rp\term\MM$ between terms and normal forms is given inductively by the rules in Figure~\ref{fig:perpetual}.
\end{definition}

%--------------- FIG: perpetual evaluation
\begin{figure}
\[
\begin{array}{@{}l@{\hspace{-25pt}}ccr@{}}
\text{Beta} & %\multicolumn{2}{c}{
  \vc{\infer
    {\term{[S_A].[N]a.a<x>.M} \rp \term\MM}
    {\term{[S_A].\{N/x\}M} \rp \term\MM && \term N\rp\term{\MM'}} }
  & \vc{\infer{\term{[S_A].[N]b.a<x>.M}\rp\term\MM}{\term{[S_A].a<x>.[N]b.M}\rp\term\MM}}
  & \text{Passage}
\\ \\
\text{Next}
  & \vc{\infer{\term{[S_A].(*;M)} \rp \term\MM}{\term{[S_A].M}\rp \term\MM}}
  & \vc{\infer{\term{[S_A].(a<x>.N ; M)} \rp \term\MM}{\term{[S_A].a<x>.(N ; M)}\rp \term\MM}}
  & \text{Prefix (pop)}
\\ \\
\text{Associate}
  & \vc{\infer{\term{[S_A].((P;N) ; M)} \rp \term\MM}{\term{[S_A].(P ; (N;M))}\rp \term\MM}}
  & \vc{\infer{\term{[S_A].([P]a.N ; M)} \rp \term\MM}{\term{[S_A].[P]a.(N ; M)}\rp \term\MM}}
  & \text{Prefix (push)}
\\ \\
\text{Normal (abstraction)} &
	\multicolumn{3}{c}{\vc{\infer{\term{a<x>.M}\rp\term{a<x>.\MM}}{\term M\rp\term \MM}}}
\\ \\
\text{Normal (unit)} &
	\multicolumn{3}{c}{\vc{\infer{\term{[N_1]a_1..[N_n]a_n.*} \rp \term{[\MM_1]a_1..[\MM_n]a_n.*}}{(\term{N_i}\rp\term{\MM_i})_{i\leq n}}}}
\\ \\
\text{Normal (variable)} &
	\multicolumn{3}{c}{\vc{\infer{\term{[N_1]a_1..[N_n]a_n.x} \rp \term{[\MM_1]a_1..[\MM_n]a_n.x}}{(\term{N_i}\rp\term{\MM_i})_{i\leq n}}}}
\\ \\
\text{Normal (sequence)} &
  \multicolumn{3}{c}{\vc{\infer
    {\term{[N_1]a_1..[N_n]a_n.(x;M)} \rp \term{[\MM_1]a_1..[\MM_n]a_n.(x;\MM)}}
    {(\term{N_i}\rp\term{\MM_i})_{i\leq n} && \term M\rp\term\MM} }}
\end{array}
\]
\caption{Perpetual evaluation}
\label{fig:perpetual}
\end{figure}

The salient features of perpetual evaluation are the following. The first six rules each correspond to a reduction rule in weak head context. Simultaneously, these can be seen to implement machine evaluation, where the context represents the memory. The beta-rule requires the additional normalization of the argument $\term N$, as discussed above. The remaining four rules cover four of the five cases of normal forms, with the fifth, for application, covered by the presence of weak head contexts. As discussed above, these cases represent final or failure states on the machine, where reduction needs to iterate on the subterms representing the remaining memory and continuation stack.

The following lemma then shows that a strongly normalizing term also has a finite perpetual evaluation. To avoid reference to confluence, it is phrased as a combination of two statements: one, that SN implies perpetual evaluation, and two, that the normal form given by perpetual evaluation is indeed a normal form of the original term.

%--------------- LEMMA: SN => perpetual
\begin{restatable}[Strong normalization implies perpetual evaluation]{lemma}{SNperpetual}
\label{lem:SN => perpetual}
If $\term M$ is strongly normalizing then $\term M\rp\term\MM$ and $\term M\rwn\term\MM$ for some $\term\MM$.
\end{restatable}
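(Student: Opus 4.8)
The plan is to prove the statement by well-founded induction on the lexicographic pair $(\ell(\term M),|\term M|)$, where $\ell(\term M)$ is the length of a longest $\rw$-reduction starting from $\term M$ and $|\term M|$ is the size of $\term M$. Since $\term M$ is strongly normalizing and the calculus is finitely branching (a finite term exhibits finitely many redexes), $\ell(\term M)$ is finite, so the measure is well-defined and the order well-founded. I establish the two conclusions $\term M\rp\term\MM$ and $\term M\rwn\term\MM$ simultaneously, so that each inductive step produces a single normal form $\term\MM$ witnessing both.

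\textbf{Progress.} The first task is to identify, for each $\term M$, the unique applicable rule of Figure~\ref{fig:perpetual}. I decompose $\term M$ as $\term{[S_A].H}$, where $\term{[S_A]}$ is the maximal leading block of pushes and $\term H$ is the first non-push constructor, and case on $\term H$. If $\term H$ is a variable or $\term*$, the Normal (variable) or Normal (unit) rule applies. If $\term H=\term{P;Q}$ is a sequencing, a sub-case on $\term P$ selects exactly one rule: Next, Prefix (pop), Prefix (push), or Associate according as $\term P$ is $\term*$, a pop, a push, or a sequencing, and Normal (sequence) when $\term P$ is a variable. If $\term H=\term{a<x>.P}$ is a pop, then either $\term{[S_A]}$ is empty and Normal (abstraction) applies, or the push $\term{[N]c}$ immediately preceding $\term H$ triggers Beta when $c=a$ and Passage when $c\neq a$. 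This analysis is exhaustive and deterministic, so it remains to check that the premises of the selected rule are strongly normalizing and strictly smaller in the measure, and then to reassemble.

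\textbf{Measure decrease and reassembly.} Two observations drive the decrease. First, the reduct premises---the single premise of Passage, Next, Prefix, and Associate, and the first premise of Beta---are each a one-step $\rw$-reduct of $\term M$; prefixing that step to any reduction out of the reduct yields a strictly longer reduction from $\term M$, so the reduct has strictly smaller $\ell$, and it is strongly normalizing as a reduct of an SN term. Second, the remaining premises---the bodies in the Normal rules and the separated argument $\term N$ in Beta---are \emph{proper subterms} of $\term M$; because reduction is closed under all contexts their reductions lift to $\term M$, so $\ell$ does not increase while $|\cdot|$ strictly drops, again a lexicographic decrease, and subterms of SN terms are SN. In each case the induction hypothesis supplies perpetual evaluations of the premises, which the selected rule recombines into $\term M\rp\term\MM$. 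The companion statement $\term M\rwn\term\MM$ follows for the reduction rules from $\term M\rw\term{M'}\rwn\term\MM$ applied to the reduct $\term{M'}$, and for the Normal rules by closing the subterm reductions under the surrounding push, pop, and sequencing contexts, the assembled $\term\MM$ being visibly a normal form.

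\textbf{Main obstacle.} The crux is the second premise of the Beta rule. After a head beta-step $\term{[S_A].[N]a.a<x>.P}\rw\term{[S_A].\{N/x\}P}$, the argument $\term N$ may disappear entirely from the reduct---exactly when $\term x$ does not occur free in $\term P$---so the reduction side of the statement never inspects it, whereas the rule still demands a separate evaluation $\term N\rp\term{\MM'}$. This is precisely the weakened-argument phenomenon that perpetual evaluation is designed to record, and its resolution is the size component of the measure: $\term N$ is a proper subterm, hence strongly normalizing and of strictly smaller size with non-increasing $\ell$, so the induction hypothesis supplies its evaluation even though reduction discards it. The remaining bookkeeping---the side conditions for Passage ($a\neq b$, and $\term x$ not free in $\term N$, available by the variable convention) and the verification that the decomposition meets each rule's left-hand side---is routine.
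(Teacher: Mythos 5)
Your proposal is correct and follows essentially the same route as the paper's own proof: induction on a lexicographic pair of a reduction-length measure and term size (the paper uses the sum of reduction path lengths where you use the longest path, an inessential variant), decomposition of $\term M$ as $\term{[S_A].M'}$ with $\term{M'}$ not an application, and the same exhaustive case analysis matching the rules of Figure~\ref{fig:perpetual}, with reduct premises decreasing the first component and subterm premises (notably the weakened Beta argument $\term N$) handled by the size component. Your identification of the discarded argument as the crux, resolved because it is a proper subterm with liftable reductions, is exactly the paper's implicit justification.
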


The definition of perpetual evaluation as an inductive relation means we may assign the same type to a term and its normal form simultaneously.

%--------------- LEMMA: perpetual => typeable
\begin{restatable}[Perpetual evaluation implies strong typeability]{lemma}{perpetualTyped}
\label{lem:perpetual => typed}
If $\term M\rp\term\MM$ then $\term{G |s M:t}$ and $\term{G |s \MM:t}$ for some $\Gamma$ and $\type t$.
\end{restatable}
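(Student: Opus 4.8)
The plan is to proceed by induction on the derivation of $\term M\rp\term\MM$, constructing in each case a single context $\Gamma$ and type $\type t$ that serve for both $\term M$ and $\term\MM$ \emph{simultaneously}, which is exactly what the inductive (rather than strategy-based) formulation of perpetual evaluation makes possible. The rules of Figure~\ref{fig:perpetual} fall into three groups, treated as follows.

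For the five \emph{non-beta} rules (Passage, Next, both Prefix rules, and Associate), the subject $\term M$ of the conclusion reduces to the subject $\term{M'}$ of the premise by a single non-beta step inside the weak head context $\term{[S_A]}$, i.e.\ $\term M\rw\term{M'}$. Since $\rw$ is closed under all contexts this is a genuine reduction step, so I would apply the induction hypothesis to the premise to get $\term{G |s M':t}$ and $\term{G |s \MM:t}$, and then invoke the expansion direction of Lemma~\ref{lem:non-beta subject reduction} to transfer the typing to $\term{G |s M:t}$, leaving the normal-form typing untouched.

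The Beta rule is the crux and the main obstacle, since subject expansion fails for beta: when $x\notin\fv M$ the reduct $\term{[S_A].\{N/x\}M}$ carries no typing information for $\term N$, yet the strong application rule $\rr A$ demands a witness $\term{E |s N:t'}$. This is precisely what the second premise $\term N\rp\term{\MM'}$ supplies: by the induction hypothesis it yields a derivation $\term{E |s N:t'}$. For the remainder I would apply the induction hypothesis to the first premise to obtain $\term{G |s [S_A].\{N/x\}M:t}$ together with $\term{G |s \MM:t}$, isolate within that derivation the typing of the head $\term{\{N/x\}M}$ lying beneath the applications of $\term{[S_A]}$, and use the \emph{vice versa} direction of the strong substitution lemma (Lemma~\ref{lem:strong substitution}) to split it into a collection typing $\term{G' |s N:n}$ and a typing $\term{x:n, D |s M:t'}$. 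Reassembling by the abstraction rule $\rr l$ and then the strong application rule $\rr A$, using the witness from the second premise for the third hypothesis, rebuilds $\term{[N]a.a<x>.M}$ with the same computation type as the reduct's head (consistently with Lemma~\ref{lem:beta subject reduction}); re-wrapping the unchanged applications of $\term{[S_A]}$ then yields $\term{[S_A].[N]a.a<x>.M}$ at type $\type t$, matching $\term\MM$.

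For the four \emph{Normal} rules I would build the derivation directly following the term structure, feeding in the induction hypotheses for the immediate subterms, which already guarantee matching types for $\term M$ and $\term\MM$. The abstraction case applies the rule $\rr l$ verbatim to the premise typing. For the unit, variable, and sequence cases the head ($\term *$, $\term x$, or $\term{x;M}$) is typed by the unit rule, the variable rule, or the sequencing rule with a freely chosen type for $\term x$, while each pushed argument $\term{N_i}$ is incorporated via the strong application rule: its induction-hypothesis computation type is lifted to a collection type through the collection rule $\rr c$ and reused as the witness. Because variable types and the outgoing memory types may be chosen freely, the successive applications can always be made to compose. In each case the identical construction applied to the normal-form subterms (which carry the same types by the induction hypothesis) produces the same $\Gamma$ and $\type t$ for $\term\MM$, establishing the simultaneous typing.
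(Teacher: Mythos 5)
Your proposal follows essentially the same route as the paper's proof: induction on the derivation of $\term M\rp\term\MM$, the expansion direction of Lemma~\ref{lem:non-beta subject reduction} for the five non-beta rules, the strong substitution lemma plus the witness supplied by the second premise for Beta, and a direct construction for the Normal rules. Two of your choices diverge harmlessly. Where the paper introduces admissible rules ($\RR S$ for weak head contexts $\term{[S_A].M}$ and $\RR O$ for memories) so that ``isolating the head beneath $\term{[S_A]}$'' becomes a one-line step, you perform the same surgery informally; this is acceptable, though since the derivation of $\term{[S_A].\{N/x\}M}$ interleaves strong application rules with their witnesses, the extraction is itself a small induction that merits the paper's explicit treatment. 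And in the Normal (unit, variable, sequence) cases you type each pushed argument $\term{N_i}$ with the singleton collection $\type{[t_i]}$, reusing the induction-hypothesis derivation as its own witness, whereas the paper types the arguments with the empty collection $\type{[]}$ and carries the witness separately (via its admissible $\type{[]^f}$ rules). Both compose correctly; yours merely produces larger types, and duplicates the context $\term{G_i}$ (once for the collection premise, once for the witness), which is unproblematic in a non-idempotent system since contexts simply sum.

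The one genuine gap is in the Beta case, on the simultaneity claim. The derivation you rebuild for $\term{[S_A].[N]a.a<x>.M}$ has its context enlarged by the context of the witness derivation for $\term N$ (the paper's $\term F$), while the derivation for $\term\MM$ handed to you by the induction hypothesis does not contain that context. As written, your conclusion ``at type $\type t$, matching $\term\MM$'' matches the \emph{type} but not the \emph{context}, so the lemma's requirement of a single $\Gamma$ serving both terms fails at exactly this point. The paper closes the mismatch with one application of the weakening rule $\RR W$ to the derivation for $\term\MM$ --- and indeed this is precisely why the strong system includes $\RR W$ at all (compare the paper's remark that it differs from system $\mathcal S$ of Bucciarelli--Kesner--Ventura in the presence of weakening, and the analogous use of $\RR W$ in Lemma~\ref{lem:beta subject reduction}). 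The fix is a single rule instance, but your argument is incomplete without it.
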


This gives the required equivalence between strong quantitative typing, strong normalization, and perpetual evaluation.

%--------------- THEOREM: types <=> SN
\begin{theorem}[Strong typing characterizes strong normalization]
\label{thm:strong}
For a term $\term M$, the following are equivalent:
\begin{enumerate}

\item\label{SN:type} $\term M$ is strongly typeable, $\term{G |s M:t}$,
\item\label{SN:norm} $\term M$ is strongly normalizing,
\item\label{SN:perp} there is a perpetual evaluation $\term M\rp\term\MM$.
\end{enumerate}
\end{theorem}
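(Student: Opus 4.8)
The plan is to prove Theorem~\ref{thm:strong} as a cycle of implications between the three statements, assembling it from the three lemmas stated just before it. The key observation is that the hard analytic content---relating reduction to typing and to perpetual evaluation---has already been packaged into Lemma~\ref{lem:SN => perpetual} (strong normalization implies perpetual evaluation) and Lemma~\ref{lem:perpetual => typed} (perpetual evaluation implies strong typeability), together with Theorem~\ref{thm:SN} (strong typeability implies strong normalization). These three results already close a three-cycle, so the proof of the theorem itself is a short orchestration rather than a new argument.

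Concretely, I would establish the cycle $\ref{SN:type} \Rightarrow \ref{SN:norm} \Rightarrow \ref{SN:perp} \Rightarrow \ref{SN:type}$. For $\ref{SN:type} \Rightarrow \ref{SN:norm}$, I appeal directly to Theorem~\ref{thm:SN}: if $\term{G |s M:t}$ then $\term M$ is strongly normalizing, since subject reduction (Lemmata~\ref{lem:beta subject reduction} and~\ref{lem:non-beta subject reduction}) makes every reduction path reduce or preserve the weight with beta-steps strictly decreasing it, and non-beta reduction is SN by Proposition~\ref{prop:non-beta SN}. For $\ref{SN:norm} \Rightarrow \ref{SN:perp}$, I invoke Lemma~\ref{lem:SN => perpetual}: a strongly normalizing $\term M$ admits a perpetual evaluation $\term M\rp\term\MM$ (and moreover $\term M\rwn\term\MM$, though only the existence of the evaluation is needed here). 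For $\ref{SN:perp} \Rightarrow \ref{SN:type}$, I invoke Lemma~\ref{lem:perpetual => typed}: a perpetual evaluation $\term M\rp\term\MM$ yields $\term{G |s M:t}$ for some $\Gamma$ and $\type t$, which is exactly statement~\ref{SN:type}. Chaining these gives the equivalence of all three.

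Since the substantive work is discharged by the prior lemmas, I do not expect a genuine obstacle in the theorem's own proof; the only care needed is bookkeeping, namely checking that the direction of each lemma matches the arrow in the cycle and that the existential quantifiers over $\Gamma$, $\type t$, and $\term\MM$ are handled correctly (each implication only needs existence, not a specific witness, so no threading of data between steps is required). The real mathematical difficulty lives one level down, in Lemma~\ref{lem:SN => perpetual}: there one must show that the perpetual evaluation relation---which on a beta-redex insists on separately evaluating the weakened argument $\term N$---terminates whenever $\term M$ is SN. This is precisely the $\lambda I$ versus $\lambda K$ difficulty flagged in the surrounding text, and it is resolved by the design of the perpetual relation rather than in the theorem. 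Accordingly, I would keep the proof of the theorem itself to a few lines stating the cycle and citing the three results, and present it as follows.

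\begin{proof}
The three statements form a cycle. For $\ref{SN:type}\Rightarrow\ref{SN:norm}$: if $\term{G |s M:t}$ then $\term M$ is strongly normalizing by Theorem~\ref{thm:SN}. For $\ref{SN:norm}\Rightarrow\ref{SN:perp}$: if $\term M$ is strongly normalizing then $\term M\rp\term\MM$ for some $\term\MM$ by Lemma~\ref{lem:SN => perpetual}. For $\ref{SN:perp}\Rightarrow\ref{SN:type}$: if $\term M\rp\term\MM$ then $\term{G |s M:t}$ for some $\Gamma$ and $\type t$ by Lemma~\ref{lem:perpetual => typed}. Hence all three are equivalent.
\end{proof}
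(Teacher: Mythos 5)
Your proof is correct and is essentially identical to the paper's own: the paper also closes the same three-cycle, citing Theorem~\ref{thm:SN} for \ref{SN:type}$\Rightarrow$\ref{SN:norm}, Lemma~\ref{lem:SN => perpetual} for \ref{SN:norm}$\Rightarrow$\ref{SN:perp}, and Lemma~\ref{lem:perpetual => typed} for \ref{SN:perp}$\Rightarrow$\ref{SN:type}. Your surrounding remarks (that only existence of witnesses is needed, and that the real difficulty lives in Lemma~\ref{lem:SN => perpetual}) are accurate and consistent with the paper's discussion.
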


\begin{proof}
From~\ref{SN:type} to~\ref{SN:norm} is Theorem~\ref{thm:SN}, from~\ref{SN:norm} to~\ref{SN:perp} is Lemma~\ref{lem:SN => perpetual}, and from~\ref{SN:perp} to~\ref{SN:type} is Lemma~\ref{lem:perpetual => typed}.
\end{proof}

%------------------------------------------------------------ TYPES FOR STORE
\section{Quantitative Types for Higher-Order Store}
\label{sec:store}

The quantitative type system extends to computational effects and evaluation strategies via their encoding in the FMC. We will explore this by demonstrating that the present type system agrees with the intersection types of De'Liguoro and Treglia for a monadic $\lambda$-calculus with higer-order store~\cite{deLiguoro-Treglia-2021}. The comparison will be exact, but informal, for the reason that their type system is \emph{idempotent}, i.e.\ their type collections are sets, not multisets. Modulo this distinction, their typed calculus will embed directly into the FMC. Appendix~\ref{sec:QGM} in addition gives the correspondence with the related quantitative types of Alves, Kesner, and Ramos~\cite{Alves-Kesner-Ramos-2023}, a correspondence that is likewise informal but exact, as it is restricted to a fragment of their type system.

The untyped calculus consists of six syntactic constructs, which embed as follows, where the \emph{values} $V$ are the first column and the \emph{computations} $M$ the second and third. The embedding into the FMC uses the default location $\lambda$ for regular application, abstraction, and monadic constructs, omitted from the notation both in terms and types, i.e.\ $\term{[N].M}$ and $\term{<x>.M}$ abbreviate $\term{[N]l.M}$ and $\term{l<x>.M}$, and $\type{i}$ may abbreviate the singleton memory $\type{\lambda(i)}$.
\[
\begin{array}{@{}r@{}l@{\qquad}r@{}l@{\qquad}r@{}l@{}}
\multicolumn{2}{@{}l@{}}{\emph{Values $V$:}} & \multicolumn{2}{@{}l@{}}{\emph{Computations $M$:}}
\\                  x &~=~ \term{x}
&                 [V] &~=~ \term{[V].*}
& \get_a(\lambda x.M) &~=~ \term{a<x>.[x]a.M}
\\ \quad  \lambda x.M &~=~ \term{<x>.M}
&  \quad     M\bind V &~=~ \term{M;V}
&         \set_a(V,M) &~=~ \term{a<\_>.[V]a.M}
\end{array}
\]
The current presentation uses $M\bind V$ instead of De'Liguoro and Treglia's $M\star V$ to avoid conflicting with \emph{skip}, and $a,b,c\dots$ instead of $\ell$ for memory locations. The embeddings are standard~\cite{Douence-Fradet-1998,Heijltjes-2022,Power-Thielecke-1999}. They further agree with the encodings of application, let-syntax, and \emph{update} $a := M$ and lookup $!a$, as follows (the first column gives the encodings by De'Liguoro and Treglia~\cite{deLiguoro-Treglia-2021}, the second their standard FMC interpretation~\cite{Heijltjes-2022}).
\[
\begin{aligned}
                                V\,W &= [W]\bind V           &&&                    \term{[W].V} &\rwls \term{([W].*)\,;\,V}
\\ \mathsf{let}~x={}!a~\mathsf{in}~M &= \get_a(\lambda x.M)  &&& \term{(a<x>.[x]a.[x].*) ;<x>.M} &\rws  \term{a<x>.[x]a.M}
\\                      a := V\,;\,M &=   \set_a{V,M}        &&&       \term{(a<\_>.[V]a.*) ; M} &\rws  \term{a<\_>.[V]a.M}
\end{aligned}
\]
Types for the monadic calculus, given below, are stratified like those of the FMC, but unlike the latter they include an intersection operator $\wedge$ and unit $\omega$ at each level.
\[
\begin{array}{lr@{}l}
	\text{Value types:}         & \delta &~\Coloneqq~ \alpha  \mid  \delta\to\tau  \mid  \delta\wedge\delta'  \mid  \omega
\\  \text{Store types:}         & \sigma &~\Coloneqq~    \langle a:\delta \rangle  \mid  \sigma\wedge\sigma'  \mid  \omega
\\  \text{Configuration types:} & \kappa &~\Coloneqq~          \delta\times\sigma  \mid  \kappa\wedge\kappa'  \mid  \omega 
\\  \text{Result types:}        & \tau   &~\Coloneqq~             \sigma\to\kappa  \mid    \tau\wedge\tau'    \mid  \omega  
\end{array}
\]
A subtyping relation $\leq$ and equivalence $=$ for these types helps embed them into FMC types. First, implication distributes over intersection, $\delta\to(\tau\wedge\tau')=(\delta\to\tau)\wedge(\delta\to\tau')$, so that value types are given by a large intersection over arrow types $\delta\to\tau$ and variables $\alpha$. Value types then encode as collection types, where $\delta\wedge\delta'=\type{\delta + \delta'}$ and $\omega=\type{[]}$, and arrow types by their standard encoding~\cite{Heijltjes-2022}, where $\delta\to\tau$ adds the additional input $\delta$ to the FMC implication type given by $\tau$. Since value types are collections, $\delta\to\tau$ embeds as a singleton collection, as follows.
\[
	\delta\to\tau~=~\type{[\delta\,!!m => !!n]} \quad \text{where~}\tau~=~\type{!!m=> !!n}	
\]
For store types, collections distribute over location indexing, $\langle a,\delta\rangle\wedge\langle a,\delta'\rangle=\langle a,\delta\wedge\delta'\rangle$. A store type is thus of the following form, and embeds as a memory type over a set of locations $A$ representing (non-empty) memory cells.
\[
	\bigwedge_{a\in A}\langle a:\delta_a\rangle~=~\{\type{\delta_a}\mid a\in A\}
\]
Note that store types use the intersection operator to combine the types at different locations. Configuration types $\kappa$ are then memory types over a set of locations $A$ that include the default location $\lambda$, in the first line below, and computation types $\tau$ are implications from store types to memory types, second line below.
\[
\begin{aligned}
	\delta\times\sigma &~=~\type{!!i}\quad\text{where}~\type{!i_\lambda}=\type{\delta}~\text{and}~\type{!i_a}=\type{\sigma_a}~\text{for}~a\neq\lambda
\\	   \sigma\to\kappa &~=~\type{\sigma=>\kappa}
\end{aligned}
\]
Figure~\ref{fig:comparison} gives the comparison between intersection typing for store and quantitative FMC typing. With the difference that the former admits weakening and contraction, whereas for the latter contexts are linear, the relation is an embedding that exactly follows those of terms and types.

\begin{figure}
\[
\begin{array}{@{}cc@{}}
% Var
	\infer{\Gamma, x:\delta \vdash x:\delta}{}
&	\infer{\term{x:i |- x:i}}{\big(\infer{\term{x:[t] |- x:t}}{}\big)_{\type t\in\type i}}
\\ \\
% Abstract
	\infer{\Gamma\vdash \lambda x.M:\delta\ar\tau}{\Gamma, x:\delta\vdash M:\tau}
&	\infer{\term{G |- <x>.M : i\,!!m => !!n}}{\term{G, x:i |- M : !!m=>!!n}}
\\ \\
% Return
	\infer{\Gamma\vdash [V]:\sigma\ar\delta\tim\sigma}{\Gamma\vdash V:\delta}
&	\infer{\term{G |- [V].* : !!m => !!m\,i}}{\term{G |- V : i} && \infer{\term{|- * : i\,!!m => !!m\,i}}{}}
\\ \\
% Bind
	\infer{\Gamma\vdash M\star V : \sigma \ar \delta''\tim\sigma''}{\Gamma\vdash M : \sigma \ar \delta'\tim\sigma' & \Gamma \vdash V : \delta'\ar\sigma'\ar\delta''\tim\sigma''} 
&	\infer{\term{G+D |- M;V : !!m => !!m" \,i" }}{\term{G |- M : !!m => !!m'\,i'} & \term{D |- V : i'\,!!m' => !!m"\,i"}}
\\ \\
% Get
	\infer{\Gamma \vdash \get_a(\lambda x.M) : (\langle a:\delta\rangle\wedge \sigma)\ar\kappa}{\Gamma , x:\delta \vdash M : \sigma \ar \kappa}
&   \infer{\term{G |- a<x>.[x]a.M : a(i+i')\,!!m => !!n}}{
     \infer{\term{G , x:i+i' |- [x]a.M : !!m => !!n}}{
       \infer{\term{x:i |- x:i}}{\big(\infer{\term{x:[t]|-x:t}}{}\big)_{\type t\in\type i}}
     & \term{G , x:i' |- M : a(i)\,!!m => !!n}
    }}
\\ \\
% Set 
	\infer{\Gamma \vdash \set_a(V,M) : \sigma \ar \kappa}{\Gamma \vdash V : \delta && \Gamma \vdash M : (\langle a:\delta \rangle\wedge\sigma)\ar \kappa}
&	\infer{\term{G+D |- a<\_>.[V]a.M : a([])\,!!m => !!n}}{\infer{\term{G+D |- [V]a.M : !!m => !!n}}{\term{G |- V:i} && \term{D |- M : a(i)\,!!m => !!n}}}
\end{array}
\]
\caption{A comparison of intersection typing for store to quantitative FMC typing. Let $\delta=\type i$, $\sigma=\type{!!m}$, $\kappa=\type{!!n}$, and $\tau=\type{!!m=>!!n}$, and similarly for $\delta'$, $\delta''$, etc. The last rule on the left has the side condition that $a$ is not assigned in $\sigma$, which gives the correspondence between $\sigma$ and $\type{a([])\,!!m}$ on the right via the subtyping inequality $\omega \leq \langle a:\omega\rangle$, by $\sigma = \omega\wedge\sigma \leq \langle a:\omega\rangle\wedge\sigma = \type{a([])\,!!m}$.
}
\label{fig:comparison}
\end{figure}

%------------------------------------------------------------ CONCLUSION
\section{Conclusion}

Quantitative types for the FMC arise from simple types by the standard generalization, of replacing input types (or \emph{value} types) with multisets. Nevertheless, through the encodings of effects and strategies in the FMC, this gives a quantitative characterization of effectful higher-order programs, in agreement with known results.

Of the technical contributions of the paper, two deserve closer attention. One is the exact quantification of machine evaluation by the weak type system, simply by counting the typing rules corresponding to machine steps, without the need for dedicated technical adaptations. Fundamentally this relies on the extension of the Krivine machine with sequencing and, in particular, \emph{skip}, as signifying successful termination. The direct correspondence between weak quantitative types and machine evaluation underlines how this is an interesting and natural extension of the $\lambda$-calculus that broadens our perspective on higher-order computation.

The second main technical contribution is the perpetual evaluation relation. Its equivalence with strong normalization holds independently of the strong type system, and it is expected to prove useful in strong normalization proofs for potential future type systems.

%------------------------------------------------------------ BIBLIOGRAPHY

\bibliographystyle{plainurl}
\bibliography{FMC}

\begin{thebibliography}{10}

\bibitem{Alves-Kesner-Ramos-2023}
Sandra Alves, Delia Kesner, and Miguel Ramos.
\newblock Quantitative global memory.
\newblock In Helle~Hvid Hansen, Andre Scedrov, and Ruy J. G.~B. de~Queiroz,
  editors, {\em 29th International Workshop on Logic, Language, Information,
  and Computation (WoLLIC)}, volume 13923 of {\em Lecture Notes in Computer
  Science}, pages 53--68. Springer, 2023.
\newblock \href {https://doi.org/10.1007/978-3-031-39784-4\_4}
  {\path{doi:10.1007/978-3-031-39784-4\_4}}.

\bibitem{Antonelli-DalLago-Pistone-2022}
Melissa Antonelli, Ugo Dal~Lago, and Paolo Pistone.
\newblock Curry and {H}oward meet {B}orel.
\newblock In {\em Proceedings of the 37th Annual {ACM/IEEE} Symposium on Logic
  in Computer Science (LICS`22)}, pages 45:1--45:13. {ACM}, 2022.
\newblock \href {https://doi.org/10.1145/3531130.3533361}
  {\path{doi:10.1145/3531130.3533361}}.

\bibitem{Barendregt-1984}
Hendrik~Pieter Barendregt.
\newblock {\em The Lambda Calculus -- Its Syntax and Semantics}, volume 103 of
  {\em Studies in Logic and the Foundations of Mathematics}.
\newblock North-Holland, 1984.
\newblock \href {https://doi.org/10.1016/c2009-0-14341-6}
  {\path{doi:10.1016/c2009-0-14341-6}}.

\bibitem{Barrett-Castle-Heijltjes-2024}
Chris Barrett, Daniel Castle, and Willem Heijltjes.
\newblock The {R}elational {M}achine {C}alculus.
\newblock In Pawel Sobocinski, Ugo~Dal Lago, and Javier Esparza, editors, {\em
  Proc.\ 39th Annual {ACM/IEEE} Symposium on Logic in Computer Science (LICS)},
  pages 9:1--9:15. {ACM}, 2024.
\newblock \href {https://doi.org/10.1145/3661814.3662091}
  {\path{doi:10.1145/3661814.3662091}}.

\bibitem{Barrett-Heijltjes-McCusker-2023}
Chris Barrett, Willem Heijltjes, and Guy McCusker.
\newblock The {F}unctional {M}achine {C}alculus {II}: Semantics.
\newblock In Bartek Klin and Elaine Pimentel, editors, {\em 31st EACSL Annual
  Conference on Computer Science Logic (CSL 2023)}, volume 252 of {\em Leibniz
  International Proceedings in Informatics (LIPIcs)}, pages 10:1--10:18,
  Dagstuhl, Germany, 2023. Schloss Dagstuhl -- Leibniz-Zentrum f{\"u}r
  Informatik.
\newblock URL: \url{https://drops.dagstuhl.de/opus/volltexte/2023/17471}, \href
  {https://doi.org/10.4230/LIPIcs.CSL.2023.10}
  {\path{doi:10.4230/LIPIcs.CSL.2023.10}}.

\bibitem{Bergstra-Klop-1982}
Jan~A. Bergstra and Jan~Willem Klop.
\newblock Strong normalization and perpetual reductions in the lambda calculus.
\newblock {\em J. Inf. Process. Cybern.}, 18(7/8):403--417, 1982.

\bibitem{Blaauwbroek-2017}
Lasse Blaauwbroek.
\newblock On the interaction between unrestricted union and intersection types
  and computational effects.
\newblock Master's thesis, Eindhoven University of Technology, 2017.
\newblock URL:
  \url{https://research.tue.nl/en/studentTheses/d2c9d434-4f17-402e-a757-97df04ac1646}.

\bibitem{Bono-Dezani-2020}
Viviana Bono and Mariangiola Dezani{-}Ciancaglini.
\newblock A tale of intersection types.
\newblock In Holger Hermanns, Lijun Zhang, Naoki Kobayashi, and Dale Miller,
  editors, {\em {LICS} '20: 35th Annual {ACM/IEEE} Symposium on Logic in
  Computer Science, Saarbr{\"{u}}cken, Germany, July 8-11, 2020}, pages 7--20.
  {ACM}, 2020.
\newblock \href {https://doi.org/10.1145/3373718.3394733}
  {\path{doi:10.1145/3373718.3394733}}.

\bibitem{Boudol-1993}
G{\'e}rard Boudol.
\newblock The lambda-calculus with multiplicities.
\newblock In {\em International Conference on Concurrency Theory (CONCUR)},
  1993.

\bibitem{Breuvart-DalLago-2018}
Flavien Breuvart and Ugo~Dal Lago.
\newblock On intersection types and probabilistic lambda calculi.
\newblock In David Sabel and Peter Thiemann, editors, {\em Proceedings of the
  20th International Symposium on Principles and Practice of Declarative
  Programming, {PPDP} 2018, Frankfurt am Main, Germany, September 03-05, 2018},
  pages 8:1--8:13. {ACM}, 2018.
\newblock \href {https://doi.org/10.1145/3236950.3236968}
  {\path{doi:10.1145/3236950.3236968}}.

\bibitem{Bucciarelli-Kesner-Ventura-2017}
Antonio Bucciarelli, Delia Kesner, and Daniel Ventura.
\newblock Non-idempotent intersection types for the lambda-calculus.
\newblock {\em Logic Journal of the IGPL}, 25(4):431--464, 2017.
\newblock \href {https://doi.org/10.1093/JIGPAL/JZX018}
  {\path{doi:10.1093/JIGPAL/JZX018}}.

\bibitem{Coppo-Dezani-1980}
M.~Coppo and M.~Dezani-Ciancaglini.
\newblock An extension of the basic functionality theory for the
  $\lambda$-calculus.
\newblock {\em Notre Dame Journal of Formal Logic}, 21(4):685--693, 1980.

\bibitem{Davies-Pfenning-2000}
Rowan Davies and Frank Pfenning.
\newblock Intersection types and computational effects.
\newblock In Martin Odersky and Philip Wadler, editors, {\em Proceedings of the
  Fifth {ACM} {SIGPLAN} International Conference on Functional Programming
  {(ICFP} '00), Montreal, Canada, September 18-21, 2000}, pages 198--208.
  {ACM}, 2000.
\newblock \href {https://doi.org/10.1145/351240.351259}
  {\path{doi:10.1145/351240.351259}}.

\bibitem{deLiguoro-Treglia-2021}
Ugo de'Liguoro and Riccardo Treglia.
\newblock Intersection types for a {\(\lambda\)}-calculus with global store.
\newblock In Niccol{\`{o}} Veltri, Nick Benton, and Silvia Ghilezan, editors,
  {\em {PPDP} 2021: 23rd International Symposium on Principles and Practice of
  Declarative Programming, Tallinn, Estonia, September 6-8, 2021}, pages
  5:1--5:11. {ACM}, 2021.
\newblock \href {https://doi.org/10.1145/3479394.3479400}
  {\path{doi:10.1145/3479394.3479400}}.

\bibitem{deLiguoro-Treglia-2023}
Ugo de'Liguoro and Riccardo Treglia.
\newblock From semantics to types: {T}he case of the imperative
  \emph{{\(\lambda\)}}-calculus.
\newblock {\em Theor. Comput. Sci.}, 973:114082, 2023.
\newblock \href {https://doi.org/10.1016/J.TCS.2023.114082}
  {\path{doi:10.1016/J.TCS.2023.114082}}.

\bibitem{Dezani-Giannini-Ronchi-2009}
Mariangiola Dezani{-}Ciancaglini, Paola Giannini, and Simona Ronchi~Della
  Rocca.
\newblock Intersection, universally quantified, and reference types.
\newblock In Erich Gr{\"{a}}del and Reinhard Kahle, editors, {\em 18th Annual
  Conference of the EACSL on Computer Science Logic (CSL)}, volume 5771 of {\em
  Lecture Notes in Computer Science}, pages 209--224. Springer, 2009.
\newblock \href {https://doi.org/10.1007/978-3-642-04027-6\_17}
  {\path{doi:10.1007/978-3-642-04027-6\_17}}.

\bibitem{Douence-Fradet-1998}
R{\'e}mi Douence and Pascal Fradet.
\newblock A systematic study of functional language implementations.
\newblock {\em ACM Transactions on Programming Languages and Systems},
  20(2):344--387, 1998.
\newblock \href {https://doi.org/10.1145/276393.276397}
  {\path{doi:10.1145/276393.276397}}.

\bibitem{Ehrhard-2012-CSL}
Thomas Ehrhard.
\newblock Collapsing non-idempotent intersection types.
\newblock In Patrick C{\'{e}}gielski and Arnaud Durand, editors, {\em Computer
  Science Logic (CSL'12) - 26th International Workshop/21st Annual Conference
  of the EACSL, {CSL} 2012, September 3-6, 2012, Fontainebleau, France},
  volume~16 of {\em LIPIcs}, pages 259--273. Schloss Dagstuhl - Leibniz-Zentrum
  f{\"{u}}r Informatik, 2012.
\newblock \href {https://doi.org/10.4230/LIPICS.CSL.2012.259}
  {\path{doi:10.4230/LIPICS.CSL.2012.259}}.

\bibitem{Ehrhard-2012-TCS}
Thomas Ehrhard.
\newblock The {S}cott model of linear logic is the extensional collapse of its
  relational model.
\newblock {\em Theor. Comput. Sci.}, 424:20--45, 2012.
\newblock \href {https://doi.org/10.1016/J.TCS.2011.11.027}
  {\path{doi:10.1016/J.TCS.2011.11.027}}.

\bibitem{Ehrhard-2020}
Thomas Ehrhard.
\newblock Non-idempotent intersection types in logical form.
\newblock In Jean Goubault-Larrecq and Barbara K{\"o}nig, editors, {\em
  Foundations of Software Science and Computation Structures (FoSSaCS)}, volume
  12077 of {\em LNCS}, pages 198--216, Cham, 2020. Springer International
  Publishing.
\newblock \href {https://doi.org/10.1007/978-3-030-45231-5_11}
  {\path{doi:10.1007/978-3-030-45231-5_11}}.

\bibitem{Ehrhard-Regnier-2003}
Thomas Ehrhard and Laurent Regnier.
\newblock The differential lambda-calculus.
\newblock {\em Theoretical Computer Science}, 309(1-3):1--41, 2003.
\newblock \href {https://doi.org/10.1016/S0304-3975(03)00392-X}
  {\path{doi:10.1016/S0304-3975(03)00392-X}}.

\bibitem{Gavazzo-Treglia-Vanoni-2024}
Francesco Gavazzo, Riccardo Treglia, and Gabriele Vanoni.
\newblock Monadic intersection types, relationally.
\newblock In Stephanie Weirich, editor, {\em Programming Languages and Systems
  - 33rd European Symposium on Programming, {ESOP} 2024, Held as Part of the
  European Joint Conferences on Theory and Practice of Software, {ETAPS} 2024,
  Luxembourg City, Luxembourg, April 6-11, 2024, Proceedings, Part {I}}, volume
  14576 of {\em Lecture Notes in Computer Science}, pages 22--51. Springer,
  2024.
\newblock \href {https://doi.org/10.1007/978-3-031-57262-3\_2}
  {\path{doi:10.1007/978-3-031-57262-3\_2}}.

\bibitem{Guerrieri-Heijltjes-Paulus-2021}
Giulio Guerrieri, Willem~B. Heijltjes, and Joseph W.~N. Paulus.
\newblock A deep quantitative type system.
\newblock In Christel Baier and Jean Goubault{-}Larrecq, editors, {\em 29th
  {EACSL} Annual Conference on Computer Science Logic, {CSL}}, volume 183 of
  {\em LIPIcs}, pages 24:1--24:24. Schloss Dagstuhl - Leibniz-Zentrum f{\"{u}}r
  Informatik, 2021.
\newblock \href {https://doi.org/10.4230/LIPIcs.CSL.2021.24}
  {\path{doi:10.4230/LIPIcs.CSL.2021.24}}.

\bibitem{Heijltjes-2022}
Willem Heijltjes.
\newblock The {F}unctional {M}achine {C}alculus.
\newblock In {\em Proceedings of Mathematical Foundations of Programming
  Semantics (MFPS XXXVIII)}, volume~1 of {\em Electronic Notes in Theoretical
  Informatics and Computer Science}, 2022.
\newblock URL: \url{https://entics.episciences.org/10513}, \href
  {https://doi.org/10.46298/entics.10513} {\path{doi:10.46298/entics.10513}}.

\bibitem{Heijltjes-2025}
Willem Heijltjes.
\newblock The {F}unctional {M}achine {C}alculus {III}: {C}ontrol.
\newblock Submitted, 2025.

\bibitem{Heijltjes-Majury-2025}
Willem Heijltjes and Georgina Majury.
\newblock Simple types for probabilistic termination.
\newblock In J{\"o}rg Endrullis and Sylvain Schmitz, editors, {\em 33rd EACSL
  Annual Conference on Computer Science Logic (CSL)}, Leibniz International
  Proceedings in Informatics, pages 31:1--31:22. Schloss Dagstuhl --
  Leibniz-Zentrum f{\"u}r Informatik, Dagstuhl Publishing, Germany, 2025.
\newblock \href {https://doi.org/10.4230/LIPIcs.CSL.2025.31}
  {\path{doi:10.4230/LIPIcs.CSL.2025.31}}.

\bibitem{Kfoury-Wells-1999}
A.~J. Kfoury and J.~B. Wells.
\newblock Principality and decidable type inference for finite-rank
  intersection types.
\newblock In Andrew~W. Appel and Alex Aiken, editors, {\em {POPL} '99,
  Proceedings of the 26th {ACM} {SIGPLAN-SIGACT} Symposium on Principles of
  Programming Languages, San Antonio, TX, USA, January 20-22, 1999}, pages
  161--174. {ACM}, 1999.
\newblock \href {https://doi.org/10.1145/292540.292556}
  {\path{doi:10.1145/292540.292556}}.

\bibitem{Kfoury-2000}
Assaf~J. Kfoury.
\newblock A linearization of the lambda-calculus and consequences.
\newblock {\em Journal of Logic and Computation}, 10(3):411--436, 2000.
\newblock \href {https://doi.org/10.1093/LOGCOM/10.3.411}
  {\path{doi:10.1093/LOGCOM/10.3.411}}.

\bibitem{Klop-1980}
Jan~Willem Klop.
\newblock {\em Combinatory Reduction Systems}.
\newblock PhD thesis, Rijksuniversiteit Utrecht, 1980.

\bibitem{Krivine-2007}
Jean-Louis Krivine.
\newblock A call-by-name lambda-calculus machine.
\newblock {\em Higher-Order and Symbolic Computation}, 20:199--207, 2007.
\newblock \href {https://doi.org/10.1007/s10990-007-9018-9}
  {\path{doi:10.1007/s10990-007-9018-9}}.

\bibitem{Laird-Manzonetto-McCusker-Pagani-2013}
Jim Laird, Giulio Manzonetto, Guy McCusker, and Michele Pagani.
\newblock Weighted relational models of typed lambda-calculi.
\newblock In {\em 28th Annual {ACM/IEEE} Symposium on Logic in Computer Science
  (LICS)}, pages 301--310, 2013.
\newblock \href {https://doi.org/10.1109/LICS.2013.36}
  {\path{doi:10.1109/LICS.2013.36}}.

\bibitem{Levy-2003}
Paul~Blain Levy.
\newblock {\em Call-by-push-value: {A} functional/imperative synthesis},
  volume~2 of {\em Semantic Structures in Computation}.
\newblock Springer Netherlands, 2003.
\newblock \href {https://doi.org/10.1007/978-94-007-0954-6}
  {\path{doi:10.1007/978-94-007-0954-6}}.

\bibitem{Mazza-Pellissier-Vial-2018}
Damiano Mazza, Luc Pellissier, and Pierre Vial.
\newblock Polyadic approximations, fibrations and intersection types.
\newblock {\em Proceedings of the ACM on Programming Languages}, 2(POPL), 2018.
\newblock \href {https://doi.org/10.1145/3158094} {\path{doi:10.1145/3158094}}.

\bibitem{Moggi-1991}
Eugenio Moggi.
\newblock Notions of computation and monads.
\newblock {\em Information and Computation}, 93(1):55--92, 1991.
\newblock \href {https://doi.org/10.1016/0890-5401(91)90052-4}
  {\path{doi:10.1016/0890-5401(91)90052-4}}.

\bibitem{Nederpelt-1973}
Rob Nederpelt.
\newblock {\em Strong normalization in a typed lambda calculus with lambda
  structured types}.
\newblock PhD thesis, Technische hogeschool Eindhoven, 1973.

\bibitem{Neergaard-Mairson-2004}
Peter~M{\o}ller Neergaard and Harry~G. Mairson.
\newblock Types, potency, and idempotency: {W}hy nonlinearity and amnesia make
  type systems work.
\newblock In {\em Proc.\ 9th International Conference on Functional Programming
  (ICFP)}, pages 138--149. {ACM}, 2004.
\newblock \href {https://doi.org/10.1145/1016850.1016871}
  {\path{doi:10.1145/1016850.1016871}}.

\bibitem{Ong-2017}
C.-H.~Luke Ong.
\newblock Quantitative semantics of the lambda calculus: {S}ome generalisations
  of the relational model.
\newblock In {\em 32nd Annual ACM/IEEE Symposium on Logic in Computer Science
  (LICS)}, 2017.
\newblock \href {https://doi.org/10.1109/LICS.2017.8005064}
  {\path{doi:10.1109/LICS.2017.8005064}}.

\bibitem{Paolini-Piccolo-Ronchi-2017}
Luca Paolini, Mauro Piccolo, and Simona Ronchi~Della Rocca.
\newblock Essential and relational models.
\newblock {\em Math. Struct. Comput. Sci.}, 27(5):626--650, 2017.
\newblock \href {https://doi.org/10.1017/S0960129515000316}
  {\path{doi:10.1017/S0960129515000316}}.

\bibitem{Pimentel-Ronchi-Roversi-2012}
Elaine Pimentel, Simona Ronchi Della~Rocca, and Luca Roversi.
\newblock Intersection types from a proof-theoretic perspective.
\newblock {\em Fundamenta Informaticae}, 121(1-4):253--274, 2012.
\newblock \href {https://doi.org/10.3233/FI-2012-778}
  {\path{doi:10.3233/FI-2012-778}}.

\bibitem{Plotkin-1975}
Gordon~D. Plotkin.
\newblock Call-by-name, call-by-value and the $\lambda$-calculus.
\newblock {\em Theoretical Computer Science}, 1(2):125--159, 1975.
\newblock \href {https://doi.org/10.1016/0304-3975(75)90017-1}
  {\path{doi:10.1016/0304-3975(75)90017-1}}.

\bibitem{Pottinger-1980}
G.~Pottinger.
\newblock A type assignment for the strongly normalizable $\lambda$-terms.
\newblock In {\em To H. B. Curry: essays on combinatory logic, lambda calculus
  and formalism}, pages 561--577. Academic Press, London, 1980.

\bibitem{Power-Thielecke-1999}
A.J. Power and Hayo Thielecke.
\newblock Closed {F}reyd- and $\kappa$-categories.
\newblock In {\em International Colloquium on Automata, Languages, and
  Programming (ICALP)}, volume 1644 of {\em Lecture Notes in Computer Science},
  pages 625--634. Springer, 1999.
\newblock \href {https://doi.org/10.1007/3-540-48523-6_59}
  {\path{doi:10.1007/3-540-48523-6_59}}.

\bibitem{Raamsdonk-Severi-Sorensen-Xi-1999}
Femke van Raamsdonk, Paula Severi, Morten~Heine S{\o}rensen, and Hongwei Xi.
\newblock Perpetual reductions in lambda-calculus.
\newblock {\em Inf. Comput.}, 149(2):173--225, 1999.
\newblock \href {https://doi.org/10.1006/inco.1998.2750}
  {\path{doi:10.1006/inco.1998.2750}}.

\end{thebibliography}

%\end{document}

%============================================================ APPENDIX

\newpage
\appendix

%------------------------------------------------------------ QUANTITATIVE GLOBAL MEMORY

\newcommand\aaa{\mathrm{a}}
\newcommand\vvv{\mathrm{v}}
\newcommand\ttt{\mathrm{tt}}
\newcommand\nnn{\mathrm{n}}

\newcommand\MMM{\mathcal M}
\newcommand\NNN{\mathcal N}
\newcommand\SSS{\mathcal S}

\section{Addendum to Section~\ref{sec:store}: encoding \emph{Quantitative Global Memory}}
\label{sec:QGM}

This Appendix will compare the present quantitative types against the multi-types of Alves, Kesner, and Ramos in \emph{Quantitative Global Memory}~\cite{Alves-Kesner-Ramos-2023}. The paper uses the same calculus as De'Liguoro and Treglia~\cite{deLiguoro-Treglia-2021}, characterising it as \emph{weak open call--by--value}, i.e.\ not reducing under abstractions while allowing open terms. The calculus and its FMC encoding are as follows, using the standard interpretation of call--by--value $\lambda$-calculus~\cite{Douence-Fradet-1998,Heijltjes-2022,Power-Thielecke-1999}, where the standard encoding of call--by--value application, $v\,t = \term{t ; [v] ; <x>.x}$, is reduced to $\term{t ; v}$.
\[
\begin{array}{@{}r@{}l@{\qquad}r@{}l@{\qquad}r@{}l@{}}
\multicolumn{2}{@{}l@{}}{\emph{Values $v$:}} & \multicolumn{2}{@{}l@{}}{\emph{Terms $t$:}} & 
\\                    x &~=~ \term{x}      &          v &~=~ \term{[v].*} & \get_a(\lambda x.t) &~=~ \term{a<x>.[x]a.t}
\\ \quad    \lambda x.t &~=~ \term{<x>.t}  & \quad v\,t &~=~ \term{t ; v} & \set_a(v,t)         &~=~ \term{a<\_>.[v]a.t}
\end{array}
\]
The paper gives a system of quantitative types for this calculus. Here, we will consider it in simplified form, omitting base types and restricting our focus to four (out of eight) categories. The type system, and its encoding in quantitative FMC types, is then as follows.
\[
\begin{array}{lr@{}l@{}l}
   \text{Value types:}         & \sigma &~\Coloneqq~ \MMM\smallbin\Rightarrow(\SSS\then\kappa) &~=~ \type{\lambda(\MMM)\,\SSS => \kappa}
\\ \text{Configuration types:} & \kappa &~\Coloneqq~ \MMM \tim \SSS                            &~=~ \type{\lambda(\MMM)\,\SSS}
\\ \text{Multi-types:}         & \MMM   &~\Coloneqq~ [\sigma_1,\dots,\sigma_n]                 &~=~ \type{[\sigma_1,\dots,\sigma_n]}
\\ \text{State types:}         & \SSS   &~\Coloneqq~ \{a_1:\MMM_1,\dots,a_n:\MMM_n\}           &~=~ \type{a_1(\MMM_1)\dots a_n(\MMM_n)}
\end{array}
\]
That is: \emph{value types} encode as computation types $\type{t}$, \emph{state types} as memory types $\type{!!i}$ excluding the main location $\lambda$, \emph{multi-types} as collection types $\type{i}$, and \emph{configuration types} as memory types $\type{!!i}$.

%The grammar for types in~\cite{} is more liberal, and in particular allows store monad types of the form $\delta=\SSS \then (\tau \tim \SSS')$ where $\tau$ is a base type $a$, $v$, or $n$, or a single computation type $\sigma$. However, the typing rules only use $\delta$ where $\tau$ is a base type or a multi-type $\MMM$.

The type system features the properties that it is \emph{tight}, it counts exactly the reduction steps to normal form, and it is \emph{split}, it separately counts beta-steps, store operations, and the size of the normal form. These quantities are however not meaningful in the FMC encoding, which measures machine evaluation rather than reduction, identifies beta-reduction and store operations (both are stack interactions), and considers spine reduction, while the encoded call--by--value reduction corresponds to weak head reduction in the FMC. Accordingly, we will here consider only the structural correspondence between both type systems, omitting the triple counters and the base types $\aaa$, $\vvv$, and $\nnn$ which are used to identify normal forms. Thus simplified, the type system and its FMC encoding are given in Figure~\ref{fig:comparison-2}.

\begin{figure}
\[
\begin{array}{@{}c@{}c@{}}
    \infer{x:[\sigma] \vdash x:\sigma}{}
&   \infer{\term{x:[\sigma] |- x:\sigma}}{}
\\ \\ 
    \infer{\Gamma \vdash \lambda x.t : \MMM \smallbin\Rightarrow (\SSS \then \kappa)}{\Gamma , x:\MMM \vdash t:\SSS \then \kappa}
&   \infer{\term{G |- <x>.t : \MMM\,\SSS => \kappa}}{\term{G , x:\MMM |- t : \SSS => \kappa}}
\\ \\
	\infer{\Gamma_1+\dots+\Gamma_n \vdash v : [\sigma_1,\dots,\sigma_n]}{(\Gamma_i\vdash v:\sigma_i)_{i\leq n}}
&   \infer{\term{G_1++G_n |- v : [\sigma_1,,\sigma_n]}}{(\term{G_i |- v:\sigma_i})_{i\leq n}}
\\ \\
	\infer{\Gamma \vdash v:\SSS\then(\MMM\tim\SSS)}{\Gamma \vdash v:\MMM}
&	\infer{\term{G |- [v].* : \SSS => \MMM\,\SSS}}{\term{G |- v:\MMM} && \infer{\term{|- * : \MMM\,\SSS => \MMM\,\SSS}}{}}
\\ \\
	\infer{\Gamma+\Delta \vdash v\,t:\SSS\then\kappa}{\Gamma\vdash v:\MMM\smallbin\Rightarrow(\SSS'\then\kappa) & \Delta\vdash t:\SSS\then(\MMM\tim\SSS')}
&	\infer{\term{G+D |- t;v : \SSS => \kappa}}{\term{D |- t: \SSS => \MMM\,\SSS'} && \term{G |- v: \MMM\,\SSS' => \kappa}}
\\ \\
	\infer{\Gamma\vdash \get_a(\lambda x.t):\{a:\MMM+\NNN\};\SSS\then\kappa}{\Gamma,x:\MMM\vdash t:\{a:\NNN\};\SSS\then\kappa}
&	\infer{\term{G |- a<x>.[x]a.t : a(\MMM+\NNN)\,\SSS => \kappa}}{\infer{\term{G , x:\MMM+\NNN |- [x]a.t : \SSS => \kappa}}{\infer{\term{x:\NNN |- x:\NNN}}{\big(\infer{\term{x:[s]|-x:s}}{}\big)_{\sigma\in\NNN}} && \term{G , x:\MMM |- t: a(\NNN)\,\SSS => \kappa}}}
\\ \\
	\infer{\Gamma+\Delta \vdash \set_a(v,t): {\color{green!80!black}\{a:[]\};{}}\SSS\then\kappa}{\Gamma\vdash v:\MMM & \Delta\vdash t: \{a:\MMM\};\SSS\then\kappa}
&   \infer{\term{G+D |- a<\_>.[v]a.t : a([])\,\SSS => \kappa}}{\infer{\term{G+D |- [v]a.t : \SSS => \kappa}}{\term{G |- v:\MMM} && \term{D |- t: a(\MMM)\,\SSS => \kappa}}}
\end{array}
\]
\caption{A comparison of multi-types for store to quantitative FMC typing. The FMC types omit the default location $\lambda$, using $\type{\MMM\,\SSS}$ for $\type{\lambda(\MMM)\,\SSS}$. In the last rule on the left, the addition of $\{a:[]\}$ (in green) to the conclusion type corrects a minor mistake in~\cite{Alves-Kesner-Ramos-2023} (confirmed in private communication with the authors).}
\label{fig:comparison-2}
\end{figure}

%------------------------------------------------------------ MACHINE TERMINATION
\section{Proofs for Section~\ref{sec:FMC}: The Functional Machine Calculus}

%--------------- Prop: small-step big-step agree
\smallStepBigStep*

\begin{proof}
The machine transitions correspond to the big-step evaluation rules as follows. From left to right the proof is by (strong) induction on $n$, and from right to left by induction on the derivation for $\evalarrow_n$.

\begin{itemize}
\item\case{Application:}
A successful run starting with a \emph{push} transition and the evaluation rule for application correspond as follows.
\[
\begin{array}{@{(~}l@{~,~}r@{~,~}r@{~)}}
          S_A               & \term{[N]a.M} & \e
\\ \hline S_A\msep\term N_a &      \term{M} & \e
\\ \dline T_A               &      \term{*} & \e
\end{array}
\,\raisebox{-4pt}{$\scriptstyle n$}
\qquad
\evalAlign 
  [n]   {S_A\msep\term N_a}       M  {T_A}
  [n+1] {S_A}               {[N]a.M} {T_A}
\]

\item\case{Abstraction:}
A successful run starting with a \emph{pop} transition and the evaluation rule for abstraction correspond as follows.
\[
\begin{array}{@{(~}l@{~,~}r@{~,~}r@{~)}}
          S_A\msep\term N_a & \term{a<x>.M} & \e
\\ \hline S_A               &      \term{M} & \e
\\ \dline T_A               &      \term{*} & \e
\end{array}
\,\raisebox{-4pt}{$\scriptstyle n$}
\qquad  
\evalAlign 
  [n]   {S_A}             {\{N/x\}M} {T_A}
  [n+1] {S_A\msep\term N_a} {a<x>.M} {T_A}
\]

\item\case{Unit:}
The unit run for a single state corresponds to the unit evaluation rule.
\[
	(S_A,\term *,\e) \qquad \evalRule {} {\eval[1]{S_A}*{S_A}}
\]

\item\case{Sequence:}
A successful run for $\term{M;N}$ consists of a transition pushing $\term N$ to the continuation stack, a successful run for $\term M$ (ignoring $\term N$ on the continuation stack), a transition popping $\term N$, and a run for $\term N$. The correspondence is then as follows, where the total length of the run is that of both constituting runs for $\term M$ and $\term N$ plus the initial state for $\term{M;N}$.
\[
\begin{array}{@{(~}l@{~,~}r@{~,~}r@{~)}l@{}}
          R_A & \term{M;N} & \e       
\\ \hline R_A & \term{M}   & \term{N} 
\\ \dline S_A & \term{*}   & \term{N} 
\\ \hline S_A & \term{N}   & \e       
\\ \dline T_A & \term{*}   & \e       
\end{array}
\,\raisebox{11pt}{\makebox[0pt][l]{$\scriptstyle m$}}
  \raisebox{-20pt}{$\scriptstyle n$}
\qquad
\evalRule {\eval[m]{R_A}M{S_A} \qquad \eval[n]{S_A}N{T_A}} {\eval[m+n+1]{R_A}{M;N}{T_A}}
\]
\end{itemize}
\end{proof}

%------------------------------------------------------------ MACHINE TERMINATION
\section{Proofs for Section~\ref{sec:machine termination}: Machine Termination}

%--------------- LEMMA: weak substitution
\substitutionWeak*

\begin{proof}
The proof is by induction on $\term M$, except that the \emph{collection} case is treated separately from the \emph{application} case. Each case is demonstrated by two derivations, representing a permutation of the substitution typing rule, where one derivation exists if and only if the other does.

\begin{itemize}

% Variable
\item\case{Variable case:}
In the case $\term M=\term x$ we have the following two derivations. In both directions it is immedate that given one derivation, we may construct the other.
\[
\infer-[\rr s]{\term{G |- \{N/x\}x:t}}{\infer[\rr c]{\term{G |- N:[t]}}{\term{G |- N:t}} && \infer[\rr x]{\term{x:[t] |- x:t}}{}}
\qquad\qquad
\term{G |- N:t}
\]
In the case $\term M=\term y\neq\term x$ we have the following two derivations.
\[
\infer-[\rr s]{\term{y:[t] |- \{N/x\}y:t}}{\infer[\rr c]{\term{|- N:[]}}{} && \infer[\rr x]{\term{y:[t], x:[] |- y:t}}{}}
\qquad\qquad
\infer[\rr x]{\term{y:[t] |- y:t}}{}
\]

% Abstraction
\item\case{Abstraction case:}
In the case $\term M=\term{a<y>.P}$ we have the following two derivations. In both directions, given one derivation we may construct the other, where by $\alpha$-conversion we may assume $\term y\notin\fv N$.
\[
\infer-[\rr s]{\term{G+D |- \{N/x\}a<y>.P: a(i)\,!!k => !!m }}{\term{G |- N:n} && \infer[\rr l]{\term{D , x:n |- <y>.P: a(i)\,!!k => !!m}}{\term{D , x:n , y:i , |- P: !!k => !!m }}}
\]
\[
\infer[\rr l]{\term{G+D |- a<y>.\{N/x\}P: a(i)\,!!k => !!m}}{\infer-[\rr s]{\term{G+D, y:i |- \{N/x\}P: !!k => !!m}}{\term{G |- N:n} && \term{D , x:n , y:i , |- P: !!k => !!m }}}
\]

% Weak application
\item\case{Weak application case:}
In the case $\term M=\term{[Q]a.P}$ we have the following two derivations, where $\type n =\type{n_1+n_2}$ and $\term D=\term{D_1+D_2}$. Given the former derivation, since $\term{G |- N:n}$ must be introduced by a collection rule, we have $\term{G_i |- N:n_i}$ for $i=1,2$ with $\term G=\term{G_1+G_2}$, so that we may construct the latter derivation. The other direction is immediate.
\[
\infer-[\rr s]{\term{G+D |- \{N/x\}[Q]a.P: !!k => !!m}}{
   \term{G |- N:n}
&& \infer[\rr a]{\term{D , x:n |w [Q]a.P}}{
      \term{D_1 , x:n_1 |- Q:i}
   && \term{D_2 , x:n_2 |- P:a(i)\,!!k => m}
   }
}
\]
\[
\infer[\rr a]{\term{G+D |w [\{N/x\}Q]a.\{N/x\}P: !!k => !!m}}{
   \infer-[\rr s]{\term{G_1+D_1 |- \{N/x\}Q : i}}{
      \term{G_1 |- N:n_1}
   && \term{D_1 , x:n_1 |- Q:i}
   }
&& \infer-[\rr s]{\term{G_2+D_2 |- \{N/x\}P : a(i)\,!!k => !!m}}{
      \term{G_2 |- N:n_2}
   && \term{D_2 , x:n_2 |- P:a(i)\,!!k => !!m}
   }
}
\]

% Unit
\item\case{Unit case:} We have the following two derivations.
\[
\infer-[\rr s]{\term{|- \{N/x\}* : !!i => !!i}}{\infer[\rr c]{\term{|- N:[]}}{} && \infer[\rr *]{\term{|- *: !!i => !!i}}{}}
\qquad\qquad
\infer[\rr *]{\term{|- *: !!i => !!i}}{}
\]

% Sequencing
\item\case{Sequencing case:} The case is similar to the weak application case, with the following two derivations. Given the former derivation, since $\type n=\type{n_1+n_2}$ and $\term{G |- N:n}$ is introduced by an \emph{intersection} rule, we have the two derivations $\term{G_i |- N:n_i}$ for $i=1,2$, and may construct the latter derivation. The reverse is again immediate, noting that $\term G=\term{G_1+G_2}$ and $\term D=\term{D_1+D_2}$.
\[
\infer-[\rr s]{\term{G+D |- \{N/x\}(P;Q) : !!i => !!m}}{
   \term{G |- N:n}
&& \infer[\rr ;]{\term{D , x:n |- P;Q : !!i => !!m}}{
      \term{D_1 , x:n_1 |- P : !!i => !!k}
   && \term{D_2 , x:n_2 |- Q : !!k => !!m}
   }
}  
\]
\[
\infer[\rr ;]{\term{G+D |- \{N/x\}(P;Q) : !!i => !!m}}{
   \infer-[\rr s]{\term{G_1+D_1 |- \{N/x\}P : !!i => !!k}}{
      \term{G_1 |- N:n_1}
   && \term{D_1 , x:n_1 |- P: !!i => !!k}
   }
&& \infer-[\rr s]{\term{G_2+D_2 |- \{N/x\}Q : !!k => !!m}}{
      \term{G_2 |- N:n_2}
   && \term{D_2 , x:n_2 |- Q: !!k => !!m}
   }
}
\]

% Collection
\item\case{Collection case:} The case is similar to the weak application case. The two derivations are as follows, where $\type n=\type{n_1++n_n}$ and $\term D=\term{D_1++D_n}$. Given the former derivation, since $\term{G |- N:n}$ is introduced by an intersection rule, we have the derivations $\term{G_i |- N:n_i}$ for $1\leq i\leq n$ needed to construct the latter derivation, where $\term G=\term{G_1++G_n}$. The reverse direction is immediate.
\[
\infer-[\rr s]{\term{G+D |- \{N/x\}M : [t_1,,t_n]}}{
   \term{G |- N:n}
&& \infer[\rr c]{\term{D , x:n |- M: [t_1,,t_n]}}{
      (\term{D_i , x:n_i |- M: t_i})_{1\leq i\leq n}
   }
}
\]
\[
\infer[\rr c]{\term{G+D |- \{N/x\}M : [t_1,,t_n]}}{
   \hspace{22pt}\left(\vc{
   \infer-[\rr s]{\term{G_i+D_i |- \{N/x\}M : t_i}}{
      \term{G_i |- N:n_i}
   && \term{D_i , x:n_i |- M:t_i}
   }
   }\right)_{1\leq i\leq n}
   \hspace{34pt}
}
\qedhere
\]
\end{itemize}
\end{proof}

%--------------- LEMMA: machine evaluation reduces typing weights
\evaluationReducesWeights*

\begin{proof}
For each of the four machine transitions the two typing derivations will be given. In each case, given one derivation the other may be constructed, while the first has weight exactly one greater than the second, as it has exactly one additional rule instance contributing to the weight of a derivation (an \emph{application}, \emph{abstraction}, \emph{unit}, or \emph{sequencing} rule).

\begin{itemize}
	
% Application	
\item\case{Application case:}
\[
	\step {S_A}{[N]a.M}K {S_A\msep\term N_a}MK
\]
The two typing derivations are as follows.
\[
	\infer[\rr ?]{\term{|- ??{S_A}{[N]a.M}K : \e => !!m }}{
	   \term{|- S_A : !!i}
	&& \infer[\rr a]{\term{|- [N]a.M : !!i => !!k}}{\term{|- N : n} && \term{|- M : a(n)\,!!i => !!k}}
	&& \term{|- K : !!k => !!m}
	}
\]
\[
	\infer[\rr ?]{\term{|- ??{S_A\msep\term N_a}MK : \e => !!m }}{
	   \infer[\rr o]{\term{|- ?{S_A\msep\term N_a} : !!i\,a(n)}}{\term{|- ?{S_A} : !!i} && \term{|- N : n}}
	&& \term{|- M : a(n)\,!!i => !!k}
	&& \term{|- K : !!k => !!m}
	}
\]

% Abstraction
\item\case{Abstraction case:}
\[
	\step{S_A\msep\term N_a}{a<x>.M}K {S_A}{\{N/x\}M}K
\]
The two typing derivations are as follows. In both directions, the use of the \emph{substitution} typing rule is justified by Lemma~\ref{lem:substitutionWeak}.
\[
	\infer[\rr ?]{\term{|- ??{S_A\msep\term N_a}{a<x>.M}K : \e => !!m}}{
	   \infer[\rr o]{\term{|- ?{S_A\msep\term N_a} : !!i\,a(n)}}{\term{|- ?{S_A} : !!i} && \term{|- N : n}}
	&& \infer[\rr l]{\term{|- a<x>.M : a(n)\,!!i => !!k}}{\term{x:n |- M : !!i => !!k}}
	&& \term{|- K : !!k => !!m}
	}
\]
\[
	\infer[\rr ?]{\term{|- ??{S_A}{\{N/x\}M}K : \e => !!m}}{
	   \term{|- ?{S_A} : !!i}
	&& \infer-[\rr s]{\term{|- \{N/x\}M : !!i => !!k}}{\term{|- N : n} && \term{ x:n |- M : !!i => !!k}}
	&& \term{|- K : !!k => !!m}
	}
\]

% Sequencing
\item\case{Sequencing case:}
\[
	\step {S_A}{M;N}K {S_A}M{\term N\,K} 
\]
The two derivations are as follows.
\[
	\infer[\rr ?]{\term{|- ??{S_A}{M;N}K : \e => !!n}}{
	   \term{|- ?{S_A} : !!i}
	&& \infer[\rr ;]{\term{|- M;N : !!i => !!m}}{\term{|- M : !!i => !!k} && \term{|- N : !!k => !!m}}
	&& \term{|- ?K : !!m => !!n}
	}
\]
\[
	\infer[\rr ?]{\term{|- ??{S_A}M{\term N\,K} : \e => !!n}}{
	   \term{|- ?{S_A} : !!i}
	&& \term{|- M : !!i => !!k}
	&& \infer[\rr k]{\term{|- ?{\term N\,K} : !!k => !!n}}{ 
	       \term{|- N : !!k => !!m} 
	   &&  \term{|- ?K : !!m => !!n}
	   }
	}
\]

% Unit
\item\case{Unit case:}
\[
	\step {S_A}*{\term M\,K} {S_A}MK
\]
The two derivations are as follows.
\[
    \infer[\rr ?]{\term{|- ??{S_A}*{\term M\,K} : \e => !!m}}{
    	\term{|- ?{S_A} : !!i}
    &&	\infer[\rr *]{\term{|- * : !!i => !!i}}{}
    &&	\infer[\rr k]{\term{|- ?{\term M\,K} : !!i => !!m}}{ 
			\term{|- M : !!i => !!k} 
		&&	\term{|- ?K : !!k => !!m } 
		}
	}
\]
\[
	\infer[\rr ?]{\term{|- ??{S_A}MK : \e => !!m}}{
		\term{|- ?{S_A} : !!i}
	&&	\term{|- M : !!i => !!k}
	&&	\term{|- ?K : !!k => !!m}
	}
\qedhere
\]
\end{itemize}
\end{proof}

%------------------------------------------------------------ SPINE REDUCTION
\section{Proofs for Section~\ref{sec:spine}: Spine Normalization}

%--------------- PROPOSITION: spine normal forms
\spineNF*

\begin{proof}
By induction on terms, it is shown that $\term{\WW}$ are the spine normal terms, and $\term{\VV}$ those that are not abstractions. A variable $\term x$ and a unit $\term *$ are spine normal, and a composition $\term{M;N}$ is normal if and only if $\term M$ is a variable and $\term N$ is spine normal, since if $\term M$ is a unit, application, abstraction, or sequencing, a reduction step applies. An application $\term{[N]a.M}$ is spine-normal if and only if $\term M$ is spine-normal and not an abstraction, since otherwise a beta-step or passage-step would apply. An abstraction $\term{a<x>.M}$ is spine-normal if and ony if $\term M$ is. 
\end{proof}

%--------------- LEMMA: spine normal forms typeable
\spineNFTyped*

\begin{proof}
By induction on $\term{\WW}$ the following are proved.
\begin{enumerate}
	\item\label{case:N0} For any $\term{\WW}$ there is a type derivation $\term{G |w \WW: !!i => []^f}$.
	\item\label{case:N1} For any $\term{\VV}$ and any $f:A\to\mathbb N$ there is a type derivation $\term{G |w \VV: []^f => []^g}$.
\end{enumerate}
Note that~\ref{case:N1} implies~\ref{case:N0}.

\begin{itemize}

% Unit
\item\case{Unit case:}
For a term $\term *$, we prove case~\ref{case:N1}. For any $f$ we have $\term{|- *:[]^f => []^f}$.

% Variable
\item\case{Variable case:}
For a variable $\term x$, we prove case~\ref{case:N1}. For any $f$ we have $\term{x:[[]^f=>\e] |- x: []^f => \e}$.

% Sequencing
\item\case{Sequencing case:}
For a term $\term{x;\WW}$ we prove case~\ref{case:N1}. By induction, let $\term{G |- \WW: !!i => []^g}$. For any $f$ we get $\term{G + `{\black(}x:[]^f=>!!i`{\black)} |- x;\WW : []^f => []^g}$.

% Application
\item\case{Application case:}
For a term $\term{[M]a.\VV}$, we prove case~\ref{case:N1}. Given $f$, define $f'$ by $f'(a)=f(a)+1$ and $f'(b)=f(b)$ for $a\neq b$. We apply the inductive hypothesis for case~\ref{case:N1} to $\term{\VV}$ and $f'$ to get $\term{G |- \VV: []^{f'} => []^g}$. The following \emph{weak application} rule instance then gives the required type.
\[
\infer[\rr a]{\term{G |w [M]a.\VV : []^f => []^g}}{
   \infer[\rr c]{\term{|- M: []}}{}
&& \term{G |- \VV: []^{f'} => []^g}
}
\]

% Abstraction
\item\case{Abstraction case:}
For a term $\term{a<x>.\WW}$ we prove case~\ref{case:N0}. By induction, there is a judgement $\term{G , x:i |- \WW : !!k => []^f}$. Then by the \emph{abstraction} typing rule we get the required type $\term{G |- a<x>.\WW : a(i)\,!!k => []^f}$.
\qedhere
\end{itemize}
\end{proof}

%--------------- LEMMA: spine reduction
\weightedSpine*

\begin{proof}
By induction on spine contexts. The cases for a top-level reduction step are as follows.

\begin{itemize}

% Beta
\item\case{Beta case:} For a reduction $\term{[N]a.a<x>.M}\rw\term{\{N/x\}M}$, by Lemma~\ref{lem:substitutionWeak} a derivation for $\term{\{N/x\}M}$ is given by a substitution rule, as below right. Then we have the derivation below left if and only if we have that below right, with weight $n+2$ respectively $n$.
\[
\infer[\rr a]{\term{G+D |w [N]a.a<x>.M : !!k => !!m}}{
   \term{G |- N:i}
&& \infer[\rr l]{\term{D |- a<x>.M: a(i)\,!!k => !!m}}{
      \term{D , x:i |- M: !!k => !!m}
   }
}
\qquad
\infer-[\rr s]{\term{G+D |- \{N/x\}M : !!k => !!m}}{
   \term{G |- N:i}
&& \term{D , x:i |- M: !!k => !!m}
}
\]

% Passage
\item\case{Passage case:} For a reduction $\term{[N]b.a<x>.M}\rw\term{a<x>.[N]b.M}$ where $a\neq b$ and $x\notin\fv N$, we have the derivation below left if and only if we have that below right, with equal weights. Note that $\type{a(i)\,b(n)\,!!k}=\type{b(n)\,a(i)\,!!k}$.
\[
\infer[\rr a]{\term{G+D |w [N]b.a<x>.M : a(i)\,!!k => !!m}}{
   \term{G |- N:i}
&& \infer[\rr l]{\term{D |- a<x>.M: a(i)\,b(n)\,!!k => !!m}}{
      \term{D , x:i |- M: b(n)\,!!k => !!m}
   }
}
\qquad
\infer[\rr a]{\term{G+D |w a<x>.[N]b.M : a(i)\,!!k => !!m}}{
   \infer[\rr l]{\term{G+D , x:i |- [N]b.M: !!k => !!m}}{
      \term{G |- N:i}
   && \term{D , x:i |- M: b(n)\,!!k => !!m}
   }
}
\]

% Next
\item\case{Next case:} For a reduction $\term{*;M}\rw\term M$, we have the derivation for $\term{G |_{n+2} *;M : !!i => !!k}$ below if and only if we have a derivation for $\term{G |_n M: !!i => !!k}$.
\[
\infer[\rr ;]{\term{G |- *;M : !!i => !!k}}{
   \infer[\rr *]{\term{|- *: !!i => !!i}}{}
&& \term{G |- M: !!i => !!k}
}
\]

% Prefix (pop)
\item\case{Prefix (pop) case:} For a reduction $\term{a<x>.N;M}\rw\term{a<x>.(N;M)}$ where $x\notin\fv M$, given the derivation below left we may construct that below right, and vice versa.
\[
\infer[\rr ;]{\term{G+D |- a<x>.N;M : a(i)\,!!k => !!n }}{
   \infer[\rr l]{\term{G |- a<x>.N : a(i)\,!!k => !!m}}{\term{G , x:i |- N: !!k => !!m}}
&& \term{D |- M : !!m => !!n}   
}
\quad~
\infer[\rr l]{\term{G+D |- a<x>.(N;M) : a(i)\,!!k => !!n }}{
   \infer[\rr ;]{\term{G+D , x:i |- N;M : !!k => !!n }}{
      \term{G , x:i |- N: !!k => !!m}
   && \term{D |- M : !!m => !!n}   
}}
\]

% Prefix (push)
\item\case{Prefix (push) case:} For a reduction $\term{[P]a.N;M}\rw\term{[P]a.(N;M)}$, given the first derivation below we may construct the second, and vice versa.
\[
\infer[\rr ;]{\term{G+D+E |- [P]a.N;M : !!k => !!n }}{
   \infer[\rr a]{\term{G+D |w [P]a.N : !!k => !!m}}{
      \term{G |- P: i}
   && \term{D |- N: a(i)\,!!k => !!m}
   }
&& \term{E |- M : !!m => !!n}   
}
\]
\[
\infer[\rr a]{\term{G+D+E |w [P]a.(N;M) : !!k => !!n }}{
   \term{G |- P: i}
&& \infer[\rr ;]{\term{D+E |- N;M : a(i)\,!!k => !!n}}{
      \term{D |- N: a(i)\,!!k => !!m}
   && \term{E |- M : !!m => !!n}   
}}
\]

% Associativity
\item\case{Associate case:} For a reduction $\term{(P;N);M}\rw\term{P;(N;M)}$, given the first derivation below we may construct the second, and vice versa.
\[
\infer[\rr ;]{\term{G+D+E |- (P;N);M : !!i => !!n }}{
   \infer[\rr ;]{\term{G+D |- P;N : !!i => !!k}}{
      \term{G |- P: !!i => !!k}
   && \term{D |- N: !!k => !!m}
   }
&& \term{E |- M : !!m => !!n}   
}
\]
\[
\infer[\rr ;]{\term{G+D+E |- P;(N;M) : !!i => !!n }}{
   \term{G |- P: !!i => !!k}
&& \infer[\rr ;]{\term{D+E |- N;M : !!k => !!n}}{
      \term{D |- N: !!k => !!m}
   && \term{E |- M : !!m => !!n}   
}}
\]
\end{itemize}

For reduction in a spine context, $\term{W\{M\}}\rw_\spine\term{W\{N\}}$, it is sufficient to observe from the derivation rules that the  derivation for $\term M$ occurs exactly once in $\term{W\{M\}}$; then the effect of replacing the derivation for $\term{G |w M:t}$ with $\term{G |w N:t}$, or vice versa, reflects directly on the derivations for $\term{W\{M\}}$ and $\term{W\{N\}}$.
%
%The cases for reduction in a spine context are as follows. In each case, let $\term M\rw_\spine\term{M'}$.
%
%\begin{itemize}
%\item\case{Abstraction case:}
%\[
%\infer{\term{G |- <x>.M : i\,!k => !m}}{\term{G , x:i |- M: !k => !m}}
%\qquad
%\infer{\term{G |- <x>.M' : i\,!k => !m}}{\term{G , x:i |- M': !k => !m}}
%\]
%
%\item\case{Weak application case:}
%\[
%\infer{\term{G+D |w [N].M : !k => !m}}{\term{G |- N: i} && \term{D |- M: i\,!k => !m}}
%\qquad
%\infer{\term{G+D |w [N].M' : !k => !m}}{\term{G |- N: i} && \term{D |- M': i\,!k => !m}}
%\]
%
%\item\case{Prefix case:}
%\[
%\infer{\term{G+D |- M;N : !k => !n}}{\term{G |- M: !k => !m} && \term{D |- N: !m => !n}}
%\qquad
%\infer{\term{G+D |- M';N : !k => !n}}{\term{G |- M': !k => !m} && \term{D |- N: !m => !n}}
%\]
%
%\item\case{Postfix case:}
%\[
%\infer{\term{G+D |- N;M : !i => !m}}{\term{G |- N: !i => !k} && \term{D |- M: !k => !m}}
%\qquad
%\infer{\term{G+D |- N;M' : !i => !m}}{\term{G |- N: !i => !k} && \term{D |- M': !k => !m}}
%\]
%
%\end{itemize}
%
\end{proof}

%--------------- LEMMA: spine normal => termination
\spineNFterminate*

\begin{proof}
By induction on $\term\WW$ it will be shown that there is a dimension $d$ such that for any memory $S_A$ and substitution map $\sigma$ of dimension $d$, the following hold.
\begin{enumerate}

\item\label{case:spine eval 1} 
There is a memory $T_A$ such that $\eval{S_A}{\sigma\WW}{T_A}$.

\item\label{case:spine eval 2}
For a term $\term\VV$, there is a memory $T_A$ such that $\eval\e{\sigma\VV}{T_A}$.

\end{enumerate}
Observe that case~\ref{case:spine eval 1} implies case~\ref{case:spine eval 2} by expanding the memory $\e$ to $S_A$.

\begin{itemize}

\item\case{Unit case:}
We prove statement~\ref{case:spine eval 2}. Let $d=0$. We have $\term{\sigma(*)}=\term*$ and $\eval\e*\e$.

\item\case{Variable case:}
We prove statement~\ref{case:spine eval 2}. Let $d=0$. Given $\sigma$, since $\term{\sigma x}$ is a term of dimension $d$, we have $\eval\e{\sigma x}{T_A}$ for some memory $T_A$ of dimension $d$.

\item\case{Sequencing case:}
We prove statement~\ref{case:spine eval 2}. Given $\term{x;\WW}$, by induction we get a dimension $d$ such that statement~\ref{case:spine eval 1} holds for $\term\WW$. Let $\sigma$ be a substitution map of dimension $d$. As in the previous case, we have $\eval\e{\sigma x}{S_A}$ where $S_A$ is a memory of dimension $d$. Then by the inductive hypothesis, we have $\eval {S_A}{\sigma\WW}{T_A}$, and by the evaluation rule for sequencing, $\eval\e{\sigma(x;\WW)}{T_A}$.
\[
\evalRule
    {\eval\e{\sigma x}{S_A} \qquad \eval{S_A}{\sigma\WW}{T_A}}
	{\eval\e{\sigma(x;\WW)}{T_A}} 
\]

\item\case{Application case:}
We prove statement~\ref{case:spine eval 2}. Given $\term{[M]a.\VV}$, by induction we get a dimension $d$ such that $\eval\e{\sigma\VV}{T_A}$ given any $\sigma$ of dimension $d$. Then by expansion and the evaluation rule for application, statement~\ref{case:spine eval 2} follows for $\term{[M]a.\VV}$ with dimension $d$ and return memory $\term{\sigma M}_a\msep T_A$ (abusing notation to prepend $\term{\sigma M}$ to the stack $a$ in $T_A$).
\[
	\evalAlign
	  {\term{\sigma M}_a}       {\sigma\VV}  {\term{\sigma M}_a\msep T_A}
	  {\e}                {\sigma([M]a.\VV)} {\term{\sigma M}_a\msep T_A} 
\]

\item\case{Abstraction case:}
Given $\term{a<x>.\WW}$, by induction there is a dimension $d$ such that statement~\ref{case:spine eval 1} holds for $\term\WW$. We prove statement~\ref{case:spine eval 1} for dimension $d+1$. A memory of dimension $d+1$ is of the form $S_A\msep\term M_a$ where both $\term{M}$ and $S_A$ are of dimension $d$, in the latter case since every stack is of length at least $d$ (and holding terms of dimenstion at least $d$). A substitution map $\sigma$ of dimension $d+1$ is also of dimension $d$. Then so is $\sigma\{\term M/x\}$ (which assigns $\term M$ to $\term x$ and behaves as $\sigma$ everywhere else). The inductive hypothesis then gives $\eval{S_A}{\sigma\{M/x\}\WW}{T_A}$. By the evaluation rule for abstraction, from this we get $\eval{S_A\msep\term M_a}{\sigma(a<x>.\WW)}{T_A}$ for dimension $d+1$.
\[
	\evalAlign
	   {S_A}              {\sigma\{M/x\}\WW} {T_A}
	   {S_A\msep\term M_a} {\sigma(a<x>.\WW)}{T_A}
\qedhere
\]
\end{itemize}
\end{proof}

%--------------- LEMMA: evaluation invariant under spine reduction
\spineTermination*

\begin{proof}
By induction on the spine context for the reduction step. The first six cases cover reduction at top level, and the last four reduction in context.

\begin{itemize}

% Beta
\item\case{Beta case:} Let $\term{[N]a.a<x>.M}\rw\term{\{N/x\}M}$. Noting that $\term{\sigma(\{N/x\}M)}=\term{(\sigma\{\sigma N/x\})M}$, the left hand term evaluates as below if and only if the right hand term evaluates as $\eval{S_A}{\sigma(\{N/x\}M)}{T_A}$.
\[
\begin{array}{@{}r@{}l@{}}
	      \strut S_A                         ,\,    \term{\sigma(\{N/x\}M)} & {}\evalarrow T_A
\\ \hline \strut S_A \msep \term{\sigma N}_a ,\,      \term{\sigma(a<x>.M)} & {}\evalarrow T_A
\\ \hline \strut S_A                         ,\, \term{\sigma([N]a.a<x>.M)} & {}\evalarrow T_A
\end{array}
\]

% Passage
\item\case{Passage case:} Let $\term{[N]b.a<x>.M}\rw\term{a<x>.[N]b.M}$ where $x$ is not free in $\term N$. Assuming by alpha-conversion that $\sigma$ does not bind in $\term P$, we have the first derivation if and only if we have the second.
\[
\begin{array}{@{}r@{}l@{}}
	      \strut S_A \msep \term{\sigma N}_b                  ,\,    \term{\sigma(\{P/x\}M)} & {}\evalarrow T_A
\\ \hline \strut S_A \msep \term{P}_a \msep \term{\sigma N}_b ,\,      \term{\sigma(a<x>.M)} & {}\evalarrow T_A
\\ \hline \strut S_A \msep \term{P}_a                         ,\, \term{\sigma([N]b.a<x>.M)} & {}\evalarrow T_A
\end{array}
\qquad
\begin{array}{@{}r@{}l@{}}
	      \strut S_A \msep \term{\sigma N}_b  ,\,      \term{\sigma(\{P/x\}M)} & {}\evalarrow T_A
\\ \hline \strut S_A                          ,\, \term{\sigma([N]b.\{P/x\}M)} & {}\evalarrow T_A
\\ \hline \strut S_A \msep \term{P}_a         ,\,   \term{\sigma(a<x>.[N]b.M)} & {}\evalarrow T_A
\end{array}
\]

% Next
\item\case{Next case:} Let $\term{*;M}\rw\term M$. Since $\term{\sigma(*)}=\term*$, we have $\eval{S_A}{\sigma M}{T_A}$ if and only if we have the folowing.
\[
\infer{ \eval{S_A}{\sigma(*;M)}{T_A} }{ \eval{S_A}*{S_A} && \eval{S_A}{\sigma M}{T_A} }
\]

% Prefix (pop)
\item\case{Prefix (pop) case:} Let $\term{a<x>.N;M}\rw\term{a<x>.(N;M)}$ where $\term x$ is not free in $\term M$. Assuming by alpha-conversion that $\sigma$ does not bind in $\term P$, we have the first derivation if and only if we have the second.
\[
\infer{ \eval{R_A\msep\term{P}_a}{\sigma(a<x>.N;M)}{T_A} }{
 \infer{ \eval{R_A\msep\term{P}_a}{\sigma(a<x>.N)}{S_A} }{
  \eval{R_A}{\sigma(\{P/x\}N)}{S_A}
 } &&
 \eval{S_A}{\sigma M}{T_A}
}
\qquad
\infer{ \eval{R_A\msep\term{P}_a}{\sigma(a<x>.(N;M))}{T_A} }{
 \infer{ \eval{R_A}{\sigma(\{P/x\}N;M)}{T_A} }{
  \eval{R_A}{\sigma(\{P/x\}N)}{S_A}
  &&
  \eval{S_A}{\sigma M}{T_A}
}}
\]

% Prefix (push)
\item\case{Prefix (push) case:} For a reduction $\term{[P].N;M}\rw\term{[P].(N;M)}$, we have the first derivation if and only if we have the second.
\[
\infer{ \eval{R_A}{\sigma([P].N;M)}{T_A} }{
 \infer{ \eval{R_A}{\sigma([P].N)}{S_A} }{
  \eval{R_A\msep\term{\sigma P}_a}{\sigma N}{S_A}
 } &&
 \eval {S_A}{\sigma M}{T_A}
}
\qquad
\infer{ \eval{R_A}{\sigma([P].(N;M))}{T_A} }{
 \infer{ \eval{R_A\msep\term{\sigma P}_a}{\sigma(N;M)}{T_A} }{
  \eval{R_A\msep\term{\sigma P}_a}{\sigma N}{S_A}
  &&
  \eval {S_A}{\sigma M}{T_A}
}}
\]

% Associativity
\item\case{Associativity case:} For a reduction $\term{(P;N);M}\rw\term{P;(N;M)}$, we have the first derivation if and only if we have the second.
\[
\infer{ \eval{R_A}{\sigma((P;N);M)}{U_A} }{
 \infer{ \eval{R_A}{\sigma(P;N)}{T_A} }{
  \eval {R_A}{\sigma P}{S_A}
  &&
  \eval {S_A}{\sigma N}{T_A}
 } &&
 \eval {T_A}{\sigma M}{U_A}
}
\]
\[
\infer{ \eval{R_A}{\sigma(P;(N;M))}{U_A} }{
 \eval {R_A}{\sigma P}{S_A}
 &&
 \infer{ \eval{S_A}{\sigma(N;M)}{U_A} }{
  \eval {S_A}{\sigma N}{T_A}
  &&
  \eval {T_A}{\sigma M}{U_A}
}}
\]

\item\case{Application case:} Let $\term{[P]a.M}\rw\term{[P]a.N}$ where $\term M\rw\term N$. Given $S_A$ and $\sigma$, by induction we have $\eval{S_A\msep\term{\sigma P}_a}{\sigma M}{T_A}$ if and only if $\eval{S_A\msep\term{\sigma P}_a}{\sigma N}{T_A}$, and hence $\eval{S_A}{\sigma([P]a.M)}{T_A}$ if and only if $\eval{S_A}{\sigma([P]a.N)}{T_A}$.

\item\case{Abstraction case:} Let $\term{a<x>.M}\rw\term{a<x>.N}$ where $\term M\rw\term N$. Given $S_A\msep\term{\sigma P}_a$ and $\sigma$, assuming $\sigma$ does not bind in $\term P$, by induction we have $\eval{S_A}{\sigma(\{P/x\}M)}{T_A}$ if and only if $\eval{S_A}{\sigma(\{P/x\}N)}{T_A}$, and hence $\eval{S_A\msep\term{\sigma P}_a}{\sigma(a<x>.M)}{T_A}$ if and only if $\eval{S_A\msep\term{\sigma P}_a}{\sigma(a<x>.N)}{T_A}$.

\item\case{Prefix case:} Let $\term{M;P}\rw\term{N;P}$ where $\term M\rw\term N$. Given $S$, by induction we have $\eval{R_A}{\sigma M}{S_A}$ if and only if $\eval{R_A}{\sigma N}{S_A}$, so given $\eval{S_A}{\sigma P}{T_A}$ we have $\eval{R_A}{\sigma(M;P)}{T_A}$ if and only if $\eval{S_A}{\sigma(N;P)}{T_A}$.

\item\case{Postfix case:} Let $\term{P;M}\rw\term{P;N}$ where $\term M\rw\term N$. By induction we have $\eval{S_A}{\sigma M}{T_A}$ if and only if $\eval{S_A}{\sigma N}{T_A}$, so given $\eval{R_A}{\sigma P}{S_A}$ we have $\eval{R_A}{\sigma(P;M)}{T_A}$ if and only if $\eval{R_A}{\sigma(P;N)}{T_A}$. 
\qedhere
\end{itemize}
\end{proof}

%------------------------------------------------------------ STRONG NORMALIZATION
\section{Proofs for section~\ref{sec:SN}: Strong Normalization}

%--------------- LEMMA: strong substitution
\substitutionStrong*

\begin{proof}
The proof is similar to the substitution lemma for the weak type system, Lemma~\ref{lem:substitutionWeak}. We cover the new cases: \emph{strong application} and \emph{weakening}.

\begin{itemize}

	% Strong application
	\item\case{Strong application case:} 
For a term $\term M=\term{[Q]a.P}$ we have the following derivations, both of the same weight, where $\type{n = n_1+n_2+n_3}$,  $\term{G = G_1+G_2+G_3}$ and $\term{D = D_1+D_2+D_3}$. In the second derivation, the premises of the top rules are stacked for space and readability. The argument is as follows. Given the first derivation, where $\term{G |- N:n}$ is introduced by a collection typing rule $\RR c$, we may partition its premisses along the decomposition $\type{n = n_1+n_2+n_3}$ to get $\term{G_i |- N:n_i}$ for $i=1,2,3$. Then we may construct the second derivation. In the reverse direction, given the second derivation, we may construct the first.
\[
\infer-[\rr s]{\term{G+D |- \{N/x\}[Q]a.P: !!k => !!m}}{
   \term{G |- N:n}
&& \infer[\rr A]{\term{x:n , D |s [Q]a.P : !!k => !!m}}{
     \term{x:n_1, D_1  |- Q : i}
   & \term{x:n_2, D_2  |- P : a(i)\,!!k => !!m}
   & \term{x:n_3, D_3  |- Q : t}
%   \begin{array}{@{}l@{}}
%      \term{x:n_1, D_1  |- Q : i}
%   \\ \term{x:n_2, D_2  |- P : a(i)\,!!k => !!m}
%   \\ \term{x:n_3, D_3  |- Q : t}
%   \end{array}
   }
}
\]
\[
\infer[\rr A]{\term{G+D |s [\{N/x\}Q].\{N/x\}P: !k => !m}}{
   \infer-[\rr s]{\term{G_1+D_1 |- \{N/x\}Q : i}}{
   \begin{array}{@{}r@{}l@{}}  
   	          \term{G_1} & \term{|- N:n_1}
   \\ \term{x:n_1 , D_1} & \term{|- Q:i}
   \end{array}
   }
&  \infer-[\rr s]{\term{G_2+D_2 |- \{N/x\}P : a(i)\,!!k => !!m}}{
   \begin{array}{@{}r@{}l@{}}  
   	          \term{G_2} & \term{|- N:n_2}
   \\ \term{x:n_2 , D_2} & \term{|- P : a(i)\,!!k => !!m}
   \end{array}
   }
&  \infer-[\rr s]{\term{G_3+D_3 |- \{N/x\}Q : t}}{
   \begin{array}{@{}r@{}l@{}}  
   	          \term{G_3} & \term{|- N:n_3}
   \\ \term{x:n_3 , D_3} & \term{|- Q:t}
   \end{array}
   }
}
\]

	% Weakening
	\item\case{Weakening case:} 
From left to right, assume the derivation below left with sub-derivations of weights $i$ and $k$. The derivation for $\term N$, ending in a collection typing rule, may be partitioned along $\type{n+m}$ to give $\term{G_1 |_n N:n}$ and $\term{G_2 |_m N:m}$ where $n+m=i$. Then we may construct the derivation below right, which by induction has a weight $j\leq n+k$.
\[
\infer-[\rr s]{\term{G+D+E |- \{N/x\}M : t}}{
   \term{G |_i N:n+m}
&& \infer[\rr W]{\term{x:n+m , D + E |s M : t}}{
     \term{x:n , D |_k M : t}
}}
\qquad
\infer[\rr W]{\term{G_1+G_2+D+E |- \{N/x\}M : t}}{
  \infer-[\rr s]{\term{G_1+D+E |- \{N/x\}M : t}}{
     \term{G_1 |_n N:n}
  && \term{x:n , D |_k M : t}    
}}
\]
From right to left, given the derivation below left we may construct that below right of the same weight $n+m$, where by alpha-conversion we may assume $x$ does not occur in the context $\term{E}$.
\[
\infer[\rr W]{\term{G+D+E |- \{N/x\}M : t}}{
  \infer-[\rr s]{\term{G+D |- \{N/x\}M : t}}{
     \term{G |_n N:n}
  && \term{x:n , D |_m M : t}
}}
\qquad
\infer-[\rr s]{\term{G+D+E |- \{N/x\}M : t}}{
   \term{G |_n N:n}
&& \infer[\rr W]{\term{x:n , D+E |- M : t}}{
     \term{x:n , D |_m M : t}
}}
\qedhere
\]
\end{itemize}
\end{proof}

%--------------- LEMMA: subject reduction, beta
\subjectReductionBeta*

\begin{proof}
By induction on the reduction context. We give the case for a beta-redex and that for the \emph{argument} context $\term{[C]a.M}$ of the strong application rule. For the other cases, the derivation for $\term{C\{M\}}$ contains exactly one subderivation for $\term{M}$, so that a change in weight is reflected directly.

\begin{itemize}

	% Beta
	\item\case{Beta case:} 
For a reduction $\term{[N]a.a<x>.M}\rw\term{\{N/x\}M}$, given the first derivation below we may construct the second. Where the first derivation has weight $n+m+k+2$, by Lemma~\ref{lem:strong substitution} the second has weight $i\leq n+m$.
\[
\infer[\rr A]{\term{G+D+E |s [N]a.a<x>.M : !!k => !!m}}{
   \term{G |_n N:i}
&& \infer[\rr l]{\term{D |- a<x>.M: a(i)\,!!k => !!m}}{
      \term{D , x:i |_m M: !!k => !!m}
   }
&& \term{E |_k N:t}
}
\]
\[
\infer[\rr W]{\term{G+D+E |s \{N/x\}M : !!k => !!m}}{
   \infer-[\rr s]{\term{G+D |- \{N/x\}M : !!k => !!m}}{
      \term{G |_n N:i}
   && \term{D , x:i |_m M: !!k => !!m}
   }
}
\]

	% Argument
	\item\case{Argument case:} 
For a reduction $\term{[M]a.P}\rw\term{[N]a.P}$ given $\term M\rw\term N$, consider the strong application inference below, where $\term{G=G_1++G_n}$ and $\type{i=[t_1..t_n]}$ Observe that there is at least one sub-derivation for $\term M$, given by $\term{E |- M:t}$, in the derivation for $\term{[M]a.P}$. Replacing every sub-derivation for $\term M$ with one for $\term N$ of strictly smaller weight then gives a strictly smaller weight for the derivation for $\term{[N]a.P}$.
\[
\infer[\rr A]{\term{G+D |s [M]a.P : !!k => !!m}}{
   \infer[\rr c]{\term{G |- M: i}}{(\term{G_i|- M:t_i})_{1\leq i\leq n}}
&& \term{D |- P: a(i)\,!!k => !!m}
&& \term{E |- M: t}
}
\qedhere
\]
\end{itemize}
\end{proof}

%--------------- LEMMA: non-beta reduction
\strongNonBeta*

\begin{proof}
The proof is similar to Lemma~\ref{lem:spine reduction}. The one new case for top-level reduction is the following.
\begin{itemize}

	% Strong application/sequence
	\item\case{Strong application/sequence case:} 
For a reduction $\term{[P]a.N;M}\rw\term{[P]a.(N;M)}$, given the first derivation below we may construct the second, of equal weight.
\[
\infer[\rr ;]{\term{G+D+E+F |- [P]a.N;M : !!k => !!n }}{
   \infer[\rr A]{\term{G+D+E |s [P]a.N : !!k => !!m}}{
      \term{G |- P: i}
   && \term{D |- N: a(i)\,!!k => !!m}
   && \term{E |- P: t}
   }
&& \term{F |- M : !!m => !!n}   
}
\]
\[
\infer[\rr A]{\term{G+D+E+F |s [P]a.(N;M) : !!k => !!n }}{
   \term{G |- P: i}
&& \infer[\rr ;]{\term{D+F |- N;M : a(i)\,!!k => !!n}}{
      \term{D |- N: a(i)\,!!k => !!m}
   && \term{F |- M : !!m => !!n}  
   } 
&& \term{E |- P: t}
}
\]
\end{itemize}
For reduction in context, $\term{C\{M\}}\rw\term{C\{N\}}$, the new case is the argument position, where there may be any (non-zero) number of sub-derivations for $\term M$ in $\term{C\{M\}}$. Replacing a sub-derivation with one of equal or lower weight then results in an overall weight that is equal or lower.
\end{proof}

%------------------------------------------------------------ PERPETUAL EVALUATION
\section{Proofs for Section~\ref{sec:perp}: Perpetual Evaluation}

%--------------- PROPOSITION: normal forms
\normalforms*

\begin{proof}
The proof is analogous to that of Proposition~\ref{prop:spine NF}. Terms $\term{\MM}$ are the normal forms, and $\term{\NN}$ those that are not abstractions. For the latter, variables $\term x$ and the unit $\term *$ are normal, and a composition $\term{M;N}$ is normal if $\term M$ is a variable and $\term N$ is normal, and thus of the form $\term{x;\MM}$. An application $\term{[N]a.M}$ is normal if $\term N$ and $\term M$ are and $\term M$ is not an abstraction, thus of the form $\term{[\MM].\NN}$. Finally, an abstraction $\term{a<x>.M}$ is normal if $\term M$ is, thus of the form $\term{a<x>.\MM}$.
\end{proof}

%
%%--------------- LEMMA: normal forms => types
%\NFtyped*
%
%\begin{proof}
%The proof is analogous to that for spine-normal forms, Lemma~\ref{lem:spine NF => typed}. By induction on $\term{\MM}$ the following are proved.
%\begin{enumerate}
%	\item\label{case:N0} For any $\term{\MM}$ there is a type derivation $\term{G |s \MM: !i => []^f}$.
%	\item\label{case:N1} For any $\term{\NN}$ and any $f:A\to\mathbb N$ there is a type derivation $\term{G |s \NN: []^f => []^g}$.
%\end{enumerate}
%Again~\ref{case:N1} implies~\ref{case:N0}. The cases for the unit, variable, abstraction, and sequencing are as for Lemma~\ref{lem:spine NF => typed}; the new case, for the strong application typing rule, is the following.
%
%\begin{itemize}
%
%% Strong application
%\item\case{Strong application case:}
%For a term $\term{[\MM]a.\NN}$, we prove case~\ref{case:N1}. By induction let $\term{G |s \MM : t}$. Given $f$, we apply the inductive hypothesis for case~\ref{case:N1} to $\term{\NN}$ for $f'$ defined by $f'(a)=f(a)+1$ and $f'(b)=f(b)$ where $a\neq b$, to get $\term{D |- \NN: []^{f'} => []^g}$. We use the \emph{strong application} rule to derive $\term{G+D |s [\MM]a.\NN : []^f => []^g}$, as follows.
%\[
%\infer[\rr A]{\term{G+D |s [\MM]a.\NN : []^f => []^g}}{
%   \infer[\rr c]{\term{|- \MM: []}}{}
%&& \term{D |- \NN: []^{f'} => []^g}
%&& \term{G |- \MM : t}
%}
%\]
%\end{itemize}
%\end{proof}

%--------------- LEMMA: SN => perpetual
\SNperpetual*

\begin{proof}
The proof is by induction on the following pair of measures: first, the sum length of reduction paths from $\term M$, and second, the size of $\term M$. Let $\term M$ be of the form $\term{[S_A].M'}$, with $\term{M'}$ not an application and proceed by case analysis on the structure of $\term{M'}$.
\begin{itemize}
	
	% Unit
	\item\case{Unit case:} if $\term M$ is of the form $\term{[S_A].*}$ then we have the case \emph{Normal (unit)} of the definition of $\rp$ as below. The premisses are given by the inductive hypothesis: each term $\term{N_i}$ has equal or fewer reductions than $\term M$, and is smaller. The reduction $\term M\rwn\term\MM$ follows from the reductions $\term{N_i}\rwn\term{\MM_i}$.
\[
\infer{\term{[N_1]a_1..[N_n]a_n.*} \rp \term{[\MM_1]a_1..[\MM_n]a_n.*}}{(\term{N_i}\rp\term{\MM_i})_{i\leq n}}
\]

	% Variable
	\item\case{Variable case:} if $\term M$ is of the form $\term{[S_A].x}$ then we have the case \emph{Normal (variable)} as below. It resolves as the previous case.
\[
\infer{\term{[N_1]a_1..[N_n]a_n.x} \rp \term{[\MM_1]a_1..[\MM_n]a_n.x}}{(\term{N_i}\rp\term{\MM_i})_{i\leq n}}
\]

	% Abstraction
	\item\case{Abstraction case:} if $\term{M'}$ an abstraction $\term{a<x>.N}$ there are three sub-cases. If $\term{[S_A]}$ is empty, we have the case \emph{Normal (abstraction)} as below.
\[
\infer{\term{a<x>.N}\rp\term{a<x>.\MM}}{\term{N}\rp\term \MM}
\]
If $\term{[S_A]}$ is of the form $\term{[T_A].[P]a}$, ending in a push on location $a$, we have the case \emph{Beta} as below, with the reduction $\term{[T_A].[P]a.a<x>.N}\rw\term{[T_A].\{P/x\}N}\rw\term\MM$.
\[
\infer{\term{[T_A].[P]a.a<x>.N}\rp\term\MM}{\term{[T_A].\{P/x\}N}\rp\term\MM}
\]
If $\term{[S_A]}$ is of the form $\term{[T_A].[P]b}$, ending in a push on location $b\neq a$, we have the case \emph{Passage} as below, with the reduction $\term{[T_A].[P]b.a<x>.N}\rw\term{[T_A].a<x>.[P]b.N}\rw\term\MM$.
\[
\infer{\term{[T_A].[P]b.a<x>.N}\rp\term\MM}{\term{[T_A].a<x>.[P]b.N}\rp\term\MM}
\]
	% Sequence (unit)	
	\item\case{Unit/sequence case:}
If $\term M$ is of the form $\term{[S_A].(*;P)}$ we have the case \emph{Next} below, with the reduction $\term{[S_A].(*;P)}\rw\term{[S_A].P}\rwn\term\MM$. 
\[
\infer{\term{[S_A].(*;P)} \rp \term\MM}{\term{[S_A].P}\rp \term\MM}
\]

	% Sequence (variable)
	\item\case{Variable/sequence case:}
If $\term M$ is of the form $\term{[S_A].(x;P)}$ we have the case \emph{Normal (sequence)} as below. The inductive hypothesis gives the premisses, which each have a smaller measure. The reduction from $\term M$ follows by that for each $\term{N_i}$ and $\term P$.
\[
\infer{\term{[N_1]a_1..[N_n]a_n.(x;P)} \rp \term{[\MM_1]a_1..[\MM_n]a_n.(x;\MM)}}{(\term{N_i}\rp\term{\MM_i})_{i\leq n} && \term P\rp\term\MM}
\]

	% Sequence (abstraction)
	\item\case{Abstraction/sequence case:}
If $\term M$ is of the form $\term{[S_A].(a<x>.N;P)}$ we have the case \emph{Prefix (pop)} as below, with the reduction $\term{[S_A].(a<x>.N ; P)} \rw \term{[S_A].a<x>.(N ; P)}\rwn\term\MM$.
\[
\infer{\term{[S_A].(a<x>.N ; P)} \rp \term\MM}{\term{[S_A].a<x>.(N ; P)}\rp \term\MM}
\]

	% Sequence (application)
	\item\case{Application/sequence case:}
If $\term M$ is of the form $\term{[S_A].([P]a.N;Q)}$ we have the case \emph{Prefix (push)} as below, with the reduction $\term{[S_A].([P]a.N ; Q)} \rw \term{[S_A].[P]a.(N ; Q)}\rwn \term\MM$.
\[
\infer{\term{[S_A].([P]a.N ; Q)} \rp \term\MM}{\term{[S_A].[P]a.(N ; Q)}\rp \term\MM}
\]

	% Sequence (sequence)
	\item\case{Sequence/sequence case:}
If $\term M$ is of the form $\term{[S_A].((Q;P);N)}$ we have the case \emph{Associate} as below, with the reduction $\term{[S_A].((Q;P);N)}\rw\term{[S_A].(Q;(P;N))}\rwn\term\MM$.
\[
\infer{\term{[S_A].((Q;P);N)} \rp \term\MM}{\term{[S_A].(Q;(P;N))}\rp \term\MM}  \qedhere
\]
\end{itemize}
\end{proof}

%--------------- LEMMA: perpetual => typeable
\perpetualTyped*

\begin{proof}
For the notation $\term{[S_A].M}$ we introduce an associated typing rule~$\RR S$ (\emph{applications}), and a strong typing rule for memories $\RR O$.
\[
	\infer-[\rr S]{\term{G+D |s [S_A].M : !!k => !!m}}{\term{G |s ?{S_A} : !!i} && \term{D |s M : !!i\,!!k => !!m}}
\qquad
	\infer-[\rr O]{\term{G+D+E |s ?{S_A\msep\term N}_a : !!i\,a(k)}}{\term{G |s ?{S_A}:!!i} && \term{D |s N: k} && \term{E |s N:t}}
\]
By induction, these rule are admissible: we have the first derivation below if and only if we have the second.
\[
\infer-[\rr S]{\term{G+D+E+F |s [S_A\msep N_a].M}}{
  \infer-[\rr O]{\term{G+D+E |s ?{S_A\msep\term N_a} : !!i\,a(k)}}{
	\term{G |s S_A : !!i}
  & \term{D |s N : k}
  & \term{E |s N : t}
  }
& \term{F |s M : a(k)\,!!i\,!!m => !!n}
}
\]
\[
\infer-[\rr S]{\term{G+D+E+F |s [S_A].[N]a.M : !!m => !!n}}{ 
  \term{G |s S_A : !!i}
& \infer[\rr A]{\term{D+E+F |s [N]a.M : !!i\,!!m => !!n}}{
    \term{D |s N : k}
  & \term{E |s N : t}
  & \term{F |s M : a(k)\,!!i\,!!m => !!n}
} }
\]
Again by induction, %for a push sequence $\term{[S_A]} = \term{[N_1]a_1\dots[N_n]a_n}$ we have the following derivation
a memory $S_A=\e\msep(\term{N_1})_{a_1}\cdots(\term{N_n})_{a_n}$ may be typed $\type{[]^f}$ where $f=|S_A|$, if there is a type for each $\term{N_i}$. This gives the following admissible rules.
\[
\infer-{\term{G_1++G_n |s ?{S_A} : []^f }}{(\term{G_i |- N_i : t_i})_{i\leq n}}
\qquad
\infer-{\term{G_1++G_n + D |s [S_A].M : !!i => !!k }}{(\term{G_i |- N_i : t_i})_{i\leq n} && \term{D |- M : []^f\,!!i => !!k}}
\]
The proof is then by induction on the relation $\rp$.
\begin{itemize}

	% Beta
	\item\case{Beta case:}
\[
\infer
    {\term{[S_A].[N]a.a<x>.P} \rp \term\MM}
    {\term{[S_A].\{N/x\}P} \rp \term\MM && \term N\rp\term{\MM'}}
\]
The inductive hypothesis gives the following three derivations: that below (using the strong substitution lemma, Lemma~\ref{lem:strong substitution}), $\term{F |- N : t}$, and $\term{G+D+E |- \MM : !!m => !!n}$.
\[
\infer-[\rr S]{\term{G+D+E |- [S_A].\{N/x\}P : !!m => !!n}}{
  \term{G |- ?{S_A} : !!i}
& \infer-[\rr s]{\term{D+E |- \{N/x\}P : !!i\,!!m => !!n}}{
    \term{D |- N : k}
  & \term{E , x: k |- M : !!i\,!!m => !!n}
} }
\]
From these we may construct the following derivations, the first for $\term M$, the second for $\term\MM$.
\[
\infer-[\rr S]{\term{G+D+E+F |- [S_A].[N]a.a<x>.P}}{
  \term{G |- ?{S_A} : !!i}
& \infer[\rr A]{\term{D+E+F |- [N]a.a<x>.P : !!i\,!!m => !!n}}{
    \term{D |- N : k}
  & \infer[\rr l]{\term{E |- a<x>.P : a(k)\,!!i\,!!m => !!n}}{
      \term{E , x: k |- M : !!i\,!!m => !!n}
    }
  & \term{F |- N : t}
} }
\]
\[
\infer[\rr W]{\term{G+D+E+F |- \MM : !!m => !!n}}{\term{G+D+E |- \MM : !!m => !!n}}
\]

	% Passage, next, prefix, associate
	\item\case{Passage, Next, Prefix (pop), Prefix (push), and Associate cases:}
All five cases are of the following form, where $\term M\rw\term N$ by a non-beta reduction.
\[
\infer{\term M\rp\term \MM}{\term N\rp\term\MM}
\]
The inductive hypothesis gives derivations $\term{G |s N : t}$ and $\term{G |s \MM : t}$. The required derivation $\term{G |s M : t}$ follows by Lemma~\ref{lem:non-beta subject reduction} (non-beta reduction preserves and reflects typing).

	% Abstraction
	\item\case{Normal (abstraction) case:}
\[
\infer{\term{a<x>.M}\rp\term{a<x>.\MM}}{\term M\rp\term \MM}
\]
The inductive hypothesis gives $\term{G , x:i |- M : !!k => !!m}$ and $\term{G , x:i |- \MM : !!k => !!m}$ (where $\type i=\type{[]}$ if $x$ does not occur). The abstraction typing rule $\RR l$ gives $\term{G |- a<x>.M : a(i)\,!!k => !!m}$ and $\term{G |- a<x>.\MM : a(i)\,!!k => !!m}$.

	% Unit
	\item\case{Normal (unit) case:}
\[
\infer{\term{[N_1]a_1..[N_n]a_n.*} \rp \term{[\MM_1]a_1..[\MM_n]a_n.*}}{(\term{N_i}\rp\term{\MM_i})_{i\leq n}}
\]
The inductive hypothesis gives $\term{G_i |- N_i : t_i}$ and $\term{G_i |- \MM_i : t_i}$ for each $i\leq n$. By the admissible rules introduced above, we have the following derivation for $\term M$, and similarly for $\term\MM$, where $f$ is the dimension of the memory $\e\msep(\term{N_1})_{a_1}\cdots\msep(\term{N_n})_{a_n}$.
\[
\infer-{\term{G_1++G_n |- [N_1]a_1..[N_n]a_n.* : \e => []^f}}{(\term{G_i |- N_i : t_i})_{i\leq n} && \term{|- * : []^f => []^f}}
\]

	% Variable
	\item\case{Normal (variable) case:}
\[
\infer{\term{[N_1]a_1..[N_n]a_n.x} \rp \term{[\MM_1]a_1..[\MM_n]a_n.x}}{(\term{N_i}\rp\term{\MM_i})_{i\leq n}}
\]
The case is as the previous one, with the following derivation for $\term M$.
\[
\infer-{\term{G_1++G_n + x:[\,[]^f => \e\,] |- [N_1]a_1..[N_n]a_n.x : \e => \e}}{(\term{G_i |- N_i : t_i})_{i\leq n} && \term{x:[\,[]^f => \e\,] |- x : []^f => \e}}
\]

	% Sequence
	\item\case{Normal (sequence) case:}
\[
\infer
    {\term{[N_1]a_1..[N_n]a_n.(x;M)} \rp \term{[\MM_1]a_1..[\MM_n]a_n.(x;\MM)}}
    {(\term{N_i}\rp\term{\MM_i})_{i\leq n} && \term M\rp\term\MM}
\]
The inductive hypothesis gives $\term{D |- M: !!k => !!m}$ and $\term{D |- \MM: !!k => !!m}$. The case is then as the previous two, with the following derivation.
\[
\infer-{\term{G_1++G_n + x:[\,[]^f => !!k\,] + D |- [N_1]a_1..[N_n]a_n.(x;M) : \e => !!m }}{
  (\term{G_i |- N_i : t_i})_{i\leq n}
& \infer[\rr ;]{\term{x:[\,[]^f => !!k\,] + D |- x;M : []^f => !!m }}{
    \term{x:[\,[]^f => !!k\,] |- x: []^f => !!k} 
  & \term{D |- M: !!k => !!m}
} }
\qedhere
\]
\end{itemize}
\end{proof}

%============================================================ 

\end{document}